\documentclass[twocolumn]{svjour3}          
%
\hyphenation{op-tical net-works semi-conduc-tor}
\usepackage{amssymb}
\usepackage{amsmath}
\usepackage[latin1]{inputenc}
\usepackage{float}
\usepackage{cite}
\usepackage[ruled,vlined,algo2e]{algorithm2e}
\usepackage{subfigure}

\usepackage{eepic}
\usepackage{pictexwd} 

\usepackage{picins}

\smartqed  

\usepackage[dvips]{graphicx}       
\DeclareGraphicsExtensions{.eps,.EPS,.ps,.PS}   




\usepackage{verbatim}


\renewcommand{\Bar}[1]{\overline{#1}}

\newcommand{\ELIMINE}[1]{}

\newcommand{\ELIMINEPR}[1]{}

\newcommand{\Fset}[1]{\mathcal{F}}

\newcommand{\st}{\; | \;}

\newcommand{\Minima}[1]{{\cal M}({#1})}
\newcommand{\Card}[1]{|{#1}|}

\newcommand{\Vfunc}[1]{F^\ominus}


\SetKwInput{KwLocal}{Local}

\newtheorem{proper}[theorem]{Property}
\newtheorem{defn}[theorem]{Definition}

\journalname{JMIV}

\begin{document}
\title{On the equivalence between hierarchical segmentations and ultrametric watersheds}
\author{Laurent Najman}%
\institute{Universit\'e Paris-Est, Laboratoire d'Informatique Gaspard-Monge, Equipe A3SI, ESIEE Paris, France\\
\email{l.najman@esiee.fr}}

\markboth{Technical Report IGM 2009-10}{
\MakeLowercase{Ultrametric watersheds}}

\date{Received: date / Accepted: date}




\maketitle
\begin{abstract}
  We study hierarchical segmentation in the framework of edge-weighted
  graphs. We define ultrametric watersheds as topological watersheds
  null on the minima. We prove that there exists a bijection between
  the set of ultrametric watersheds and the set of hierarchical
  segmentations. We end this paper by showing how to use the proposed
  framework in practice on the example of constrained connectivity; in
  particular it allows to compute such a hierarchy following a
  classical watershed-based morphological scheme, which provides an
  efficient algorithm to compute the whole hierarchy.

\end{abstract}
%

\section*{Introduction}
This paper\footnote{This work was partially supported by ANR grant
SURF-NT05-2\_45825} is a contribution to a theory of hierarchical
(image) segmentation in the framework of edge-weighted graphs. Image
segmentation is a process of decomposing an image into regions which
are homogeneous according to some criteria. Intuitively, a
hierarchical segmentation represents an image at different resolution
levels.

In this paper, we introduce a subclass of edge-weigh\-ted graphs that
we call ultrametric watersheds. Theorem~\ref{th:onetoone} states that
there exists a one-to-one correspondence, also called a bijection,
between the set of indexed hierarchical segmentations and the set of
ultrametric watersheds. In other words, to any hierarchical
segmentation (whatever the way the hierarchy is built), it is possible
to associate a representation of that hierarchy by an ultrametric
watershed. Conversely, from any ultrametric watershed, one can infer a
indexed hierarchical segmentation.

\begin{figure*}[htbp]
    \begin{center}
      \begin{tabular}{ccc}
	\subfigure[Original image]{
	  \includegraphics[width=.4\textwidth]{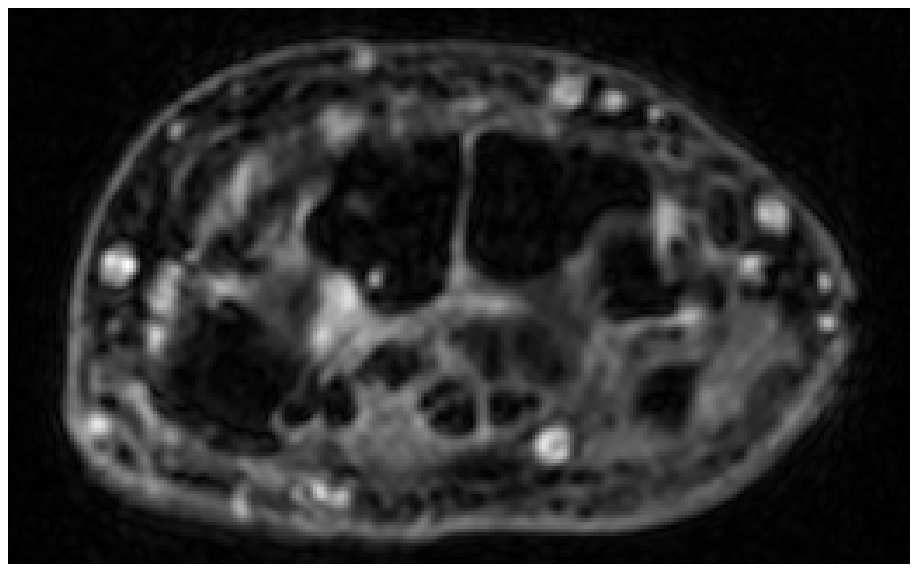}
	}
	&
	~~~~~~~
	&
	\subfigure[Dendrogram of the hierarchical segmentation]{
	  \includegraphics[width=.4\textwidth]{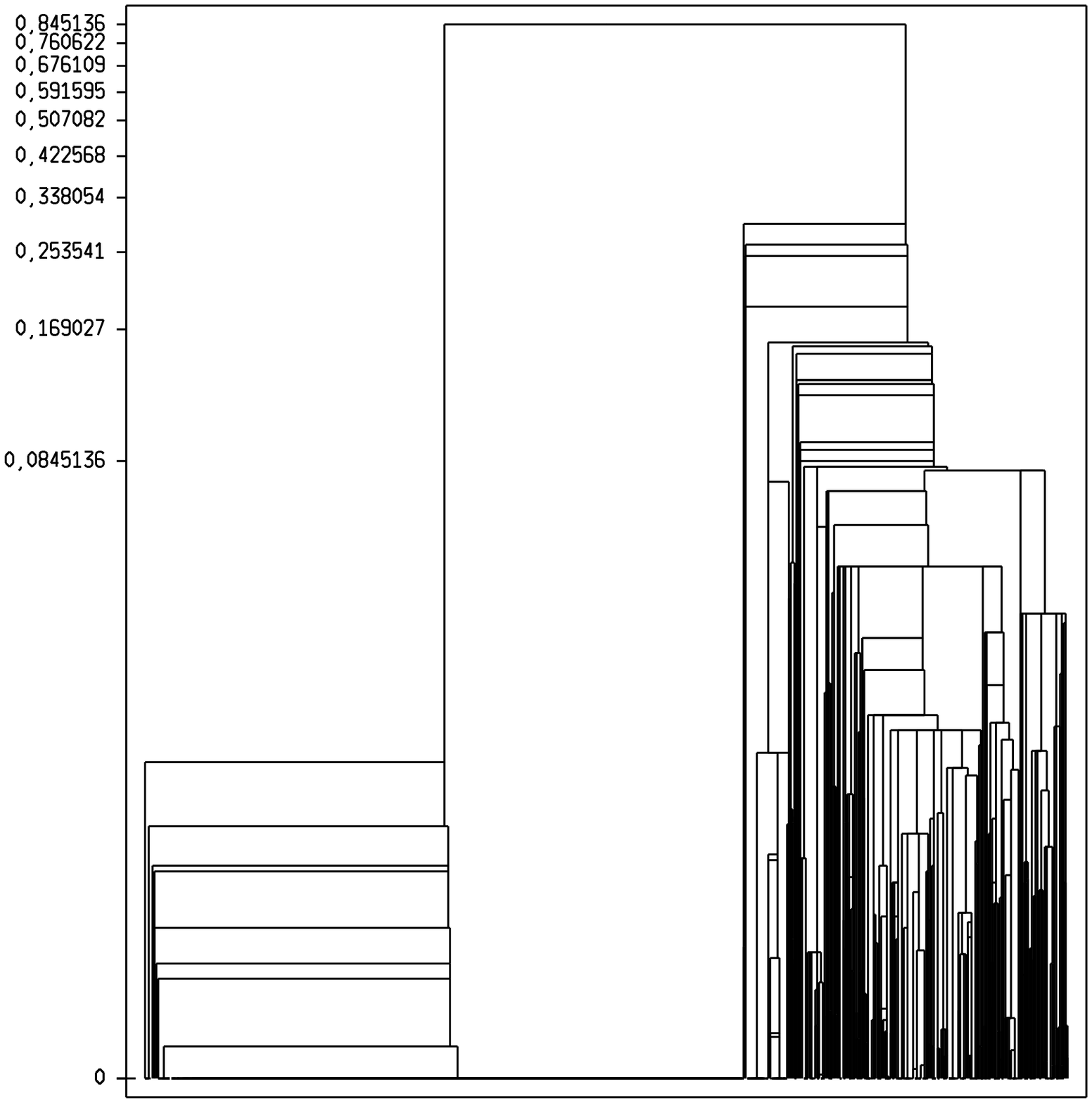}
	}
	\\
	\subfigure[One segmentation extracted from the hierarchy]{
	  \includegraphics[width=.4\textwidth]{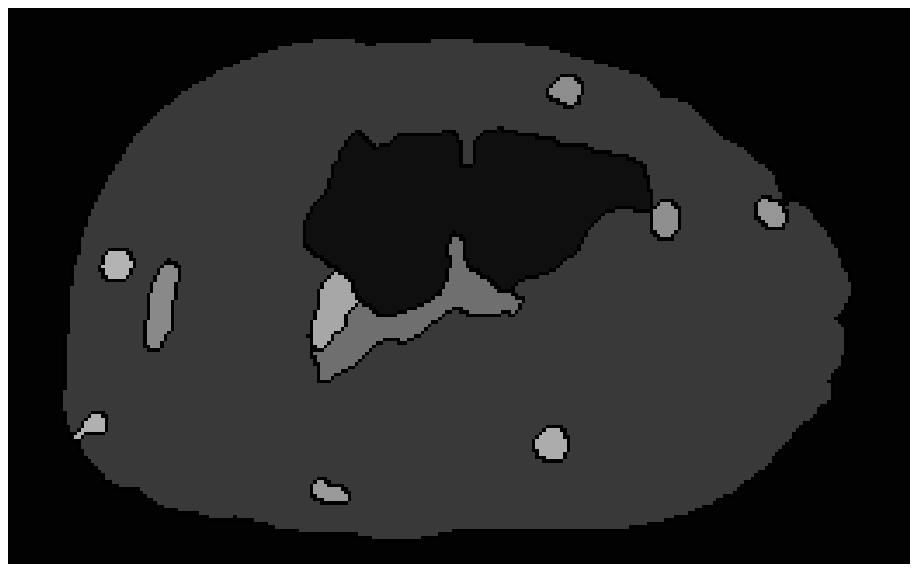}
	}
	&
	~~~~~~~
      	&
	\subfigure[An ultrametric watershed corresponding to 
	  the hierarchical segmentation]{
	  \includegraphics[width=.4\textwidth]{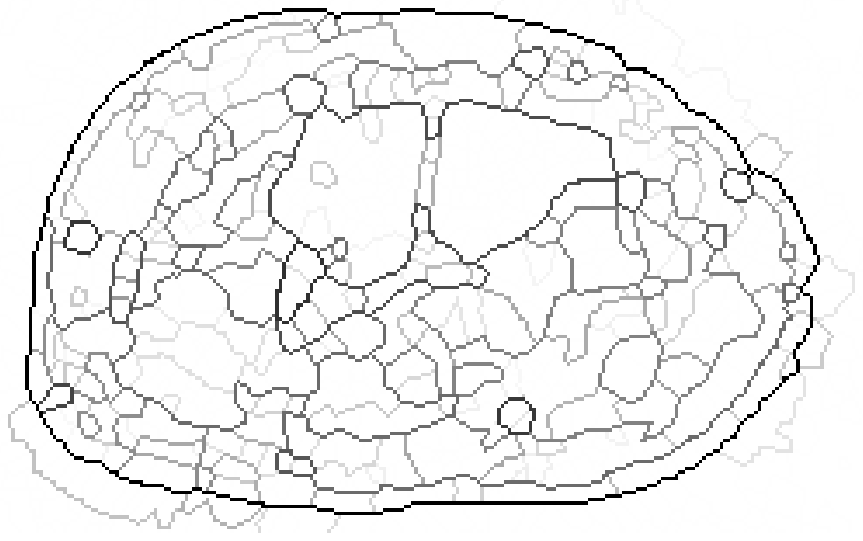}
	}
      \end{tabular}
    \end{center}
    \caption{An example of a hierarchical segmentation produced by the
      method of L.~Guigues~\cite{GuiguesCM06} {\em et al.}. The
      classical order for reading the images is (a), (b), (c),
      (d). But Theorem~\ref{th:onetoone} states that the reading order
      can also be (a), (d), (c), (b) (see text).}
    \label{fig:introduction}
\end{figure*}
This theorem is illustrated on Fig.~\ref{fig:introduction}, that is
produced using the method proposed in~\cite{GuiguesCM06}: what is
usually done is to compute from an original image
(Fig.~\ref{fig:introduction}.a) a hierarchical segmentation that can be
represented by a dendrogram (Fig.~\ref{fig:introduction}.b, see
section~\ref{sec:hier}). The borders of the segmentations extracted
from the hierarchy (such as the one seen in
Fig.~\ref{fig:introduction}.c) can be stacked to form a map
(Fig.~\ref{fig:introduction}.d) that allows for the visual
representation of the hierarchical
segmentation. Theorem~\ref{th:onetoone} gives a characterization of
the class of maps (called ultrametric watersheds) that represent a
hierarchical segmentation; more surprisingly,
Theorem~\ref{th:onetoone} also states that the dendrogram can be
obtained {\em after} the ultrametric watershed has been computed. 

Following~\cite{BBO2004}, we can say that, independently of its
theoretical interest,
such a bijection theorem is useful in practice. Any hierarchical
segmentation problem is a priori heterogeneous: assign to an
edge-weighted graph an indexed hierarchy. Theorem~\ref{th:onetoone}
allows such classification problem to become homogeneous: assign to an
edge-weighted graph a particular edge-weighted graph called
ultrametric watershed. Thus, Theorem~\ref{th:onetoone} gives a meaning
to questions like: which hierarchy is the closest to a given
edge-weighted graph with respect to a given measure or distance?

The paper is organised as follow.  Related works are examined in
section~\ref{sec:prevwork}. We introduce segmentation on edges in
section~\ref{sec:BasicNotions}, and in section~\ref{sec:topows}, we
adapt the topological watershed framework from the framework of gra\-phs
with discrete weights on the nodes to the one of graphs with
real-valued weights on the edges. We then define
(section~\ref{sec:hier}) hierarchies and ultrametric distances. In
section~\ref{sec:hierseg}, we introduce hierarchical segmentations and
ultrametric watersheds, the main result being the existence of a
bijection between these two sets (Th.~\ref{th:onetoone}).  In the last
part of the paper (section~\ref{sec:constcon}), we show how the
proposed framework can be used in practice. After proposing
(section~\ref{sec:representation}) a convenient way to represent
hierarchies as a discrete image, we demonstrate, using ultrametric
watersheds, how to compute constrained connectivity~\cite{Soille2007} as a
classical watershed-based morphological scheme; in particular, it
allows us to provide an efficient algorithm to compute the whole
constrained-connectivity hierarchy.

Apart when otherwise mentionned,
and to the best of the author's knowledge, all the properties and
theorems formally stated in this paper are new. This paper is an
extended version of~\cite{IGMI_Naj09}.

\section{Related works}
\label{sec:prevwork}

This section positions the proposed approach with respect to what has
been done in various different fields. When reading the paper for the
first time, it can be skipped. Readers with a background in
classification will be interested in section~\ref{sec:prevwork1},
those with a background in hierarchical image clustering
section~\ref{sec:prevwork2}, and those with a background in
mathematical morphology by section~\ref{sec:prevwork3}.

\subsection{Hierarchical clustering}
\label{sec:prevwork1}
From its beginning in image processing, hierarchical segmentation has
been thoug\-ht of as a particular instance of hierarchical
classification~\cite{Benzecri73}. One of the fundamental theorems for
hierarchical clustering states that there exists a one-to-one
correspondence between the set of indexed hierarchical classification
and a particular subset of dissimilarity measures called ultrametric
distances; This theorem is generally attributed to
Johnson~\cite{Johnson67}, Jardine {\em et al.}~\cite{JardineSibson71}
and Benz\'ecri~\cite{Benzecri73}.  Since then, numerous generalisations
of that bijection theorem have been proposed (see~\cite{BBO2004} for a
recent review).

Theorem~\ref{th:onetoone} (see below) is an extension to hierarchical
segmentation of this fundamental hierarchical clustering theorem.
Note that the direction of this extension is different from what is
done classically in hierarchical clustering. For example,
E.~Diday~\cite{Diday2008} looks for proper dissimilarities that are
compatible with the underlying lattice. An ultrametric watershed $F$
is not a proper dissimilarity, {\em i.e.} $F(x,y)=0$ does not imply
that $x=y$ (see section~\ref{sec:hier}). But $F$ is an ultrametric
distance (and thus a proper dissimilarity) on the set of connected
components of $\{(x,y) | F(x,y)=0\}$, those connected components being
the regions of a segmentation.

Another point of view on our extension is the following: some authors
assimilate classification and segmentation. We advocate that there
exists a fundamental difference: in classification, we work on the
complete graph, {\em i.e.} the underlying connectivity of the image
(like the four-connectivity) is not used, and some points can be put
in the same class because for example, their coordinates are
correlated in some way with their color; thus a class is not always
connected for the underlying graph. In the framework of
segmentation, any region of any level of a hierarchy of
segmentations is connected for the underlying graph. In other
words, our approach yields a constrained classification, the
constraint being the four-connectivity of the classes, or more
generally any connection defining a graph (for the notion of
connection and its links with segmentation,
see~\cite{Serra-2006,Ronse-2008}.)


\subsection{Hierarchical segmentation}
\label{sec:prevwork2}
There exist many methods for building a hierarchical
segmentation~\cite{Pavlidis-1979}, which can be divided in three
classes: bottom-up , top-down or split-and-merge. A recent review of
some of those approaches can be found in~\cite{Soille2007}.  A useful
representation of hierarchical segmentations was introduced
in~\cite{NS96} under the name of {\em saliency map}. This
representation has been used (under several names) by several authors,
for example for visualisation purposes~\cite{GuiguesCM06} or for
comparing hierarchies~\cite{Arbelaez-Cohen-2006}.

In this paper, we show that any saliency map is an ultrametric
watershed, and conversely.


\subsection{Watersheds}
\label{sec:prevwork3}
For bottom-up approaches, a generic way to build a hierarchical
segmentation is to start from an initial segmentation and
progressively merge regions together~\cite{PavlidisCh5-1977}.  Often,
this initial segmentation is obtained through a
watershed~\cite{meyer-beucher90,NS96,Meyer2005}.
See~\cite{meyer.najman:segmentation} for a recent review of these
notions in the context of mathematical morphology~\cite{NajTal10}.

Among many others~\cite{RM01}, topological watershed~\cite{Ber05} is
an original approach to watersheding that modifies a map (e.g., a
grayscale image) while preserving the connectivity of each lower
cross-section. It has been proved~\cite{Ber05,NCB05} that this
approach is the only one that preserves altitudes of the passes (named
connection values in this paper) between regions of the
segmentation. Pass altitudes are fundamental for hierarchical
schemes~\cite{NS96}.  On the other hand, topological watersheds may be
thick. A study of the properties of different kinds of graphs with
respect to the thinness of watersheds can be found
in~\cite{CouBerCou2008,CouNajBer2008}. An useful framework is that of
edge-weighted graphs, where watersheds are {\em de facto} thin ({\em
  i.e.} of thickness 1); furthermore, in that framework, a subclass of
topological watersheds satisfies both the drop of water principle and
a property of global optimality~\cite{CouBerNaj2008a}. In this
subclass of topological watersheds, some of them can be seen as the
limit, when the power of the weights tends to infinity for some
specific energy function, of classical algorithms like graph cuts or
random walkers~\cite{CGNT09iccv,CouGraNaj09c}.

In this paper, we translate topological watersheds from the framework
of vertice-weigthed-graphs to the one of edge-weighted graphs, and we
identify ultrametric watersheds, a subclass of topological watersheds
that is convenient for hierarchical segmentation.


\section{Segmentation on edges}
\label{sec:BasicNotions}
This paper is settled in the framework of edge-weighted
graphs. Following the notations of \cite{Diestel97}, we present 
some basic definitions to handle such kind of
graphs.
\subsection{Basic notions}

We define a {\em graph} as a pair~$X = (V,E)$ where~$V$ is a finite
set and~$E$ is composed of unordered pairs of~$V$, {\em i.e.},~$E$ is
a subset of~$\left\{\{x,y\} \subseteq V \st x\neq y \right\}$. We
denote by $\Card{V}$ the cardinal of $V$, {\em i.e}, the number of
elements of $V$. Each element of~$V$ is called a {\em vertex or a
point (of~$X$)}, and each element of~$E$ is called an {\em edge (of
$X$)}. If~$V \neq \emptyset$, we say that~$X$ is {\em non-empty}. \\
As several graphs are considered in this paper, whenever this is
necessary, we denote by $V(X)$ and by $E(X)$ the vertex and edge set
of a graph $X$.\\
A graph $X$ is said {\em complete} if $E=V(X)\times V(X)$.\\
Let~$X$ be a graph. If~$u = \{x,y\}$ is an edge of~$X$, we say
that~$x$ and~$y$ are {\em adjacent (for~$X$)}. Let~$\pi=\langle x_0,
\ldots , x_\ell \rangle$ be an ordered sequence of vertices
of~$X$,~$\pi$ is {\em a path from~$x_0$ to~$x_\ell$ in~$X$ (or
in~$V$)} if for any~$i \in [1,\ell]$,~$x_i$ is adjacent to
$x_{i-1}$. In this case, we say that {\em~$x_0$ and~$x_\ell$ are
linked for~$X$}.
We say that {\em~$X$ is connected} if any two vertices of~$X$ are
linked for~$X$.\\
Let~$X$ and~$Y$ be two graphs. If~$V(Y) \subseteq V(X)$ and~$E(Y)$
$\subseteq E(X)$, we say that {\em~$Y$ is a subgraph of~$X$} and we
write~$Y \subseteq X$. We say that~$Y$ is a {\em
connected component of~$X$}, or simply a {\em component of~$X$},
if~$Y$ is a connected subgraph of~$X$ which is maximal for this
property, {\em i.e.}, for any connected graph~$Z$,~$Y \subseteq Z
\subseteq X$ implies~$Z = Y$.\\
Let $X$ be a graph, and let~$S \subseteq E(X)$. The {\em graph induced
by $S$} is the graph whose edge set is~$S$ and whose vertex set is
made of all points that belong to an edge in~$S$, {\em i.e.},~$(\{x
\in V(X) \st \exists u \in S, x \in u\}, S)$.

{\bf Important remark.} {\em Throughout this paper~$G=(V,E)$ denotes a
  connected graph, and the letter $V$ ({\em resp.} $E$) will always
  refer to the vertex set ({\em resp.} the edge set) of $G$. We will
  also assume that~$E \neq \emptyset$.\\
  Let $S\subset E$. In the following, when no confusion may occur, the
  graph induced by~$S$ is also denoted by~$S$.
}

Typically, in applications to image segmentation,~$V$ is the set of
picture elements (pixels) and~$E$ is any of the usual adjacency
relations, {\em e.g.}, the 4- or 8-adjacency in 2D~\cite{KR89}.


If~$S\subset E$, we denote by {\em~$\Bar{S}$ the complementary set
of~$S$ in~$E$}, {\em i.e.},~$\Bar{S} = E \setminus S$.

\subsection{Segmentation in edge-weigthed graphs}
\label{sec:edgesegmentation}
A deep insight on our work is that we are working with edges and not
with points: the minimal unit which we want to modify is an
edge. Indeed, what we need is a discrete space in which we can draw
the border of a segmentation, so that we can represent that
segmentation by its border; in other words, we want to be able to obtain
the regions from their borders, and conversely. In that context, a
desirable property is that the regions of the segmentation are the
connected components of the complement of the border. 

As illustrated in Fig.~\ref{fig:segmentation}.b, this is not possible
to achieve with the classical definition of a point-cut. Indeed,
recall that a {\em partition} of $V$ is a collection ($V_i$) of
non-empty subsets of $V$ such that any element of $V$ is exactly in
one of these subsets, and that a {\em point-cut} is the set of edges
crossing a partition. Even if we add the hypothesis that any $(V_i,
(V_i\times V_i)\cap E)$ is a connected graph, a $V_i$ can be reduced
to an isolated vertice, as the circled grey-point of
Fig.~\ref{fig:segmentation}.b. In that case, the complement of the
point-cut, being a set of edges, does not contain that isolated
vertice. The correct space to work with is the one of edges, and this
motivates the following definitions.

\begin{defn}
\label{def:edgecut}
A set $C\subset E$ is an {\em (edge-)cut (of $G$)} if each edge of $C$
is adjacent to two different nonempty connected components of
$\Bar{C}$. A graph $S$ is called an {\em segmentation (of $G$)}
if $\Bar{E(S)}$ is a cut.  Any connected component of a segmentation
$S$ is called a {\em region (of $S$)}. 
\end{defn}

\begin{figure*}[htbp]
  \begin{center}
    \begin{tabular}{ccc}
      \subfigure[]{
	\includegraphics[width=.3\textwidth]{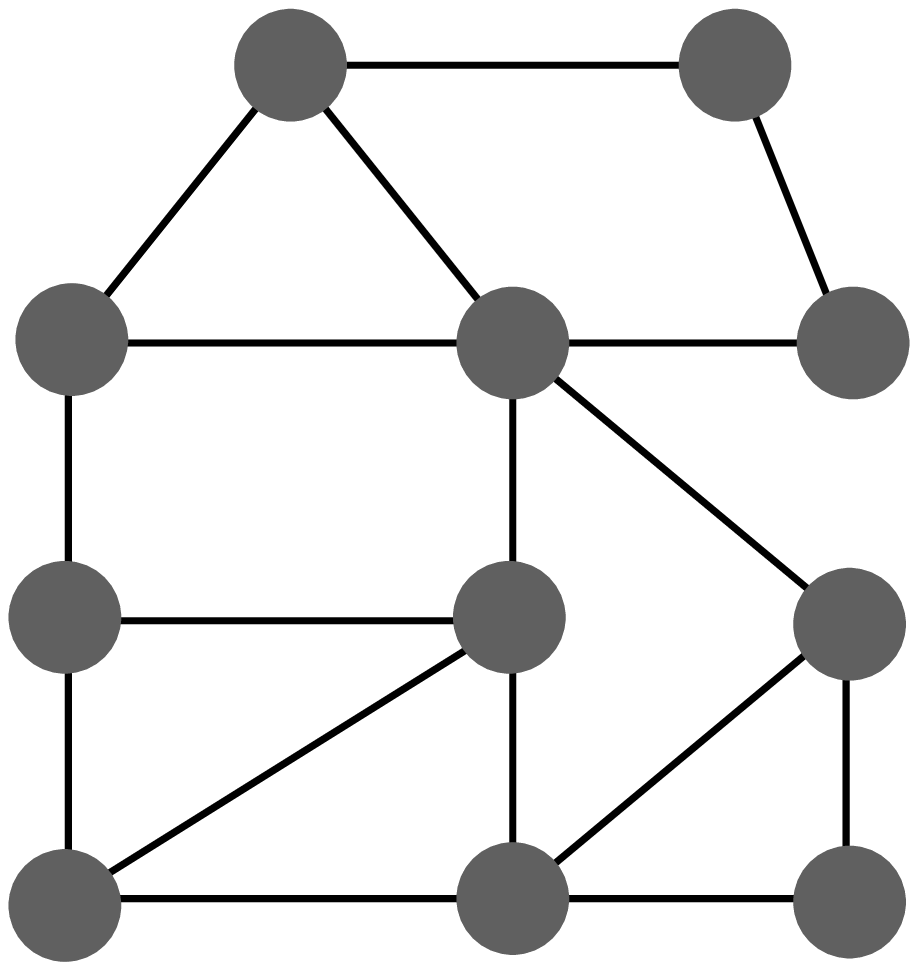}
      } &
      \subfigure[]{
	\includegraphics[width=.3\textwidth]{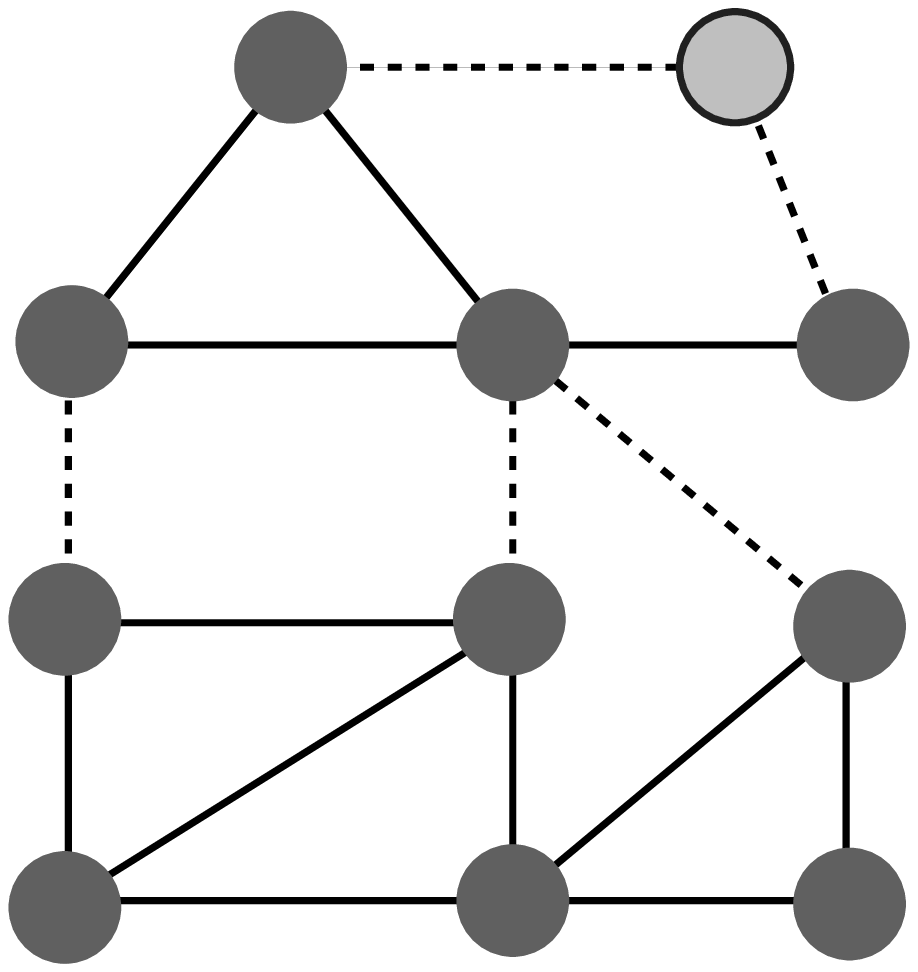}
      } &
      \subfigure[]{
	\includegraphics[width=.3\textwidth]{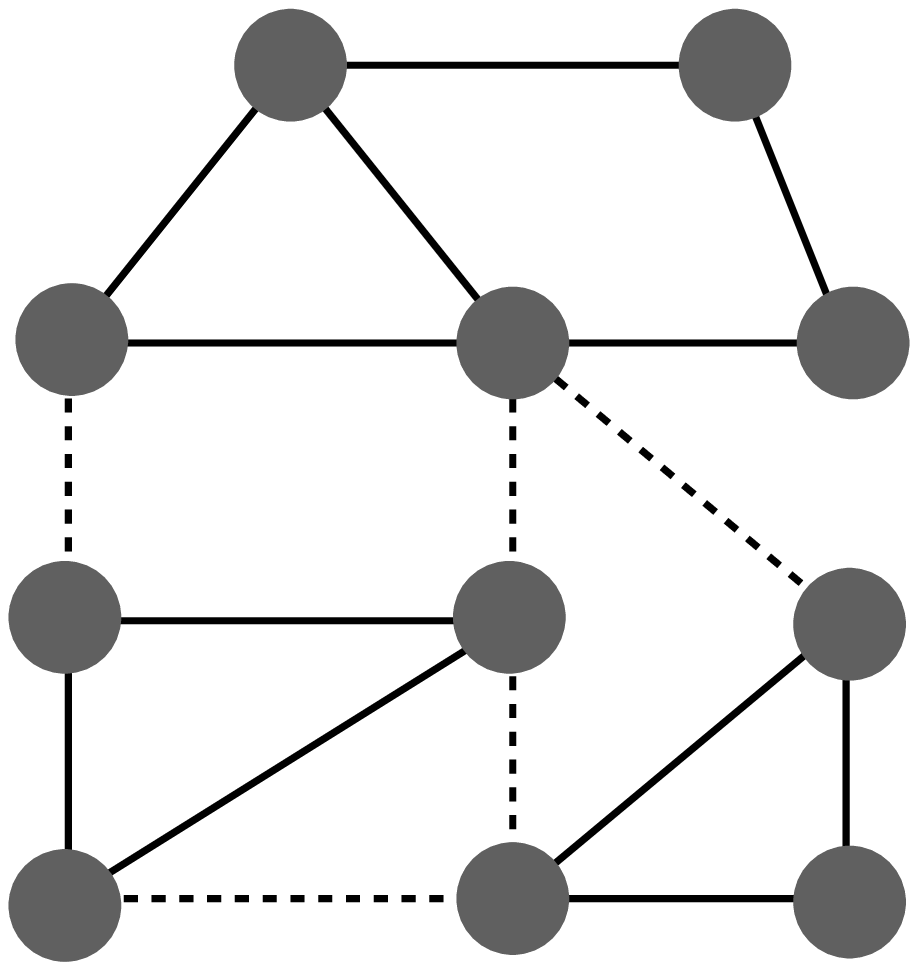}
      }
    \end{tabular}
  \end{center}
  \caption{Illustration of segmentation and edge-cut. (a)~A graph
  $X$. 
  (b)~A subgraph of $X$ which is not a segmentation of $X$: the
    circled grey-point is isolated, and if the point-cut $D$ is the
    set of dotted-lines edges, $\Bar{D}$ contains only two connected
    components, instead of the expected three (see text).
    (c)~An segmentation of $X$; the set $C$ of dotted-lines edges
  is the associated edge-cut of $X$.
}
  \label{fig:segmentation}
\end{figure*}
As mentioned above, the previous definitions of cut and segmentation
(illustrated on Fig.~\ref{fig:segmentation}.c) are not the usual
ones. One can remark the complement of the complement of a cut is the
cut itself, and that any segmentation gives a partition, the converse
being false. In particular, Prop.~\ref{pr:segm}.i below states that
there is no isolated point in an segmentation. If we need an
isolated point $x$, it is always possible to replace $x$ with an edge
$\{x',y'\}$.  An application of the framework of hierarchical
segmentation to constrained connectivity (where isolated points
are present) is described in section~\ref{sec:constcon}.

It is interesting to state the definition of a segmentation from the
point of view of vertices of the graph.  A graph $X$ is said to be
{\em spanning (for $V$)} if $V(X)=V$. We denote by $\phi$ the map that
associates, to any $X\subset G$, the graph $\phi(X) = \{V(X),
\{\{x,y\}\in E | x\in V(X), y\in V(X)\}\}$. We observe that $\phi(X)$ is
maximal among all subgraphs of $G$ that are spanning for $V(X)$, it is
thus a closing on the lattice of subgraphs of
$G$~\cite{IGMI_CouNajSer09}. We call $\phi$ the edge-closing.
\begin{proper}
\label{pr:segm}
A graph $S\subseteq G=(V,E)$ is a segmentation of $G$ if and only if
\begin{enumerate}
\item[(i)] The graph induced by $E(S)$ is $S$; 
\item[(ii)] $S$ is spanning for $V$; 
\item[(iii)] for any connected component $X$ of $S$, $X=\phi(X)$.
\end{enumerate}
\end{proper}
\begin{proof}
  Let $S$ be a segmentation of $G$. Then $\Bar{S}$ is a cut, in other
  word, any edge $v=\{x,y\}\not\in E(S)$ is such that $x$ an $y$ are
  in two different connected components 
  of $\Bar{S}$. As $G$ is connected, that implies that $S$ is spanning for
  $V$. Moreover, $E(S)$ is the set of all edges of $S$, and as $S$ is
  spanning for $G$, the graph induced by $E(S)$ is $(V,E(S))=S$. Let
  $X$ be a connected component of $S$, suppose that there exists
  $v=\{x,y\}\in E$ such that $x$ and $y$ belong to $X$ and $v\not\in
  E(X)$. But then $v\not\in E(S)$ and thus $x$ and $y$ are in two
  different connected components of $S$, a contradiction.

  Conversely, let $S$ be a graph satisfying (i), (ii) and (iii) and
  let $v=\{x,y\}\not\in E(S)$.  As, by (ii), $S$ is spanning for $V$,
  assertion (iii) implies that $x$ and $y$ are in two different
  connected components of $\Bar{E(S)}$. Assertion (i) implies that
  there is no isolated points in $S$, thus $\Bar{S}$ is a cut and thus
  $S$ is a segmentation of $G$.$\qed$
\end{proof}







\subsection{Binary watershed}
Let $X$ be a subgraph of $G$. We note $X+u=(V(X)\cup u, E(X)\cup
\{u\})$. In other words, $X+u$ is the graph whose vertice-set is composed
by the points of $V(X)$ and the points of $u$, and whose edge-set is
composed by the edges of $E(X)$ and $u$. An edge $u\in \Bar{E(X)}$ is
said to be {\em W-simple (for $X$)} (see~\cite{Ber05}) if $X$ has the
same number of connected components as $X+u$.\\
A subgraph $X'$ of $G$ is a {\em thickening (of $X$)} if:
\begin{itemize}
\item $X'=X$, or if
\item there exists a graph $X''$ which is a thickening of $X$
and there exists an edge $u$ W-simple for $X''$ and $X'=X''+u$.
\end{itemize}
Thus, informally, a thickening $X'$ of $X$ is obtained by iteratively
adding to $X$ a sequence of edges $u_1,\ldots,u_n$, {\em i.e.}
$X'=X+u_1+\ldots+u_n$, with the constraint that in the sequence
$X_0=X$, $X_{n+1}=X_n+u_{n+1}$, the edge $u_{n+1}$ is W-simple for
$X_n$.

A subgraph $X$ of $G$ such that there does not exist a W-simple edge
for $X$ is called a {\em binary watershed (of $G$)}.

The following property is a consequence of the definitions of
segmentation and binary watershed.
\begin{proper}
\label{pr:segmwshed}
A graph $X\subseteq G=(V,E)$ is a segmentation of $G$ if and only if
$X$ is a binary watershed of $G$ and if $X$ is induced by $E(X)$.
\end{proper}
\begin{proof}
If $X$ is a segmentation, then $\Bar{E(X)}$ is a cut; let $u\in
\Bar{E(X)}$, $u$ is adjacent to two different non-empty connected
components of $E(X)$, in other word $u$ is not W-simple for $X$. Thus
any segmentation is a binary watershed.

Conversely, let $X$ be a binary watershed, any $u\not\in E(X)$ is not
W-simple for $X$ (and thus $u$ is adjacent to two different connected
components of $X$). If furthermore $X$ is induced by $E(X)$ then
$\Bar{E(X)}$ is a cut.$\qed$
\end{proof}

Thus, starting from a set of edges $X$, a segmentation is obtained by
iterative thickening steps until idempotence. The next section extends
the binary watershed approach to edge-weighted graphs.

\section{Topological watershed}
\label{sec:topows}
\subsection{Edge-weighted graphs}
We denote by~$\Fset{E}$ the set of all maps from~$E$ to~$\mathbb{R}^+$
Given any $F\in\Fset{E}$, the positive numbers $F(u)$ for $u\in E$ are
called the {\em weights} and the pair~$(G,F)$ an {\em edge-weighted
  graph}. Whenever no confusion can occur, we will denote the
edge-weighted graph $(G,F)$ by $F$.

For applications to image segmentation, we take for weight $F(u)$,
where $u=\{x,y\}$ is an edge between two pixels~$x$ and~$y$, a
dissimilarity measure between~$x$ and~$y$ ({\em e.g.},~$F(u)$ equals
the absolute difference of intensity between~$x$ and~$y$;
see~\cite{CouBerNaj09} for a more complete discussion on different
ways to set the map~$F$ for image segmentation). Thus, we suppose that
the salient contours are located on the highest edges of $(G,F)$.

Let $\lambda \in \mathbb{R}^+$ and $F\in\mathcal{F}$, we define
$F[\lambda]=\{v \in E \st F(v)\leq \lambda\}$. The graph (induced by)
$F[\lambda]$ is called a {\em (cross)-section} of $F$. A connected
component of a section $F[\lambda]$ is called a {\em component of~$F$
  (at level $\lambda$)}.

We define ${\cal C}(F)$ as the set composed of all the pairs
$[\lambda,C]$, where $\lambda\in\mathbb{R}^+$ and $C$ is a component
of the graph $F[\lambda]$. 
We call {\em altitude of~$[\lambda,C]$\/} the number~$\lambda$. We
note that one can reconstruct $F$ from ${\cal C}(F)$; more precisely,
we have:
\begin{equation}
F(v) = \min\{\lambda \st [\lambda,C]\in{\cal C}(F), v\in E(C) \}
\label{eq:ct1}
\end{equation}
For any component $C$ of $F$, we set $h(C)= \min\{\lambda \st
[\lambda,C]$ $\in{\cal C}(F)\}$. We define ${\cal C}^\star(F)$ as the set
composed by all $[h(C),C]$ where $C$ is a component of $F$.  The set
${\cal C}^\star(F)$, called the component tree of
$F$~\cite{Salembier-Oliveras-Garrido-1998,NajCou2006}, is a finite
subset of ${\cal C}(F)$ that is widely used in practice for image
filtering. Note that the previous equation~(\ref{eq:ct1}) also holds for ${\cal
C}^\star(F)$:
\begin{equation}
F(v) = \min\{\lambda \st [\lambda,C]\in{\cal C}^\star(F), v\in E(C) \}
\end{equation}
We will make use of the component tree in the proof of
Pr.~\ref{pr:saliency}.

A {\em (regional) minimum of~$F$} is a component $X$ of the graph
$F[\lambda]$ such that for all $\lambda_1<\lambda$, $F[\lambda_1] \cap
E(X) = \emptyset$. We remark that a minimum of $F$ is a subgraph of
$G$ and not a subset of vertices of $G$; we also remark that any
minimum $X$ of $F$ is such that $\Card{V(X)}>1$.

We denote by~$\Minima{F}$ the graph whose vertex set and edge set are,
respectively, the union of the vertex sets and edge sets of all minima
of $F$. In Fig.~\ref{fig:wtopo}, boxes are drawn around each of the
minimum of $\Minima{F}$. Note that $\Minima{F}$ is induced by
$E(\Minima{F})$. As a convenient notation, and when no confusion can
occur, we will sometimes write $X\in\Minima{F}$ if $X$ is a connected
component of $\Minima{F}$.

\subsection{Topological watersheds on edge-weighted graphs}
In that section, we extend the definition of topological
watershed~\cite{Ber05} to edge-weighted graphs, and we give an
original characterization of topological watersheds in that framework
(Th.~\ref{th:wtopocarac}).

Let $F\in\Fset{E}$. An edge $u$ such that $F(u)=\lambda$ is said to
be {\em W-destructible (for $F$) with lowest value $\lambda_0$} if
there exists $\lambda_0$ such that, for all $\lambda_1$, $\lambda_0<
\lambda_1\leq \lambda$, $u$ is W-simple for $F[\lambda_1]$ and if $u$
is not W-simple for $F[\lambda_0]$.

A {\em topological watershed (on $G$)} is a map that contains no
W-destructible edges.

\begin{figure*}[htbp]
  \begin{center}
    \begin{tabular}{ccc}
      \subfigure[]{
	\includegraphics[width=.4\textwidth]{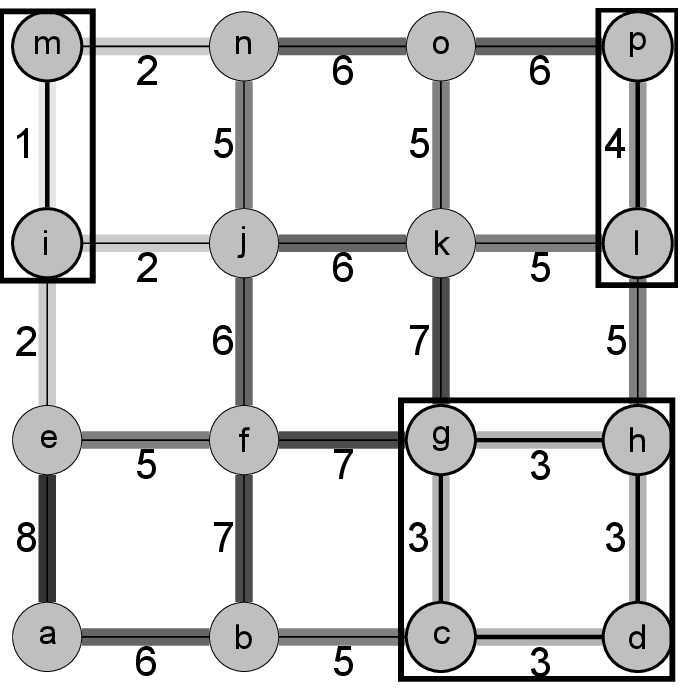}
      } &
      ~~~~~~~
      &
      \subfigure[]{
	\includegraphics[width=.4\textwidth]{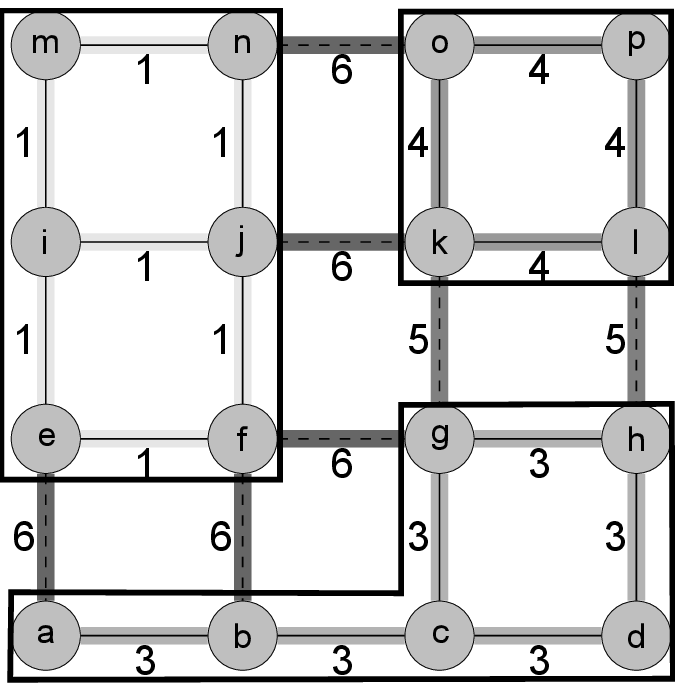}
      }
    \end{tabular}
  \end{center}
  \caption{Illustration of topological watershed. (a)~An edge-weighted
  graph $F$. (b)~A topological watershed of $F$. The minima of (a) are
  $(\{m,i\}), (\{p,l\})$, $(\{g,h\},\{c,d\},\{g,c\},\{h,d\})$. A box is 
  drawn around each one of the minimum in (a) and (b).}
  \label{fig:wtopo}
\end{figure*}
An illustration of a topological watershed can be found in
Fig.~\ref{fig:wtopo}.

A practical way to obtain a topological watershed from any given map
is to apply a topological thinning, that, informally, consists in
lowering W-destructible ed\-ges. More precisely, a map $F'$ is a {\em
  topological thinning (of $F$)} if:
\begin{itemize}
\item $F'=F$, or if
\item there exists a map $F''$ which is a topological thinning of $F$
and there exists an edge $u$ W-destructible for $F''$ with lowest
value $\lambda$ such that $\forall v\neq u, F'(v)=F''(v)$ and
$F'(v)=\lambda_0$, with $\lambda\leq \lambda_0< F''(v)$.
\end{itemize}

A characterization of a W-destructible edge is provided through the
connection value. The {\em connection value} between $x\in V$ and
$y\in V$ is the number
\begin{equation}
\label{eq:connection}
F(x,y)= \min\{\lambda \st [\lambda,C]\in{\cal C}(F), x \in V(C), y\in
V(C) \}
\end{equation}
In other words, $F(x,y)$ is the altitude of the lowest element
$[\lambda,C]$ of ${\cal C}(F)$ such that $x$ and $y$ belong to $C$
(rule of the least common ancestor). 

In Fig.~\ref{fig:wtopo}.a and Fig.~\ref{fig:wtopo}.b, it can be seen
that the connection value between the points $m$ and $p$ is 6, that
the one between $m$ and $d$ is 6, and that the one between $p$ and $d$
is $5$.

The connection value is a practical way to know if an edge is
W-destructible. The following property is a translation of prop.~2
in~\cite{CNB05} to the framework of edge-weighted graphs.
\begin{proper}[Prop.~2 in~\cite{CNB05}]
\label{pr:W-destructible}
Let $F\in\Fset{E}$. An edge $v=\{x,y\}\in E$ is W-destructible for $F$ with
lowest value $\lambda$ if and only if $\lambda=F(x,y)<F(v)$.
\end{proper}

Two points $x$ and $y$ are {\em separated (for $F$)} if $F(x,y) >
\max\{\lambda_1,\lambda_2\}$, where $\lambda_1$ (resp. $\lambda_2$) is
the altitude of the lowest element $[\lambda_1,c_1]$
(resp. $[\lambda_2,c_2]$) of ${\cal C}(F)$ such that $x\in c_1$
(resp. $y\in c_2)$.  The points $x$ and $y$ are {\em
$\lambda$-separated (for $F$)} if they are separated and
$\lambda=F(x,y)$.

The map $F'$ is a {\em separation} of $F$ if, whenever two points are
$\lambda$-separated for $F$, they are $\lambda$-separated for $F'$.
%
%
%

If $X$ and $Y$ are two subgraphs of $G$, we set
$F(X,Y) = \min\{F(x,y) \st  x\in X, y\in Y\}$. 
\begin{theorem}[Restriction to minima~\cite{Ber05}]
\label{th:minima}
Let $F'\leq F$ be two elements of $\Fset{E}$. The map $F'$ is a
separation of $F$ if and only if, for all distinct minima $X$ and $Y$
of $\Minima{F}$, we have $F'(X,Y)=F(X,Y)$.
\end{theorem}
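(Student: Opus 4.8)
The plan is to reduce both implications to a single structural correspondence: a pair $x,y$ is $\lambda$-separated for $F$ precisely when it sits ``above'' two distinct minima $X,Y$ of $F$ with $F(X,Y)=\lambda$. To run this correspondence I would first collect a few elementary facts. (a) The connection value is an \emph{ultrametric}, $F(x,y)\le\max\{F(x,z),F(z,y)\}$ for all $x,y,z\in V$; this is immediate from~(\ref{eq:connection}), since at level $\max\{F(x,z),F(z,y)\}$ the components through $x,z$ and through $z,y$ both contain $z$ and hence coincide. (b) From $F'\le F$ we get $F[\lambda]\subseteq F'[\lambda]$ for every $\lambda$, whence $F'(x,y)\le F(x,y)$, and, writing $F(x)$ for the least $\lambda$ such that $x$ belongs to a component of $F[\lambda]$ (equivalently the smallest weight of an edge incident to $x$), also $F'(x)\le F(x)$. (c) Every component of a section contains a minimum of $F$: if $\mu_0$ is the smallest weight of an edge of a component $C$ of $F[\lambda]$, then the component of $F[\mu_0]$ through such an edge has all edges of weight $\mu_0$, so it is a minimum contained in $C$. (d) Two distinct minima are vertex-disjoint (a shared vertex would force one to be a connected subgraph of the other, contradicting either minimality or distinctness). (e) If $X$ is a minimum then all its edges share a common weight $\mu_X$ and every vertex of $X$ has floor exactly $\mu_X$, because any incident edge of weight at most $\mu_X$ necessarily lies in $E(X)$.

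Facts (d) and (e) yield the key observation: for distinct minima $X,Y$ and any $a\in V(X)$, $b\in V(Y)$, the pair $a,b$ is $F(a,b)$-separated for $F$, since $F(a,b)>\mu_X$ (otherwise $b$ would lie in the component of $F[\mu_X]$ through $a$, namely $X$, contradicting (d)) and symmetrically $F(a,b)>\mu_Y$. The direct implication follows at once: assuming $F'$ is a separation of $F$ and fixing distinct minima $X,Y$, every pair $a\in V(X)$, $b\in V(Y)$ is $F(a,b)$-separated for $F$, hence for $F'$, so $F'(a,b)=F(a,b)$; taking the minimum over all such $a,b$ gives $F'(X,Y)=F(X,Y)$.

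For the converse, assume $F'(X,Y)=F(X,Y)$ for all distinct minima, and let $x,y$ be $\lambda$-separated for $F$, so $\lambda=F(x,y)>\max\{F(x),F(y)\}$. Pick $\mu$ with $\max\{F(x),F(y)\}\le\mu<\lambda$; then $x,y$ lie in distinct components $C_x,C_y$ of $F[\mu]$ (distinct because $F(x,y)=\lambda>\mu$), which by (c) contain minima $X\subseteq C_x$ and $Y\subseteq C_y$, with $X\ne Y$ by (d). Using (a) together with $F(x,a)\le\mu$ and $F(y,b)\le\mu$ for $a\in V(X)$, $b\in V(Y)$ one checks $F(X,Y)=\lambda$, so by hypothesis $F'(X,Y)=\lambda$. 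Now choose $a\in V(X)$, $b\in V(Y)$ with $F'(x,a)\le F(x,a)\le\mu<\lambda$ and $F'(y,b)<\lambda$ (possible since $x,a$ lie in $C_x$ and $y,b$ in $C_y$, and $F'\le F$); as $F'(a,b)\ge F'(X,Y)=\lambda$, the ultrametric inequality $F'(a,b)\le\max\{F'(a,x),F'(x,y),F'(y,b)\}$ forces $F'(x,y)\ge\lambda$. Combined with $F'(x,y)\le F(x,y)=\lambda$ this gives $F'(x,y)=\lambda$, and $F'(x),F'(y)\le\mu<\lambda$ then shows $x,y$ are separated, hence $\lambda$-separated, for $F'$.

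The main obstacle is precisely this lower bound $F'(x,y)\ge\lambda$ in the converse: the hypothesis constrains $F'$ only on pairs of minima, so its effect must be transported out to the arbitrary points $x,y$. The mechanism that makes this work is that $F'\le F$ supplies the inequalities $F'(x,X)<\lambda$ and $F'(y,Y)<\lambda$ for free, after which the ultrametric inequality pins $F'(x,y)$ from below by $F'(X,Y)$. Establishing the auxiliary identity $F(X,Y)=\lambda$ and the disjointness $X\ne Y$ is the other point requiring care, and both rest on facts (a) and (d).
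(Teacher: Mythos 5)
Your proof is correct, but there is nothing in the paper to compare it against: the paper does not prove Theorem~\ref{th:minima} at all. The theorem is imported from Bertrand~\cite{Ber05}, where it is established in the different framework of vertex-weighted graphs with discrete weights, and the present paper relies silently on its translation to edge-weighted graphs with real-valued weights. Your argument therefore supplies the proof the paper delegates to the reference, and it does so along the natural lines for this framework. I checked your auxiliary facts (a)--(e) and they all hold: in particular (c), where the component of $F[\mu_0]$ through a minimal-weight edge of $C$ is flat, hence a minimum contained in $C$; and (e), which gives that every vertex of a minimum $X$ has floor exactly the common altitude $\mu_X$, so that any pair taken in two distinct minima is automatically $F(a,b)$-separated --- this makes the direct implication immediate. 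In the converse, the two delicate points are handled properly: the identity $F(X,Y)=\lambda$ follows from two applications of the ultrametric inequality (a), one giving $F(a,b)\le\lambda$ and the other, using $F(x,y)=\lambda>\mu$, giving $F(a,b)\ge\lambda$; and the lower bound $F'(x,y)\ge\lambda$ is correctly forced by $F'(a,b)\ge F'(X,Y)=\lambda$ together with $F'(x,a)\le\mu<\lambda$ and $F'(y,b)\le\mu<\lambda$, which is precisely where the hypothesis $F'\le F$ is genuinely used (the ultrametric inequality of course applies to $F'$ as well, being generic for any map of $\Fset{E}$). Compared with citing~\cite{Ber05}, your route buys a self-contained proof in the edge-weighted setting, at the modest cost of re-deriving standard structural facts about sections and minima (flatness of minima, their pairwise vertex-disjointness, and the existence of a minimum inside every component of a section).
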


A graph $X$ is {\em flat (for $F$)} if for all $u,v\in E(X)$,
$F(u)=F(v)$.  If $X$ is flat, the {\em altitude} of $X$ is the number
$F(X)$ such that $F(X)=F(v)$ for any $v\in E(X)$.

We say that $F'$ is a {\em strong separation} of $F$ if $F'$ is a
separation of $F$ and if, for each $X'\in\Minima{F'}$, there exists
$X\in\Minima{F}$ such that $X\subseteq X'$ and $F(X)=F(X')$.
\begin{theorem}[strong separation~\cite{Ber05}]
\label{th:separ}
Let $F$ and $F'$ in $\Fset{E}$ with $F'\leq F$. Then $F'$ is a
topological thinning of $F$ if and only if $F'$ is a strong separation
of~$F$.
\end{theorem}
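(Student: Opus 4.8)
The plan is to prove the two implications separately, in each case isolating what a single lowering step does and then propagating it by induction on the thinning sequence.

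For the direct implication, assume $F'$ is a topological thinning of $F$; I argue by induction, the base case $F'=F$ being trivial. The inductive core is that one step is a strong separation and that strong separations compose. So let $F'$ be obtained from a map $F''$ by lowering a single edge $u=\{x,y\}$, W-destructible for $F''$ with lowest value $\lambda$, from $F''(u)$ down to some $\lambda_0\in[\lambda,F''(u))$. By \Xprop{pr:W-destructible} we have $\lambda=F''(x,y)<F''(u)$, so $x$ and $y$ already lie in a common component of $F''[\lambda]$ that avoids $u$. Hence any path exploiting the cheapened edge $u$ can be rerouted through that component at no extra cost, and all connection values are preserved: $F'(a,b)=F''(a,b)$ for every pair $a,b$. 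In particular $F'(X,Y)=F''(X,Y)$ for all minima $X,Y$ of $F''$, so \Xtheo{th:minima} shows that $F'$ is a separation of $F''$.

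It remains to check the minima clause. Here I would analyse how the minima of $F'$ arise from those of $F''$: since $x$ and $y$ are joined below $F''(u)$, the lowering cannot create a minimum unrelated to the existing ones, and every minimum of $F'$ contains a minimum of $F''$ at the same altitude. This makes the single step a strong separation. Composition is then routine, since the invariance of connection values is transitive and the relation ``contains a minimum at the same altitude'' is transitive; chaining the steps completes this direction.

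For the converse, assume $F'\leq F$ is a strong separation of $F$, and build a thinning realising it by induction on the number of edges with $F'(u)<F(u)$. If there are none then $F'=F$. Otherwise the plan is to exhibit an edge $u=\{x,y\}$ that is W-destructible for $F$ with lowest value at most $F'(u)$, lower it to $F'(u)$, and recurse on the resulting map $\tilde F$, which satisfies $F'\leq\tilde F\leq F$. To produce such a $u$ I would combine the separation hypothesis with \Xtheo{th:minima}: the preservation of connection values between minima, together with the fact that $F'$ strictly lowers some edge, forces the existence of an edge whose $F$-weight exceeds the connection value of its endpoints, which by \Xprop{pr:W-destructible} is exactly a W-destructible edge; the minima clause of strong separation is what lets one choose it with lowest value $\leq F'(u)$, so that lowering to $F'(u)$ is an admissible thinning step. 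One then verifies that $F'$ is still a strong separation of $\tilde F$, so the induction hypothesis applies and the procedure terminates at $F'$. I expect the entire difficulty to sit in this converse, in the two verifications just flagged: first, guaranteeing at each stage an edge that is simultaneously W-destructible and targeted correctly (a naive choice such as an edge inside a minimum, where $F(x,y)=F(u)$ and no legal lowering exists, must be ruled out, and it is precisely here that strong separation is needed rather than mere separation); and second, showing that strong separation persists when passing to $\tilde F$, which keeps the induction alive. By contrast the direct implication is soft: once connection values are invariant under a single lowering, \Xtheo{th:minima} yields separation at once, and only the bookkeeping of minima needs attention.
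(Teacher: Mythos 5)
First, note the benchmark here: the paper does not prove this theorem at all --- it is imported from~\cite{Ber05} (where it is established, in the vertex-weighted setting, through a chain of dedicated lemmas), so your attempt has to be measured against Bertrand's proof rather than anything in this text. Your forward direction is sound and essentially the standard argument: since the new value $\lambda_0$ of the lowered edge $u=\{x,y\}$ satisfies $\lambda_0\geq\lambda=F''(x,y)$ (\Xprop{pr:W-destructible}), the endpoints of $u$ are already linked in every section $F''[\mu]$ with $\mu\geq\lambda_0$, so adding $u$ to those sections changes no component structure; hence \emph{all} connection values are invariant under one step, \Xtheo{th:minima} yields separation, and your claimed minima bookkeeping (every minimum of $F'$ contains a minimum of $F''$ at the same altitude, a relation that chains) is true and routine to check. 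Modulo writing out that check, this half is fine.

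The genuine gap is in the converse, and it is not a detail you may defer: the two verifications you explicitly leave open --- (1) whenever $F'\neq F$, there exists an edge $u=\{x,y\}$ with $F'(u)<F(u)$ that is W-destructible for the current map with lowest value $\leq F'(u)$, and (2) the strong-separation hypothesis persists after lowering $u$ to $F'(u)$ --- \emph{are} the theorem; everything else in your converse is scaffolding. Your stated plan, ``combine the separation hypothesis with \Xtheo{th:minima}'', cannot deliver (1) as it stands: \Xtheo{th:minima} only constrains connection values \emph{between minima of $F$}, whereas the edge on which $F'$ drops need not be related to any pair of minima, and a priori nothing in mere separation excludes that every strictly lowered edge satisfies $F(x,y)=F(u)$ and is therefore not W-destructible by \Xprop{pr:W-destructible} (think of an edge on a plateau whose endpoint lies in no component of any strictly lower section). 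You correctly sense that the strong clause is what must exclude such configurations, but sensing where the hypothesis enters is not an argument: one still needs an explicit extremal selection of $u$ (and a proof that its lowest destructible value really falls to $F'(u)$, so the full one-shot lowering is a legal thinning step), plus a separate proof that minima created or enlarged by the step still each contain a minimum of the original map at the same altitude, keeping the induction alive. In~\cite{Ber05} these points occupy several lemmas. As submitted, your text is a correct strategic outline of the known proof with its hard core missing, so it does not constitute a proof of the converse implication.
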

In other words, topological thinnings are the only way to obtain a
watershed that preserves connection values.

In the framework of edge-weighted graphs, topological watersheds
allows for a simple characterization.
\begin{theorem}
\label{th:wtopocarac}
A map $F$ is a topological watershed if and only if:
\begin{enumerate}
\item[(i)] $\Minima{F}$ is a segmentation of $G$;
\item[(ii)] for any edge $v=\{x,y\}$, if there exist $X$ and $Y$ in
$\Minima{F}$, $X\neq Y$, such that $x\in V(X)$ and $y\in V(Y)$, then
$F(v)=F(X,Y)$.
\end{enumerate}
\end{theorem}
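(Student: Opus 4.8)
The plan is to trade the definition of topological watershed for the pointwise test given by \Xprop{pr:W-destructible}. Since the edge $v=\{x,y\}$ already joins $x$ and $y$ at level $F(v)$, one always has $F(x,y)\le F(v)$; hence by \Xprop{pr:W-destructible} the map $F$ is a topological watershed if and only if $F(v)=F(x,y)$ for \emph{every} edge $v=\{x,y\}$. I would prove this reformulation first, and alongside it record one structural fact about minima used in both directions: if $X$ and $Y$ are distinct minima and $\mu=F(X,Y)$, then $F(x',y')=\mu$ for all $x'\in V(X)$, $y'\in V(Y)$. Indeed, for $X$ and $Y$ to share a component of $F[\mu]$ both must be present at that level, so their altitudes are $\le\mu$; being flat and connected, all of $X\cup Y$ then lies in a single component of $F[\mu]$, giving $F(x',y')\le\mu$, while the reverse inequality is the definition of $F(X,Y)$ as a minimum. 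Thus the connection value between a point of $X$ and a point of $Y$ is independent of the chosen points and equals $F(X,Y)$.

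For the forward implication I assume $F(v)=F(x,y)$ for all edges. Condition (ii) is then immediate: if $v=\{x,y\}$ joins distinct minima $X,Y$, the structural fact gives $F(x,y)=F(X,Y)$, whence $F(v)=F(X,Y)$. For (i) I would check the three clauses of \Xprop{pr:segm}. That $\Minima{F}$ is induced by $E(\Minima{F})$ was already observed in the text. Each component $X$ of $\Minima{F}$ is a minimum, i.e. a flat component of some section $F[\lambda]$; if $x,y\in V(X)$ and $v=\{x,y\}\in E$, then $x$ and $y$ are joined inside $X$ at level $\lambda$, so $F(v)=F(x,y)\le\lambda$, forcing $v\in F[\lambda]$ and hence $v\in E(X)$, which is exactly $X=\phi(X)$. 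The remaining clause, that $\Minima{F}$ is spanning for $V$, is the delicate one and I defer it.

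For the converse I assume (i) and (ii). As $\Minima{F}$ is a segmentation it is spanning, so both endpoints of any edge $v=\{x,y\}$ lie in minima, say $x\in V(X)$ and $y\in V(Y)$. If $X=Y$, then $X=\phi(X)$ gives $v\in E(X)$, so $F(v)$ is the altitude of $X$, which equals $F(x,y)$ because inside a minimum the points are joined at that altitude and nowhere lower. If $X\ne Y$, then (ii) together with the structural fact yields $F(v)=F(X,Y)=F(x,y)$. In either case $F(v)=F(x,y)$, so $F$ has no W-destructible edge and is a topological watershed.

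The hard part is the spanning clause of the forward direction, namely that in a topological watershed every vertex belongs to a minimum. The natural attempt is to take, for a vertex $x$, its lowest incident edge $e=\{x,z\}$ of altitude $\mu$ and to argue that the component of $F[\mu]$ containing $x$ is flat, hence a minimum. The obstruction is that this component may carry strictly lower edges situated away from $x$: since every edge at $x$ has altitude $\ge\mu$, any alternative path from $x$ opens at altitude $\ge\mu$, so $F(x,z)=\mu=F(e)$ and \Xprop{pr:W-destructible} yields no contradiction by itself. I therefore expect this step to require genuinely exploiting the edge framework (the thinness of watersheds) rather than the pointwise identity in isolation, and it is here that the real content of the theorem concentrates.
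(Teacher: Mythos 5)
Your converse and your forward proof of (ii) coincide with the paper's own argument: both directions are reduced, via \Xprop{pr:W-destructible}, to the pointwise identity $F(v)=F(x,y)$ on all edges, and your ``structural fact'' that $F(x',y')=F(X,Y)$ for every $x'\in V(X)$, $y'\in V(Y)$ with $X,Y$ distinct minima is precisely the step the paper leaves implicit when it passes between $F(v)=F(x,y)$ and $F(v)=F(X,Y)$; making it explicit is an improvement. Your verification of clause (iii) of \Xprop{pr:segm} (no edge of $\Bar{E(\Minima{F})}$ can have both endpoints in one component of $\Minima{F}$) is also the paper's argument, run in the contrapositive.

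The deferred spanning clause is a genuine gap in your proposal, and your suspicion that it cannot be closed from \Xprop{pr:W-destructible} alone is correct. Take $V=\{a,b,c\}$, $E=\{\{a,b\},\{b,c\}\}$ with $F(\{a,b\})=1$ and $F(\{b,c\})=0$. The vertex $a$ first appears in the cross-section at level $1$, so $F(a,b)=1=F(\{a,b\})$, and $F(b,c)=0=F(\{b,c\})$; by \Xprop{pr:W-destructible} no edge is W-destructible, so $F$ passes the pointwise test. Yet the unique minimum is the graph induced by $\{\{b,c\}\}$, so $a$ lies in no minimum and $\Minima{F}$ is not a segmentation: condition (i) genuinely does not follow from the identity $F(v)=F(x,y)$, exactly the obstruction you described (a component of $F[\mu]$ carrying strictly lower edges away from $x$). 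No blind completion along your proposed lines exists.

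What you could not have known is that the paper's own proof does not close this gap either: it reads ``$\Minima{F}$ is not a segmentation'' as ``there exists $u=\{x,y\}\in\Bar{E(\Minima{F})}$ with $x$ and $y$ in the same component of $\Minima{F}$'', silently discarding the failure mode in which an endpoint of $u$ belongs to \emph{no} minimum, i.e.\ the non-spanning case of \Xprop{pr:segm}. The repair must return to the raw definition of W-simplicity, as you predicted: in the example above, adding $\{a,b\}$ to any cross-section containing $b$ attaches a pendant vertex and leaves the number of connected components unchanged, so $\{a,b\}$ is W-simple at every level below $1$ and can be lowered to $0$ by a topological thinning; under that direct reading the example is not a topological watershed and the theorem survives, but then \Xprop{pr:W-destructible}, as transcribed to edge-weighted graphs, is itself inexact for edges one of whose endpoints appears in no strictly lower section. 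Proving spanning amounts to showing that a vertex outside $\Minima{F}$ always admits, along a descent towards lower sections, such a pendant-type destructible edge. So everything you proved is sound and matches the paper, and the one step you flagged is not merely hard: it is a real lacuna in the published proof, whose resolution requires the refinement of destructibility you anticipated rather than the connection-value identity.
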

\begin{proof}
Let $F$ be a topological watershed. Thus there does not exist any edge
W-destructible for $F$.
\begin{itemize}
\item Suppose that $\Minima{F}$ is not a segmentation of $G$. That
means that there exists an edge $u=\{x,y\}\in\Bar{E(\Minima{F})}$ such
that $x$ and $y$ belongs to the same connected component $X$ of
$\Minima{F}$. That implies that $F(u)>F(X)=F(x,y)$.
By Pr.~\ref{pr:W-destructible}, that implies that the edge $u$ is
W-destructible for $F$, a contradiction. Thus $\Minima{F}$ is a
segmentation of $G$.
\item As $F$ is a topological watershed, we have by
Pr.~\ref{pr:W-destructible} that for any $v=\{x,y\}\in E$,
$F(x,y)=F(v)$. In particular, if there exist $X$ and $Y$ in
$\Minima{F}$, $X\neq Y$, such that $x\in V(X)$ and $y\in V(Y)$, then
$F(v)=F(X,Y)$.
\end{itemize}
Conversely, suppose that $F$ satisfies (i) and (ii). By
Pr.~\ref{pr:W-destructible}, for any edge $v=\{x,y\}\in
E(\Minima{F})$, $F(v)=F(x,y)=F(X)$, and thus $\Minima{F}$ does not
contain any edge W-destructible for $F$. As, by (i), $\Minima{F}$ is a
segmentation, any edge $v\not\in E(\Minima{F})$ satisfies (ii). By
Pr.~\ref{pr:W-destructible}, such an edge $v$ is not W-destructible.
Thus $F$ contains no W-destructible edge and is a topological
watershed.  $\qed$
\end{proof}
Note that if $F$ is a topological watershed, then for any edge
$v=\{x,y\}$ such that there exists $X\in\Minima{F}$ with $x\in V(X)$
and $y\in V(X)$, we have $F(v)=F(X)$. 

\section{Hierarchies and ultrametric distances}
\label{sec:hier}
Let $\Omega$ be a finite set. A {\em hierarchy} $H$ on $\Omega$ is a
set of parts of $\Omega$ such that
\begin{enumerate}
\item[(i)] $\Omega\in H$
\item[(ii)] for every $\omega\in\Omega, \{\omega\}\in H$
\item[(iii)] for each pair $(h,h')\in H^2$, $h\cap
h'\neq\emptyset\implies$ $h\subset h'$ or $h'\subset h$.
\end{enumerate}
The (iii) can be expressed by saying that two elements of a hierarchy
are either disjoint or nested.

An {\em indexed hierarchy} on $\Omega$ is a pair $(H,\mu)$, where $H$
denotes a given hierarchy on $\Omega$ and $\mu$ is a positive function,
defined on $H$ and satisfying the following conditions:
\begin{enumerate}
\item[(i)] $\mu(h)=0$ if and only if $h$ is reduced to a singleton of
$\Omega$;
\item[(ii)] if $h\subset h'$, then $\mu(h)<\mu(h')$.
\end{enumerate}

\begin{figure}[htbp]
  \begin{center}
    \begin{tabular}{ccc}
      \subfigure[Hierarchy]{
	\includegraphics[height=.3\columnwidth]{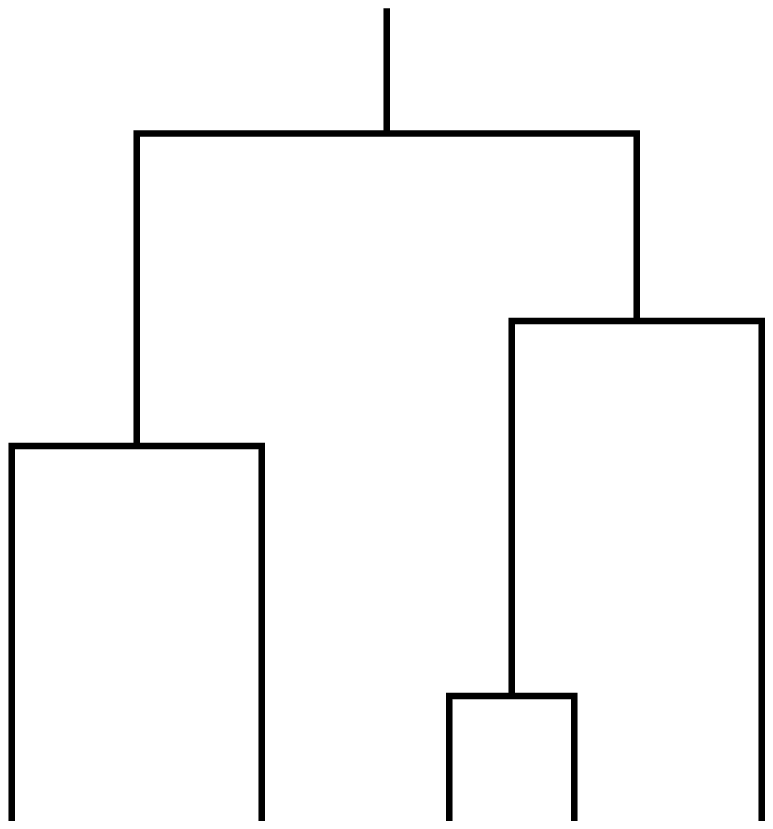}
      }
      &
      ~~~~~~~
      &
      \subfigure[Indexed hierarchy]{
	\includegraphics[height=.3\columnwidth]{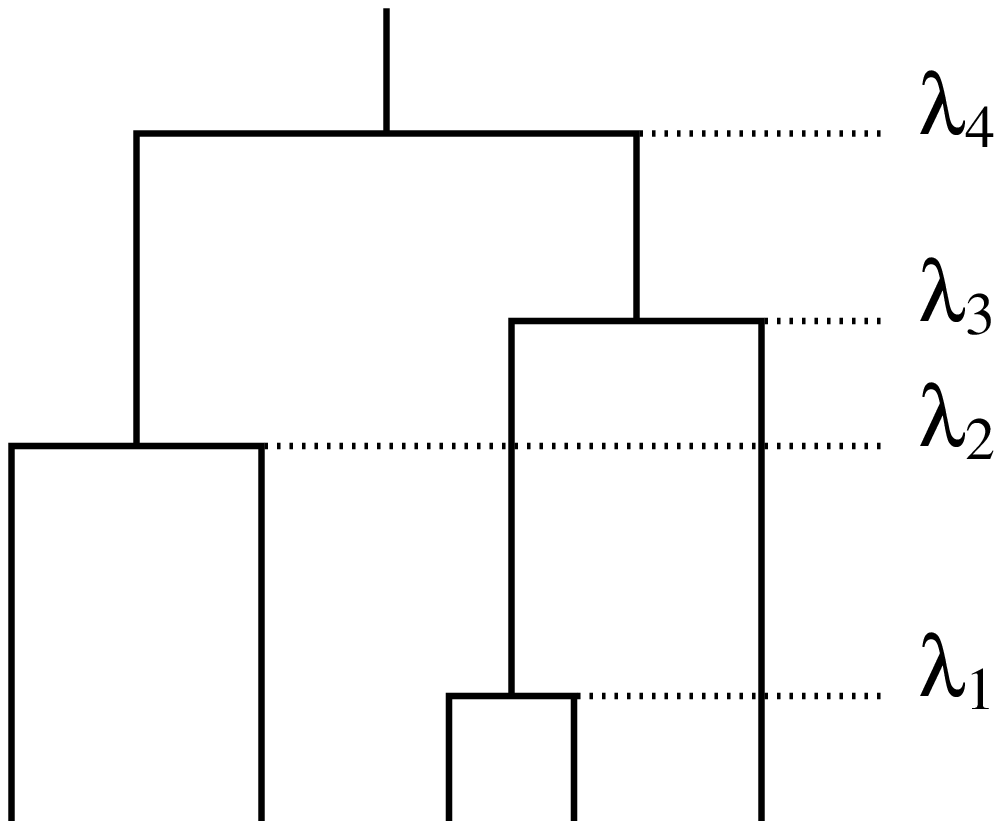}
      }
    \end{tabular}
  \end{center}
  \caption{Hierarchical trees. We have $\lambda_1 < 
  \lambda_3 < \lambda_4$ and $\lambda_2 < \lambda_4$.}
 \label{fig:dendrogram}
\end{figure}

Hierarchy are usually represented using a special type of tree called
{\em dendrograms} (Fig.~\ref{fig:dendrogram}). The leafs of the tree
are the data that are to be classified, while the branching point (the
junctions) are the agglomeration of all the data that are below that
point. In that sense, one can see that, for a given $h$, $\mu(h)$
corresponds to the ``level'' of aggregation, where the elements of $h$
have been aggregated for the first time.

Recall that a {\em dissimilarity on $\Omega$} is a map $d$ from the
Cartesian product $\Omega\times\Omega$ to the set $\mathbb{R}$ of real
numbers such that: $d(\omega_1, \omega_2) = d(\omega_2, \omega_1)$,
$d(\omega_1, \omega_1) = 0$ and $d(\omega_1, \omega_2) \geq 0$ for all
$\omega_1,\omega_2,\omega_3\in\Omega$. The dissimilarity $d$ is said
to be proper whenever $d(\omega_1, \omega_2) = 0$ implies $\omega_1 =
\omega_2$.

A {\em distance $d$ (on $\Omega$)} is a proper dissimilarity that
obeys the triangular inequality $d(\omega_1,\omega_2)\leq
d(\omega_1,\omega_3) + d(\omega_3,\omega_2)$ where $\omega_1,\omega_2$
and $\omega_3$ are any three points of the space.

The ultrametric inequality~\cite{Krasner1944} is stronger than the
triangular inequality.  An {\em ultrametric distance (on $\Omega$)} is
a proper dissimilarity such that, for all
$\omega_1,\omega_2,\omega_3\in\Omega$, $d(\omega_1,\omega_2)\leq
\max(d(\omega_1,\omega_3), d(\omega_2,\omega_3))$

Note that any given partition ($\Omega_i$) of the set $\Omega$ induces
a large number of trivial ultrametric distances:
$d(\omega_1,\omega_1)=0, d(\omega_1,\omega_2)=1$ if
$\omega_1\in\Omega_i$, $\omega_2\in\Omega_j$, $i\neq j$, and
$d(\omega_1,\omega_2)=a$ if $i=j$, $0<a<1$. The general connection
between indexed hierarchies and ultrametric distances goes back to
Benz\'ecri~\cite{Benzecri73} and Johnson~\cite{Johnson67}. They proved
there is a bijection between indexed hierarchies and ultrametric
distances, both defined on the same set. Indeed, associated with each
indexed hierarchy $(H,\mu)$ on $\Omega$ is the following ultrametric
distance:
\begin{equation}
\label{eq:ultrametric}
d(\omega_1,\omega_2) = \min\{\mu(h) \st h\in H, \omega_1\in h,
\omega_2\in h \}.
\end{equation}
In other words, the distance $d(\omega_1,\omega_2)$ between two
elements $\omega_1$ and $\omega_2$ in $\Omega$ is given by the
smallest element in $H$ which contains both $\omega_1$ and
$\omega_2$. Conversely, each ultrametric distance $d$ is associated
with one and only one indexed hierarchy.

Observe the similarity between Eq.~\ref{eq:ultrametric} and
Eq.~\ref{eq:connection}. Indeed, connection value is an ultrametric
distance on $V$ whenever $F>0$. More precisely, we can state the
following property, whose proof is a simple consequence of
Eq.\ref{eq:ultrametric} and Eq.~\ref{eq:connection}.
\begin{proper}
\label{pr:Fultra}
Let $F\in\Fset{E}$. Then $F(X,Y)$ is an ultrametric distance on $\Minima{F}$.
If furthemore, $F>0$, then $F(x,y)$ is an ultrametric distance on $V$.
\end{proper}

Let $\Psi$ be the application that associates to any $F\in{\cal F}$
the map $\Psi(F)$ such that for any edge $\{x,y\}\in E$,
$\Psi(F)(\{x,y\})=F(x,y)$. It is straightforward to see that
$\Psi(F)\leq F$, that $\Psi(\Psi(F))=\Psi(F)$ and that if $F'\leq F$,
$\Psi(F')\leq\Psi(F)$. Thus $\Psi$ is an opening on the lattice
$({\cal F}, \leq)$~\cite{Leclerc81}. We observe that the subset of
strictly positive maps that are defined on the complete graph
$(V,V\times V)$ and that are open with respect to $\Psi$ is the set of
ultrametric distances on $V$.  The mapping $\Psi$ is known under
several names, including ``subdominant ultrametric'' and ``ultrametric
opening''. It is well known that $\Psi$ is associated to the simplest
method for hierarchical classification called single linkage
clustering~\cite{JardineSibson71,GR69}, closely related to Kruskal's
algorithm~\cite{Kruskal56} for computing a minimum spanning tree.


Thanks to Th.~\ref{th:wtopocarac}, we observe that if $F$ is a
topological watershed, then $\Psi(F)=F$. However, an ultrametric
distance $d$ may have plateaus, and thus the weighted complete graph
$(V,V\times V, d)$ is not always a topological
watershed. Nevertheless, those results underline that topological
watersheds are related to hierarchical classification, but not yet to
hierarchical segmentation; the study of such relations is the
subject of the rest of the paper.

\section{Hierarchical segmentations, saliency and ultrametric watersheds}
\label{sec:hierseg}
Informally, a hierarchical segmentation is a hierarchy of connected
regions. However, in our framework, if a segmentation induces a partition,
the converse is not true (see Pr.~\ref{pr:segm}); thus, as the union
of two disjoint connected subgraphs of $G$ is not a connected subgraph
of $G$, the formal definition is slightly more involved.

A {\em hierarchical segmentation (on $G$)} is an indexed hierarchy
$(H,\mu)$ on the set of regions of a segmentation $S$ of $G$, such
that for any $h\in H$, $\phi(\cup_{X\in h} X)$ is connected ($\phi$
being the edge-closing defined in section~\ref{sec:BasicNotions}).

For any $\lambda\geq 0$, we denote by $H[\lambda]$ the graph induced
by $\{\phi(\cup_{X\in h}X)| h\in H, \mu(h)\leq\lambda\}$. The following
property is an easy consequence of the definition of a hierarchical
segmentation.
\begin{proper}
\label{pr:hierseg}
Let $(H,\mu)$ be a  hierarchical segmentation. Then for any
$\lambda\geq 0$, the graph $H[\lambda]$ is a segmentation of~$G$.
\end{proper}
\begin{proof}
Let $(H,\mu)$ be a hierarchical segmentation, and let $\lambda\geq 0$.
Suppose that $H[\lambda]$ is not a segmentation, {\em i.e.} that
$\Bar{H[\lambda]}$ is not a cut. Then there exists a connected
component $X$ of $H[\lambda]$ and $v=\{x,y\}\in\Bar{H[\lambda]}$ such
that $x\in X$ and $y\in X$. That implies that $\phi(X)\neq X$, a
contradiction with the definition of a hierarchical segmentation.
$\qed$
\end{proof}

Prop.~\ref{pr:Fultra} implies that the connection value defines a
hierarchy on the set of minima of $F$.  If $F$ is a topological
watershed, then by Th.~\ref{th:wtopocarac}, $\Minima{F}$ is a
segmentation of $G$, and thus from any topological watershed, one can
infer a hierarchical segmentation. However, $F[\lambda]$ is not always
a segmentation: if there exists a minimum $X$ of $F$ such that
$F(X)=\lambda_0>0$, for any $\lambda_1<\lambda_0$, $F[\lambda_1]$
contains at least two connected components $X_1$ and $X_2$ such that
$\Card{V(X_1)}=\Card{V(X_2)}=1$. Note that the value of $F$ on the
minima of $F$ is not related to the position of the divide nor to the
associated hierarchy of minima/segmentations. This leads us to
introduce the following definition.

\begin{defn}
\label{def:ultraw}
A map $F\in\Fset{E}$ is an {\em ultrametric watershed} if
$F$ is a topological watershed, and if furthemore, for any
$X\in\Minima{F}$, $F(X)=0$.
\end{defn}
Definition~\ref{def:ultraw} directly yields to the nice following
property, illustrated in~Fig.~\ref{fig:hierarchy}, that states that
any level of an ultrametric watershed is a segmentation and
conversely.
\begin{proper}
\label{pr:watultra}
A map $F$ is an ultrametric watershed if and only if for all
$\lambda\geq 0$, $F[\lambda]$ is a segmentation of $G$.
\end{proper}
\begin{proof}
Suppose that $F$ is an ultrametric watershed, then it is a topological
watershed, and by Th.~\ref{th:wtopocarac}.(i), $\Minima{F}$ is a
segmentation of $G$. But as the value of $F$ on its minima is null,
then any cross-section of $F$ is a segmentation of $G$.

Conversely, if for any $\lambda\geq 0$, $F[\lambda]$ is a segmentation of
$G$, then $F$ contains no W-destructible edge for $F$. Indeed, suppose
that there exists an edge $v$ W-destructible for $F$, let
$\lambda=F(v)$, then $v$ is W-simple for $F[\lambda]$. In other words,
adding $v$ to $F[\lambda]$ does not change the number of connected
components of $F[\lambda]$. This is a contradiction with the definition
of a segmentation. Hence $F$ is a topological watershed. Furthermore,
as $F[\lambda]$ is a segmentation for any $\lambda \geq 0$, the value of
$F$ on its minima is null, hence $F$ is an ultrametric watershed.
$\qed$
\end{proof}
\begin{figure*}[htbp]
\begin{center}
  \begin{tabular}{ccc}
    \subfigure[Ultrametric watershed $F$]{
      \includegraphics[width=.4\textwidth]{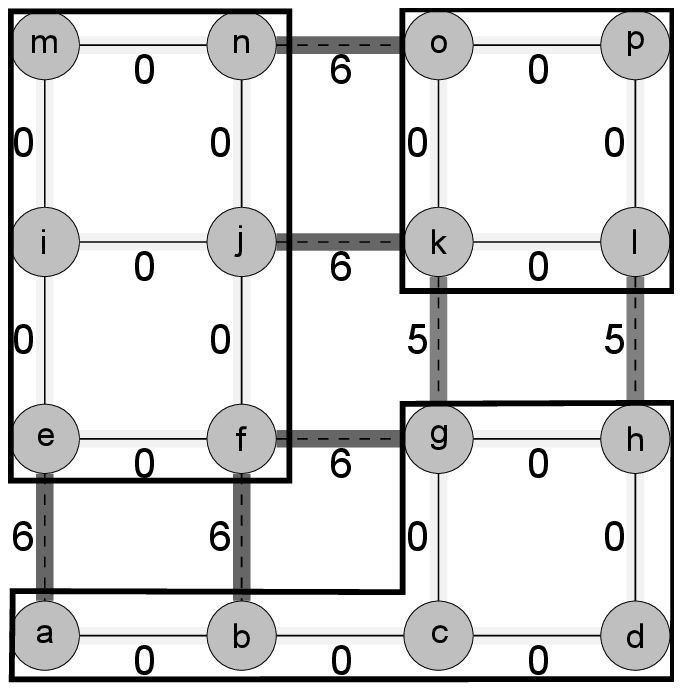}
    }
    &
      ~~~~~~~
    &
    \subfigure[cross-section of $F$ at level 5]{
      \includegraphics[width=.4\textwidth]{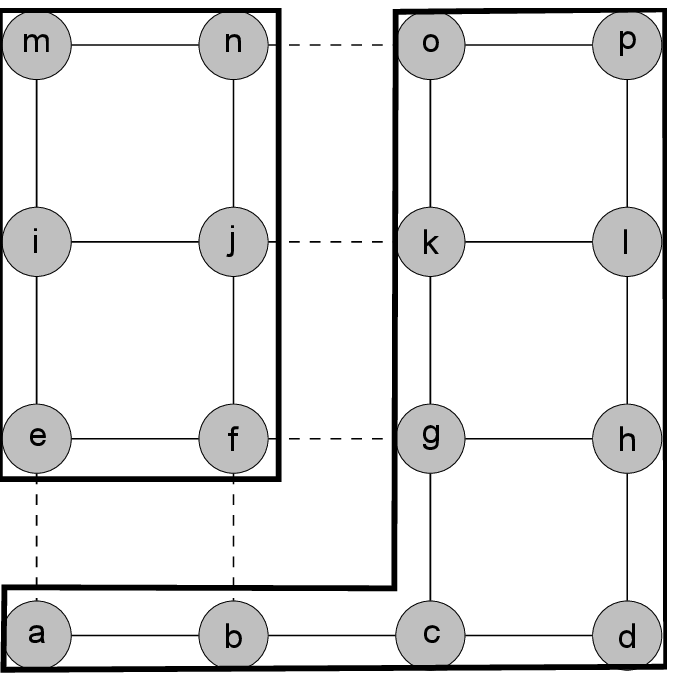}
    }
  \end{tabular}
\end{center}
 \caption{An example of an ultrametric watershed $F$ and a
 cross-section of $F$. In (a), a box is drawn around each 
 one of the minima of $F$, and in (b), a box is drawn 
 around each one of the connected components of the 
 cross-section of $F$. Remark that the minima of $F$, 
 as well as any cross-section of $F$,
 form a segmentation of the graph.}
 \label{fig:hierarchy}
\end{figure*}

By definition of a hierarchy, two elements of $H$ are either disjoint
or nested. If furthermore $(H,\mu)$ is a
hierarchical segmentation, the graphs $\Bar{E(H[\lambda])}$ can be
stacked to form a map.  We call {\em saliency map}~\cite{NS96} the
result of such a stacking, {\em i.e.} a saliency map is a map $F$ such
that there exists $(H,\mu)$ a hierarchical segmentation with
$F(v)=\min\{\lambda| v\in E(H[\lambda])\}$.

\begin{proper}
\label{pr:saliency}
A map $F$ is a saliency map if and only if $F$ is an ultrametric
watershed.
\end{proper}
\begin{proof}
If $F$ is a saliency map, then there exists $(H,\mu)$ a hierarchical
segmentation such that $F(v)=\min\{\lambda \st v\in E(H(\lambda))\}$.
But $F[\lambda] = \{v\st F(v)\leq \lambda\}=\{v\st \min\{\lambda \st
v\in E(H[\lambda])\}\leq\lambda\} = H[\lambda]$ and thus by
Pr.~\ref{pr:hierseg}, for any $\lambda\geq 0$, $F[\lambda]$ is a
segmentation. By Pr.~\ref{pr:watultra}, $F$ is an ultrametric
watershed.




Conversely, let $F$ be an ultrametric watershed, and let ${\cal
C}^\star(F)$ be the component tree of $F$. We build the pair $(H,\mu)$
in the following way: $h\in H$ if and only if there exists
$[\lambda,C]\in {\cal C}^\star(F)$ such that $h=\{X_i \st X_i \in
\Minima{F} \mbox{~and~} X_i\subset C\}$; in that case, we set
$\mu(h)=\lambda$.

Then $(H,\mu)$ is a hierarchical segmentation. Indeed, let $h$
and $h'$ two elements of $H$ such that there exists $[\lambda,X]$ and
$[\lambda',Y]$ in ${\cal C}^\star(F)$ with $h=\{X_0,\ldots,X_p\st
X_i\in\Minima{F} \mbox{~and~} X_i\subset X\}$ and with
$h'=\{Y_0,\ldots,Y_n\st Y_i\in\Minima{F} \mbox{~and~} Y_i\subset Y\}$.

\begin{itemize}
\item by Th.~\ref{th:wtopocarac}, $\Minima{F}$ is a segmentation,
\item we set $\lambda_{\max}=\max\{F(v)\st v\in E\}$, it easy to see
that $[\lambda_{\max},(V,E)]\in{\cal C}^\star(F)$, thus
$\{\cup_{X\in\Minima{F}}\{X\}\}\in H$;
\item any minimum $X$ of $F$ is such that $[0,X]$ belongs to ${\cal
C}^\star(F)$, thus $\{X\}\in H$;
\item furthermore, $h$ and $h'$ are either disjoint or nested:
\begin{itemize}
\item either disjoint: suppose that $X\cap Y=\emptyset$, in that case
$h$ and $h'$ are also disjoint;
\item or nested: suppose that $X\cap Y\neq\emptyset$, then as $X$ and
$Y$ are two connected components of the cross-sections of $F$, either
$X\subset Y$ or $Y\subset X$; suppose that $X\subset Y$; by reordering
the $X_i$ and the $Y_i$, that means that $X_i=Y_i$ for $i=0,\ldots,p$,
$p<n$. In other words, $h\subset h'$.
\end{itemize}
\item by construction, $\mu(h)=0$ if and only if there exists
$X\in\Minima{F}$ such that $h=\{X\}$;
\item If $h\subset h'$, then $\mu(h)<\mu(h')$, because in that case,
$X\subset Y$ and thus $\lambda<\lambda'$.
\end{itemize} 
Thus $(H,\mu)$ is a indexed hierarchy on $\Minima{F}$. 

Furthermore, $\phi(\cup_{X_i\in h}X_i)$ is connected: more precisely,
as $\Minima{F}$ is a segmentation, and as $X$ is a connected component
of the cross-sections of $F$, we have $\phi(\cup_{X_i\in h}X_i)=X$.
Thus $(H,\mu)$ is a hierarchical segmentation.  $\qed$


\end{proof}


The following theorem, a corrolary of Prop.~\ref{pr:saliency}, states
the equivalence between hierarchical segmentations and ultrametric
watersheds. It is the main result of this paper.
\begin{theorem}
\label{th:onetoone}
There exists a bijection between the set of hierarchical
segmen\-tations on $G$ and the set of ultrametric watersheds
on~$G$.
\end{theorem}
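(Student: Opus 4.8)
The plan is to exhibit the saliency-map construction as the desired bijection and to read off its two directions from Proposition~\ref{pr:saliency}. First I would define the map $\Phi$ sending a hierarchical segmentation $(H,\mu)$ to its saliency map $F$, given by $F(v)=\min\{\lambda \st v\in E(H[\lambda])\}$. By the forward direction of Proposition~\ref{pr:saliency}, this $F$ is an ultrametric watershed, so $\Phi$ is a well-defined map from the set of hierarchical segmentations on $G$ into the set of ultrametric watersheds on $G$. Surjectivity is then immediate from the converse direction of Proposition~\ref{pr:saliency}: every ultrametric watershed is a saliency map, i.e.\ equals $\Phi(H,\mu)$ for the hierarchical segmentation $(H,\mu)$ built from the component tree ${\cal C}^\star(F)$ in that proof. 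Thus only injectivity remains, and that is where I expect the real work to lie.

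For injectivity the idea is to recover $(H,\mu)$ from $F$ canonically. I would first observe that the finest level recovers the underlying set: since an ultrametric watershed is null on its minima, $F[0]=\Minima{F}$, and $H[0]=\Minima{F}$ as well, so the region set $\Omega$ of the underlying segmentation is exactly the set of minima of $F$. By Proposition~\ref{pr:Fultra}, the connection value $F(X,Y)$ is an ultrametric distance on $\Minima{F}$. The key link is that, for $F=\Phi(H,\mu)$, this connection value between two minima coincides with the classical ultrametric $\min\{\mu(h)\st X\in h, Y\in h\}$ attached to $(H,\mu)$ through Eq.~\ref{eq:ultrametric}; this follows because the cross-sections satisfy $F[\lambda]=H[\lambda]$ at every level, so two minima lie in one component of $F[\lambda]$ precisely when some $h$ with $\mu(h)\le\lambda$ contains both.

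By the classical bijection between indexed hierarchies and ultrametric distances (Benz\'ecri~\cite{Benzecri73}, Johnson~\cite{Johnson67}), the ultrametric distance $F(X,Y)$ on $\Minima{F}$ determines a unique indexed hierarchy on $\Minima{F}$. The point I would need to verify carefully is that this canonically associated indexed hierarchy is precisely $(H,\mu)$: namely that $\mu(h)$ equals the connection value at which the minima belonging to $h$ first merge into a single component of a cross-section, and that the members of $H$ are exactly the sets of minima collected under the nodes $[\lambda,C]$ of ${\cal C}^\star(F)$. Granting this, two hierarchical segmentations with the same saliency map induce the same ultrametric on their (common) minima, hence coincide by the classical uniqueness, which yields injectivity.

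The main obstacle, therefore, is not the existence of the two directions — both are delivered by Proposition~\ref{pr:saliency} — but the matching of the two natural reconstructions of $(H,\mu)$ from $F$: the component-tree description used in the proof of Proposition~\ref{pr:saliency} and the ultrametric-distance description coming from Proposition~\ref{pr:Fultra}, together with the check that the index $\mu$ is faithfully encoded as the merging levels read off the edge weights of the saliency map. Once these two descriptions are identified with each other and with the original $(H,\mu)$, surjectivity and injectivity together show that $\Phi$ is a bijection, which is the assertion of the theorem.
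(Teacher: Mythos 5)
Your proposal is correct and shares the paper's backbone---both directions of the correspondence are obtained from Pr.~\ref{pr:saliency} via the saliency-map construction---but you treat injectivity by a genuinely different (and more explicit) route. The paper's own proof is two sentences: every ultrametric watershed is a saliency map (surjectivity of the stacking map $\Phi$), and every hierarchical segmentation has a unique saliency map (well-definedness of $\Phi$); injectivity is left implicit, tacitly relying on the component-tree reconstruction inside the proof of Pr.~\ref{pr:saliency} being inverse to the stacking. You instead prove injectivity directly: recover $\Omega$ as $\Minima{F}=F[0]=H[0]$ (which works because $\mu(h)=0$ iff $h$ is a singleton, $\phi(X)=X$ for each region $X$, and every component of $F[0]$ is trivially a minimum since $F\geq 0$); identify the connection value $F(X,Y)$ of Pr.~\ref{pr:Fultra} with the hierarchy's ultrametric of Eq.~\ref{eq:ultrametric} through $F[\lambda]=H[\lambda]$, noting that the components of $H[\lambda]$ are exactly the graphs $\phi(\cup_{X\in h}X)$ for the maximal $h$ with $\mu(h)\leq\lambda$, each connected by definition of a hierarchical segmentation; and then invoke the classical Benz\'ecri--Johnson uniqueness of the indexed hierarchy associated with an ultrametric distance. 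One remark: the step you flag as the ``main obstacle''---matching the component-tree reconstruction of Pr.~\ref{pr:saliency} with the ultrametric-distance description---is not actually needed for the theorem: once $d(X,Y)=F(X,Y)$ is established, classical uniqueness already forces two hierarchical segmentations with the same saliency map to coincide, which together with surjectivity gives the bijection; the matching would only be required if you wanted to exhibit the inverse map explicitly as the component-tree construction. So your route buys an airtight injectivity argument (arguably filling the gap the paper's terse proof glosses over), at the modest cost of importing the classical hierarchies/ultrametrics bijection, which the paper cites only as background.
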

\begin{proof}
By Pr.~\ref{pr:saliency}, any ultrametric watershed is a saliency map,
thus for any ultrametric watershed, there exists an associated
hierarchical segmentation. 

Conversely, for any hierarchical segmentation, there exists a
unique saliency map, thus by Pr.~\ref{pr:saliency}, a unique
ultrametric watershed.  $\qed$
\end{proof}
Th.~\ref{th:onetoone} states that any hierarchical segmentation can be
represented by an ultrametric watershed. Such a representation can
easily be built by stacking the border of the regions of the hierarchy
(see Pr.~\ref{pr:hierseg} and~\ref{pr:saliency}, but
also~\cite{NS96,GuiguesCM06,Arbelaez-Cohen-2006}). More interestingly,
Th.~\ref{th:onetoone} also states that any ultrametric watershed
yields a hierarchical segmentation. As the definition of topological
watershed is constructive, 
this is an incentive to searching for algorithmic schemes that
directly compute the whole hierarchy. An exemple of such an
application of Th.~\ref{th:onetoone} is developped in
section~\ref{sec:constcon}.

As there exists a one-to-one correspondence between the set of indexed
hierarchies and the set of ultrametric distances, it is interesting to
search if there exists a similar property for the set of hierarchical
segmentations. Let $d$ be the ultrametric distance associated to a
hierarchical segmentation $(H,\mu)$. We call {\em ultrametric contour
map (associated to $(H,\mu)$)} the map $d_E$ such that:
\begin{enumerate}
\item for any edge $v\in E(H[0])$, then $d_E(v)=0$;
\item for any edge $v=\{x,y\}\in\Bar{E(H[0])}$, $d_E(v)=d(X,Y)$ where $X$
(resp. $Y$) is the connected component of $H[0]$ that contains $x$
(resp. $y$).
\end{enumerate}
\begin{proper}
\label{pr:restriction}
A map $F$ is an ultrametric watershed if and only if $F$ is the
ultrametric contour map associated to a
hierarchical segmentation.
\end{proper}
\begin{proof}
Let $F$ be an ultrametric watershed. By Pr.~\ref{pr:Fultra}, $F(X,Y)$
is an ultrametric distance on $\Minima{F}$. By Pr.~\ref{pr:watultra},
$F$ is a saliency map, hence there exists a hierarchical segmentation
$(H,\mu)$ such that $F(v)=\min\{\lambda \st v\in E(H[\lambda])\}$. In
particular,
\begin{enumerate}
\item for any edge $v\in E(H[0])$, then $F(v)=0$;
\item for any edge $v=\{x,y\}\in\Bar{E(H[0])}$, $F(v)=F(X,Y)$ where $X$
(resp. $Y$) is the connected component of $H[0]$ that contains $x$
(resp. $y$).
\end{enumerate}
Hence $F$ is an ultrametric contour map associated to a hierarchical
segmentation.

Conversely, let $d_E$ be an ultrametric contour map associated to a
hierarchical segmentation $(H,\mu)$. Then by Th.~\ref{th:wtopocarac},
$d_E$ is a topological watershed. Indeed, as $H$ is a hierarchical
segmentation, $H[0]=\Minima{d_E}$ is a segmentation of $G$, and
furthermore for any edge $v=\{x,y\}$, if there exist $X$ and $Y$ in
$\Minima{d_E}$, $X\neq Y$, such that $x\in V(X)$ and $y\in V(Y)$, then
$d_E(v)=d(X,Y)=d_E(X,Y)$. 

Moreover, for any $v\in\Minima{d_E}$,
$d_E(v)=0$, hence $d_E$ is an ultrametric watershed.
$\qed$
\end{proof}



\section{How to use the ultrametric watershed in practice: the example of constrained connectivity}
\label{sec:constcon}
Let us illustrate the usefulness of the proposed framework by
providing an original way of revisiting constrained connectivity
hierarchical segmentations~\cite{Soille2007}, whi\-ch leads to efficient
algorithms. This section is meant as an illustration of our framework,
and, although it is self-sufficient, technical details can be somewhat
difficult to grasp for someone not familiar with the watershed-based
segmentation framework of mathematical
morphology~\cite{meyer.najman:segmentation}. We plan to provide more
information in an extended version of that section.

In this section, we propose to compute an ultrametric watershed that
corresponds to the constrained connectivity hierarchy of a given
image. We show that, in the framework of edge-weighted segmentations,
constrained connectivity can be thought as a classical morphological
scheme, that consists of:
\begin{itemize}
\item computing a gradient;
\item filtering this gradient by attribute filtering;
\item computing a watershed of the filtered gradient.
\end{itemize}
We first discuss how to represent hierarchical segmentations, then
give the formal definition of constrained connectivity, and then we
move on to using ultrametric watersheds for computing such a
hierarchy. In the last part of the section, we will show some other
examples related to the classical watershed-based segmentation schemes.

\subsection{Representations of hierarchical segmentations}
\label{sec:representation}
As we mentionned in section~\ref{sec:edgesegmentation}, one of the
motivations of this work is to be able to imbed the hierarchical
segmentation in a discrete space in a way that can be represented.
Until now, we have used the classical representation of a graph for
all of our examples.

For the purpose of visualisation, it is enough to represent the image
by a grid of double resolution. For example, with the usual four
connectivity in 2D, each pixel will be the center of a 3x3
neighborhood, and if two pixels share an edge, the two corresponding
neighborhoods will share 3 elements corresponding to that edge. The
representation of an ultrametric watershed with double resolution can
be seen in Fig.~\ref{fig:representation}.a.

\begin{figure*}[htbp]
\begin{center}
  \begin{tabular}{ccc}
    \subfigure[Ultrametric watershed $F$ seen with double resolution]{
      \includegraphics[width=.35\textwidth]{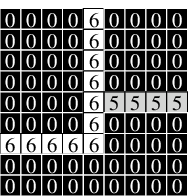}
    }
    &
      ~~~~~~~
    &
    \subfigure[Ultrametric watershed $F$ seen in the Khalimski grid]{
      \includegraphics[width=.35\textwidth]{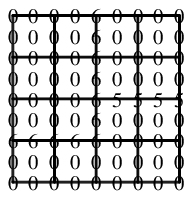}
    }
  \end{tabular}
\end{center}
 \caption{Two possible representations of the ultrametric watershed
 $F$ of Fig.~\ref{fig:hierarchy}.a.}
 \label{fig:representation}
\end{figure*}

\noindent
{\bf Remark: }
A convenient interpretation of the doubling of the
resolution can be given in the framework of cubical complexes, that
have been popularized in computer vision by E.~Khalimski~\cite{KKM90},
but can be found earlier in the literature, originally in the work of
P.S.~Alexandroff~\cite{AleHopf37,Ale37}.

Intuitively, a cubical complexe can be seen as a set of elements of
various dimensions (cubes, squares, segments and points) with specific
rules between those elements. The traditional vision of a numerical
image as being composed of pixels (elementary squares) in 2D or voxels
(elementary cubes) in 3D leads to a natural link between numerical
images and complexes. The representation of an ultrametric watershed
in the Khalimski grid can be seen in Fig.~\ref{fig:representation}.b.

The framework of complexes is useful in the study of topological
properties~\cite{Ber07}.  It is indeed possible to provide a formal
treatment of watersheds in complexes, which we will not do in this
paper. The interested reader can have a look at~\cite{CBCN09iwcia}.



\subsection{Constrained connectivity}
This section is a reminder of P. Soille's approach~\cite{Soille2007},
using the same notations. 

Let $f$ be an application from $V$ to $\mathbb{R}$, {\em i.e.}  an
image with values on the points.  For any set of points $U\subseteq
V$, we set
\begin{equation}
R_f(U) = \sup\{f(x)-f(y)| x,y\in U\}.
\end{equation}
The number $R_f(U)$ is called the {\em range} of $U$ (for $f$).

For any $x\in V$, and for any $\alpha\geq 0$, define~\cite{NMI-1979}
the $\alpha$-connected component $\alpha$-CC(x) as the set:
\begin{eqnarray}
\alpha\mbox{-CC}(x) = \{x\}\cup\{y\in V & | & \mbox{~there exists a
path~}\nonumber\\
& & \pi=\{x_0=x,\ldots,x_n=y\}, \nonumber\\
& & n>0, \mbox{~such that~}\nonumber\\
& & R_f(\{x_i,x_{i+1}\})\leq\alpha,\nonumber\\
& & \mbox{~for all~}0\leq i<n \}
\end{eqnarray}

An essential property of the $\alpha$-connected components of a point
$x$ is that they form an ordered sequence ({\em i.e} a hierarchy) when
increasing the value of $\alpha$:
\begin{equation}
\alpha\mbox{-CC}(x) \subseteq \beta\mbox{-CC}(x)
\end{equation}
whenever $\beta\geq\alpha$. An example of such a hierarchy is given in 
Fig.~\ref{fig:alphapami}.

\begin{figure*}[htbp]
\begin{center}
  \begin{tabular}{ccc}	
    \subfigure[]{
      \setlength{\unitlength}{0.5cm}
\begin{picture}(7,7)(0,0)
\setcoordinatesystem units <0.5cm,0.5cm> point at 0 0
\setplotarea x from 0 to 7, y from 0 to 7
\put(0.5,0.5){\makebox(0,0)[c]{0}}
\put(0.5,0.5){\circle{0.8}}
\put(1.5,0.5){\makebox(0,0)[c]{2}}
\put(1.5,0.5){\circle{0.8}}
\put(2.5,0.5){\makebox(0,0)[c]{9}}
\put(2.5,0.5){\circle{0.8}}
\put(3.5,0.5){\makebox(0,0)[c]{3}}
\put(3.5,0.5){\circle{0.8}}
\put(4.5,0.5){\makebox(0,0)[c]{8}}
\put(4.5,0.5){\circle{0.8}}
\put(5.5,0.5){\makebox(0,0)[c]{5}}
\put(5.5,0.5){\circle{0.8}}
\put(6.5,0.5){\makebox(0,0)[c]{9}}
\put(6.5,0.5){\circle{0.8}}
\put(0.5,1.5){\makebox(0,0)[c]{1}}
\put(0.5,1.5){\circle{0.8}}
\put(1.5,1.5){\makebox(0,0)[c]{0}}
\put(1.5,1.5){\circle{0.8}}
\put(2.5,1.5){\makebox(0,0)[c]{8}}
\put(2.5,1.5){\circle{0.8}}
\put(3.5,1.5){\makebox(0,0)[c]{4}}
\put(3.5,1.5){\circle{0.8}}
\put(4.5,1.5){\makebox(0,0)[c]{9}}
\put(4.5,1.5){\circle{0.8}}
\put(5.5,1.5){\makebox(0,0)[c]{6}}
\put(5.5,1.5){\circle{0.8}}
\put(6.5,1.5){\makebox(0,0)[c]{7}}
\put(6.5,1.5){\circle{0.8}}
\put(0.5,2.5){\makebox(0,0)[c]{3}}
\put(0.5,2.5){\circle{0.8}}
\put(1.5,2.5){\makebox(0,0)[c]{2}}
\put(1.5,2.5){\circle{0.8}}
\put(2.5,2.5){\makebox(0,0)[c]{7}}
\put(2.5,2.5){\circle{0.8}}
\put(3.5,2.5){\makebox(0,0)[c]{9}}
\put(3.5,2.5){\circle{0.8}}
\put(4.5,2.5){\makebox(0,0)[c]{9}}
\put(4.5,2.5){\circle{0.8}}
\put(5.5,2.5){\makebox(0,0)[c]{1}}
\put(5.5,2.5){\circle{0.8}}
\put(6.5,2.5){\makebox(0,0)[c]{1}}
\put(6.5,2.5){\circle{0.8}}
\put(0.5,3.5){\makebox(0,0)[c]{1}}
\put(0.5,3.5){\circle{0.8}}
\put(1.5,3.5){\makebox(0,0)[c]{1}}
\put(1.5,3.5){\circle{0.8}}
\put(2.5,3.5){\makebox(0,0)[c]{9}}
\put(2.5,3.5){\circle{0.8}}
\put(3.5,3.5){\makebox(0,0)[c]{3}}
\put(3.5,3.5){\circle{0.8}}
\put(4.5,3.5){\makebox(0,0)[c]{4}}
\put(4.5,3.5){\circle{0.8}}
\put(5.5,3.5){\makebox(0,0)[c]{2}}
\put(5.5,3.5){\circle{0.8}}
\put(6.5,3.5){\makebox(0,0)[c]{6}}
\put(6.5,3.5){\circle{0.8}}
\put(0.5,4.5){\makebox(0,0)[c]{1}}
\put(0.5,4.5){\circle{0.8}}
\put(1.5,4.5){\makebox(0,0)[c]{0}}
\put(1.5,4.5){\circle{0.8}}
\put(2.5,4.5){\makebox(0,0)[c]{4}}
\put(2.5,4.5){\circle{0.8}}
\put(3.5,4.5){\makebox(0,0)[c]{1}}
\put(3.5,4.5){\circle{0.8}}
\put(4.5,4.5){\makebox(0,0)[c]{1}}
\put(4.5,4.5){\circle{0.8}}
\put(5.5,4.5){\makebox(0,0)[c]{2}}
\put(5.5,4.5){\circle{0.8}}
\put(6.5,4.5){\makebox(0,0)[c]{5}}
\put(6.5,4.5){\circle{0.8}}
\put(0.5,5.5){\makebox(0,0)[c]{2}}
\put(0.5,5.5){\circle{0.8}}
\put(1.5,5.5){\makebox(0,0)[c]{1}}
\put(1.5,5.5){\circle{0.8}}
\put(2.5,5.5){\makebox(0,0)[c]{9}}
\put(2.5,5.5){\circle{0.8}}
\put(3.5,5.5){\makebox(0,0)[c]{8}}
\put(3.5,5.5){\circle{0.8}}
\put(4.5,5.5){\makebox(0,0)[c]{8}}
\put(4.5,5.5){\circle{0.8}}
\put(5.5,5.5){\makebox(0,0)[c]{9}}
\put(5.5,5.5){\circle{0.8}}
\put(6.5,5.5){\makebox(0,0)[c]{1}}
\put(6.5,5.5){\circle{0.8}}
\put(0.5,6.5){\makebox(0,0)[c]{1}}
\put(0.5,6.5){\circle{0.8}}
\put(1.5,6.5){\makebox(0,0)[c]{3}}
\put(1.5,6.5){\circle{0.8}}
\put(2.5,6.5){\makebox(0,0)[c]{8}}
\put(2.5,6.5){\circle{0.8}}
\put(3.5,6.5){\makebox(0,0)[c]{7}}
\put(3.5,6.5){\circle{0.8}}
\put(4.5,6.5){\makebox(0,0)[c]{8}}
\put(4.5,6.5){\circle{0.8}}
\put(5.5,6.5){\makebox(0,0)[c]{8}}
\put(5.5,6.5){\circle{0.8}}
\put(6.5,6.5){\makebox(0,0)[c]{2}}
\put(6.5,6.5){\circle{0.8}}
\allinethickness{1.5pt}
\put(0,0){\line(1,0){7}}
\put(0,0){\line(0,1){7}}
\put(7,7){\line(-1,0){7}}
\put(7,7){\line(0,-1){7}}
\put(1,7){\line(0,-1){1}}
\put(2,7){\line(0,-1){1}}
\put(3,7){\line(0,-1){1}}
\put(4,7){\line(0,-1){1}}
\put(5,6.5){\line(1,0){0.11}}
\put(5,6.5){\line(-1,0){0.11}}
\put(6,7){\line(0,-1){1}}
\put(1,6){\line(0,-1){1}}
\put(2,6){\line(0,-1){1}}
\put(3,6){\line(0,-1){1}}
\put(4,5.5){\line(1,0){0.11}}
\put(4,5.5){\line(-1,0){0.11}}
\put(5,6){\line(0,-1){1}}
\put(6,6){\line(0,-1){1}}
\put(1,5){\line(0,-1){1}}
\put(2,5){\line(0,-1){1}}
\put(3,5){\line(0,-1){1}}
\put(4,4.5){\line(1,0){0.11}}
\put(4,4.5){\line(-1,0){0.11}}
\put(5,5){\line(0,-1){1}}
\put(6,5){\line(0,-1){1}}
\put(1,3.5){\line(1,0){0.11}}
\put(1,3.5){\line(-1,0){0.11}}
\put(2,4){\line(0,-1){1}}
\put(3,4){\line(0,-1){1}}
\put(4,4){\line(0,-1){1}}
\put(5,4){\line(0,-1){1}}
\put(6,4){\line(0,-1){1}}
\put(1,3){\line(0,-1){1}}
\put(2,3){\line(0,-1){1}}
\put(3,3){\line(0,-1){1}}
\put(4,2.5){\line(1,0){0.11}}
\put(4,2.5){\line(-1,0){0.11}}
\put(5,3){\line(0,-1){1}}
\put(6,2.5){\line(1,0){0.11}}
\put(6,2.5){\line(-1,0){0.11}}
\put(1,2){\line(0,-1){1}}
\put(2,2){\line(0,-1){1}}
\put(3,2){\line(0,-1){1}}
\put(4,2){\line(0,-1){1}}
\put(5,2){\line(0,-1){1}}
\put(6,2){\line(0,-1){1}}
\put(1,1){\line(0,-1){1}}
\put(2,1){\line(0,-1){1}}
\put(3,1){\line(0,-1){1}}
\put(4,1){\line(0,-1){1}}
\put(5,1){\line(0,-1){1}}
\put(6,1){\line(0,-1){1}}
\put(0,6){\line(1,0){1}}
\put(0,5){\line(1,0){1}}
\put(0.5,4){\line(0,1){0.11}}
\put(0.5,4){\line(0,-1){0.11}}
\put(0,3){\line(1,0){1}}
\put(0,2){\line(1,0){1}}
\put(0,1){\line(1,0){1}}
\put(1,6){\line(1,0){1}}
\put(1,5){\line(1,0){1}}
\put(1,4){\line(1,0){1}}
\put(1,3){\line(1,0){1}}
\put(1,2){\line(1,0){1}}
\put(1,1){\line(1,0){1}}
\put(2,6){\line(1,0){1}}
\put(2,5){\line(1,0){1}}
\put(2,4){\line(1,0){1}}
\put(2,3){\line(1,0){1}}
\put(2,2){\line(1,0){1}}
\put(2,1){\line(1,0){1}}
\put(3,6){\line(1,0){1}}
\put(3,5){\line(1,0){1}}
\put(3,4){\line(1,0){1}}
\put(3,3){\line(1,0){1}}
\put(3,2){\line(1,0){1}}
\put(3,1){\line(1,0){1}}
\put(4.5,6){\line(0,1){0.11}}
\put(4.5,6){\line(0,-1){0.11}}
\put(4,5){\line(1,0){1}}
\put(4,4){\line(1,0){1}}
\put(4,3){\line(1,0){1}}
\put(4.5,2){\line(0,1){0.11}}
\put(4.5,2){\line(0,-1){0.11}}
\put(4,1){\line(1,0){1}}
\put(5,6){\line(1,0){1}}
\put(5,5){\line(1,0){1}}
\put(5.5,4){\line(0,1){0.11}}
\put(5.5,4){\line(0,-1){0.11}}
\put(5,3){\line(1,0){1}}
\put(5,2){\line(1,0){1}}
\put(5,1){\line(1,0){1}}
\put(6,6){\line(1,0){1}}
\put(6,5){\line(1,0){1}}
\put(6,4){\line(1,0){1}}
\put(6,3){\line(1,0){1}}
\put(6,2){\line(1,0){1}}
\put(6,1){\line(1,0){1}}
\end{picture}
    }
    &
    \subfigure[]{
      \setlength{\unitlength}{0.5cm}
\begin{picture}(7,7)(0,0)
\setcoordinatesystem units <0.5cm,0.5cm> point at 0 0
\setplotarea x from 0 to 7, y from 0 to 7
\put(0.5,0.5){\makebox(0,0)[c]{0}}
\put(0.5,0.5){\circle{0.8}}
\put(1.5,0.5){\makebox(0,0)[c]{2}}
\put(1.5,0.5){\circle{0.8}}
\put(2.5,0.5){\makebox(0,0)[c]{9}}
\put(2.5,0.5){\circle{0.8}}
\put(3.5,0.5){\makebox(0,0)[c]{3}}
\put(3.5,0.5){\circle{0.8}}
\put(4.5,0.5){\makebox(0,0)[c]{8}}
\put(4.5,0.5){\circle{0.8}}
\put(5.5,0.5){\makebox(0,0)[c]{5}}
\put(5.5,0.5){\circle{0.8}}
\put(6.5,0.5){\makebox(0,0)[c]{9}}
\put(6.5,0.5){\circle{0.8}}
\put(0.5,1.5){\makebox(0,0)[c]{1}}
\put(0.5,1.5){\circle{0.8}}
\put(1.5,1.5){\makebox(0,0)[c]{0}}
\put(1.5,1.5){\circle{0.8}}
\put(2.5,1.5){\makebox(0,0)[c]{8}}
\put(2.5,1.5){\circle{0.8}}
\put(3.5,1.5){\makebox(0,0)[c]{4}}
\put(3.5,1.5){\circle{0.8}}
\put(4.5,1.5){\makebox(0,0)[c]{9}}
\put(4.5,1.5){\circle{0.8}}
\put(5.5,1.5){\makebox(0,0)[c]{6}}
\put(5.5,1.5){\circle{0.8}}
\put(6.5,1.5){\makebox(0,0)[c]{7}}
\put(6.5,1.5){\circle{0.8}}
\put(0.5,2.5){\makebox(0,0)[c]{3}}
\put(0.5,2.5){\circle{0.8}}
\put(1.5,2.5){\makebox(0,0)[c]{2}}
\put(1.5,2.5){\circle{0.8}}
\put(2.5,2.5){\makebox(0,0)[c]{7}}
\put(2.5,2.5){\circle{0.8}}
\put(3.5,2.5){\makebox(0,0)[c]{9}}
\put(3.5,2.5){\circle{0.8}}
\put(4.5,2.5){\makebox(0,0)[c]{9}}
\put(4.5,2.5){\circle{0.8}}
\put(5.5,2.5){\makebox(0,0)[c]{1}}
\put(5.5,2.5){\circle{0.8}}
\put(6.5,2.5){\makebox(0,0)[c]{1}}
\put(6.5,2.5){\circle{0.8}}
\put(0.5,3.5){\makebox(0,0)[c]{1}}
\put(0.5,3.5){\circle{0.8}}
\put(1.5,3.5){\makebox(0,0)[c]{1}}
\put(1.5,3.5){\circle{0.8}}
\put(2.5,3.5){\makebox(0,0)[c]{9}}
\put(2.5,3.5){\circle{0.8}}
\put(3.5,3.5){\makebox(0,0)[c]{3}}
\put(3.5,3.5){\circle{0.8}}
\put(4.5,3.5){\makebox(0,0)[c]{4}}
\put(4.5,3.5){\circle{0.8}}
\put(5.5,3.5){\makebox(0,0)[c]{2}}
\put(5.5,3.5){\circle{0.8}}
\put(6.5,3.5){\makebox(0,0)[c]{6}}
\put(6.5,3.5){\circle{0.8}}
\put(0.5,4.5){\makebox(0,0)[c]{1}}
\put(0.5,4.5){\circle{0.8}}
\put(1.5,4.5){\makebox(0,0)[c]{0}}
\put(1.5,4.5){\circle{0.8}}
\put(2.5,4.5){\makebox(0,0)[c]{4}}
\put(2.5,4.5){\circle{0.8}}
\put(3.5,4.5){\makebox(0,0)[c]{1}}
\put(3.5,4.5){\circle{0.8}}
\put(4.5,4.5){\makebox(0,0)[c]{1}}
\put(4.5,4.5){\circle{0.8}}
\put(5.5,4.5){\makebox(0,0)[c]{2}}
\put(5.5,4.5){\circle{0.8}}
\put(6.5,4.5){\makebox(0,0)[c]{5}}
\put(6.5,4.5){\circle{0.8}}
\put(0.5,5.5){\makebox(0,0)[c]{2}}
\put(0.5,5.5){\circle{0.8}}
\put(1.5,5.5){\makebox(0,0)[c]{1}}
\put(1.5,5.5){\circle{0.8}}
\put(2.5,5.5){\makebox(0,0)[c]{9}}
\put(2.5,5.5){\circle{0.8}}
\put(3.5,5.5){\makebox(0,0)[c]{8}}
\put(3.5,5.5){\circle{0.8}}
\put(4.5,5.5){\makebox(0,0)[c]{8}}
\put(4.5,5.5){\circle{0.8}}
\put(5.5,5.5){\makebox(0,0)[c]{9}}
\put(5.5,5.5){\circle{0.8}}
\put(6.5,5.5){\makebox(0,0)[c]{1}}
\put(6.5,5.5){\circle{0.8}}
\put(0.5,6.5){\makebox(0,0)[c]{1}}
\put(0.5,6.5){\circle{0.8}}
\put(1.5,6.5){\makebox(0,0)[c]{3}}
\put(1.5,6.5){\circle{0.8}}
\put(2.5,6.5){\makebox(0,0)[c]{8}}
\put(2.5,6.5){\circle{0.8}}
\put(3.5,6.5){\makebox(0,0)[c]{7}}
\put(3.5,6.5){\circle{0.8}}
\put(4.5,6.5){\makebox(0,0)[c]{8}}
\put(4.5,6.5){\circle{0.8}}
\put(5.5,6.5){\makebox(0,0)[c]{8}}
\put(5.5,6.5){\circle{0.8}}
\put(6.5,6.5){\makebox(0,0)[c]{2}}
\put(6.5,6.5){\circle{0.8}}
\allinethickness{1.5pt}
\put(0,0){\line(1,0){7}}
\put(0,0){\line(0,1){7}}
\put(7,7){\line(-1,0){7}}
\put(7,7){\line(0,-1){7}}
\put(1,7){\line(0,-1){1}}
\put(2,7){\line(0,-1){1}}
\put(3,6.5){\line(1,0){0.11}}
\put(3,6.5){\line(-1,0){0.11}}
\put(4,6.5){\line(1,0){0.11}}
\put(4,6.5){\line(-1,0){0.11}}
\put(5,6.5){\line(1,0){0.11}}
\put(5,6.5){\line(-1,0){0.11}}
\put(6,7){\line(0,-1){1}}
\put(1,5.5){\line(1,0){0.11}}
\put(1,5.5){\line(-1,0){0.11}}
\put(2,6){\line(0,-1){1}}
\put(3,5.5){\line(1,0){0.11}}
\put(3,5.5){\line(-1,0){0.11}}
\put(4,5.5){\line(1,0){0.11}}
\put(4,5.5){\line(-1,0){0.11}}
\put(5,5.5){\line(1,0){0.11}}
\put(5,5.5){\line(-1,0){0.11}}
\put(6,6){\line(0,-1){1}}
\put(1,4.5){\line(1,0){0.11}}
\put(1,4.5){\line(-1,0){0.11}}
\put(2,5){\line(0,-1){1}}
\put(3,5){\line(0,-1){1}}
\put(4,4.5){\line(1,0){0.11}}
\put(4,4.5){\line(-1,0){0.11}}
\put(5,4.5){\line(1,0){0.11}}
\put(5,4.5){\line(-1,0){0.11}}
\put(6,5){\line(0,-1){1}}
\put(1,3.5){\line(1,0){0.11}}
\put(1,3.5){\line(-1,0){0.11}}
\put(2,4){\line(0,-1){1}}
\put(3,4){\line(0,-1){1}}
\put(4,3.5){\line(1,0){0.11}}
\put(4,3.5){\line(-1,0){0.11}}
\put(5,4){\line(0,-1){1}}
\put(6,4){\line(0,-1){1}}
\put(1,2.5){\line(1,0){0.11}}
\put(1,2.5){\line(-1,0){0.11}}
\put(2,3){\line(0,-1){1}}
\put(3,3){\line(0,-1){1}}
\put(4,2.5){\line(1,0){0.11}}
\put(4,2.5){\line(-1,0){0.11}}
\put(5,3){\line(0,-1){1}}
\put(6,2.5){\line(1,0){0.11}}
\put(6,2.5){\line(-1,0){0.11}}
\put(1,1.5){\line(1,0){0.11}}
\put(1,1.5){\line(-1,0){0.11}}
\put(2,2){\line(0,-1){1}}
\put(3,2){\line(0,-1){1}}
\put(4,2){\line(0,-1){1}}
\put(5,2){\line(0,-1){1}}
\put(6,1.5){\line(1,0){0.11}}
\put(6,1.5){\line(-1,0){0.11}}
\put(1,1){\line(0,-1){1}}
\put(2,1){\line(0,-1){1}}
\put(3,1){\line(0,-1){1}}
\put(4,1){\line(0,-1){1}}
\put(5,1){\line(0,-1){1}}
\put(6,1){\line(0,-1){1}}
\put(0.5,6){\line(0,1){0.11}}
\put(0.5,6){\line(0,-1){0.11}}
\put(0.5,5){\line(0,1){0.11}}
\put(0.5,5){\line(0,-1){0.11}}
\put(0.5,4){\line(0,1){0.11}}
\put(0.5,4){\line(0,-1){0.11}}
\put(0,2){\line(1,0){1}}
\put(0.5,1){\line(0,1){0.11}}
\put(0.5,1){\line(0,-1){0.11}}
\put(1,6){\line(1,0){1}}
\put(1.5,5){\line(0,1){0.11}}
\put(1.5,5){\line(0,-1){0.11}}
\put(1.5,4){\line(0,1){0.11}}
\put(1.5,4){\line(0,-1){0.11}}
\put(1.5,3){\line(0,1){0.11}}
\put(1.5,3){\line(0,-1){0.11}}
\put(1,2){\line(1,0){1}}
\put(1,1){\line(1,0){1}}
\put(2.5,6){\line(0,1){0.11}}
\put(2.5,6){\line(0,-1){0.11}}
\put(2,5){\line(1,0){1}}
\put(2,4){\line(1,0){1}}
\put(2,3){\line(1,0){1}}
\put(2.5,2){\line(0,1){0.11}}
\put(2.5,2){\line(0,-1){0.11}}
\put(2.5,1){\line(0,1){0.11}}
\put(2.5,1){\line(0,-1){0.11}}
\put(3.5,6){\line(0,1){0.11}}
\put(3.5,6){\line(0,-1){0.11}}
\put(3,5){\line(1,0){1}}
\put(3,4){\line(1,0){1}}
\put(3,3){\line(1,0){1}}
\put(3,2){\line(1,0){1}}
\put(3.5,1){\line(0,1){0.11}}
\put(3.5,1){\line(0,-1){0.11}}
\put(4.5,6){\line(0,1){0.11}}
\put(4.5,6){\line(0,-1){0.11}}
\put(4,5){\line(1,0){1}}
\put(4,4){\line(1,0){1}}
\put(4,3){\line(1,0){1}}
\put(4.5,2){\line(0,1){0.11}}
\put(4.5,2){\line(0,-1){0.11}}
\put(4.5,1){\line(0,1){0.11}}
\put(4.5,1){\line(0,-1){0.11}}
\put(5.5,6){\line(0,1){0.11}}
\put(5.5,6){\line(0,-1){0.11}}
\put(5,5){\line(1,0){1}}
\put(5.5,4){\line(0,1){0.11}}
\put(5.5,4){\line(0,-1){0.11}}
\put(5.5,3){\line(0,1){0.11}}
\put(5.5,3){\line(0,-1){0.11}}
\put(5,2){\line(1,0){1}}
\put(5.5,1){\line(0,1){0.11}}
\put(5.5,1){\line(0,-1){0.11}}
\put(6.5,6){\line(0,1){0.11}}
\put(6.5,6){\line(0,-1){0.11}}
\put(6,5){\line(1,0){1}}
\put(6.5,4){\line(0,1){0.11}}
\put(6.5,4){\line(0,-1){0.11}}
\put(6,3){\line(1,0){1}}
\put(6,2){\line(1,0){1}}
\put(6,1){\line(1,0){1}}
\end{picture}
    }
    &
    \subfigure[]{
      \setlength{\unitlength}{0.5cm}
\begin{picture}(7,7)(0,0)
\setcoordinatesystem units <0.5cm,0.5cm> point at 0 0
\setplotarea x from 0 to 7, y from 0 to 7
\put(0.5,0.5){\makebox(0,0)[c]{0}}
\put(0.5,0.5){\circle{0.8}}
\put(1.5,0.5){\makebox(0,0)[c]{2}}
\put(1.5,0.5){\circle{0.8}}
\put(2.5,0.5){\makebox(0,0)[c]{9}}
\put(2.5,0.5){\circle{0.8}}
\put(3.5,0.5){\makebox(0,0)[c]{3}}
\put(3.5,0.5){\circle{0.8}}
\put(4.5,0.5){\makebox(0,0)[c]{8}}
\put(4.5,0.5){\circle{0.8}}
\put(5.5,0.5){\makebox(0,0)[c]{5}}
\put(5.5,0.5){\circle{0.8}}
\put(6.5,0.5){\makebox(0,0)[c]{9}}
\put(6.5,0.5){\circle{0.8}}
\put(0.5,1.5){\makebox(0,0)[c]{1}}
\put(0.5,1.5){\circle{0.8}}
\put(1.5,1.5){\makebox(0,0)[c]{0}}
\put(1.5,1.5){\circle{0.8}}
\put(2.5,1.5){\makebox(0,0)[c]{8}}
\put(2.5,1.5){\circle{0.8}}
\put(3.5,1.5){\makebox(0,0)[c]{4}}
\put(3.5,1.5){\circle{0.8}}
\put(4.5,1.5){\makebox(0,0)[c]{9}}
\put(4.5,1.5){\circle{0.8}}
\put(5.5,1.5){\makebox(0,0)[c]{6}}
\put(5.5,1.5){\circle{0.8}}
\put(6.5,1.5){\makebox(0,0)[c]{7}}
\put(6.5,1.5){\circle{0.8}}
\put(0.5,2.5){\makebox(0,0)[c]{3}}
\put(0.5,2.5){\circle{0.8}}
\put(1.5,2.5){\makebox(0,0)[c]{2}}
\put(1.5,2.5){\circle{0.8}}
\put(2.5,2.5){\makebox(0,0)[c]{7}}
\put(2.5,2.5){\circle{0.8}}
\put(3.5,2.5){\makebox(0,0)[c]{9}}
\put(3.5,2.5){\circle{0.8}}
\put(4.5,2.5){\makebox(0,0)[c]{9}}
\put(4.5,2.5){\circle{0.8}}
\put(5.5,2.5){\makebox(0,0)[c]{1}}
\put(5.5,2.5){\circle{0.8}}
\put(6.5,2.5){\makebox(0,0)[c]{1}}
\put(6.5,2.5){\circle{0.8}}
\put(0.5,3.5){\makebox(0,0)[c]{1}}
\put(0.5,3.5){\circle{0.8}}
\put(1.5,3.5){\makebox(0,0)[c]{1}}
\put(1.5,3.5){\circle{0.8}}
\put(2.5,3.5){\makebox(0,0)[c]{9}}
\put(2.5,3.5){\circle{0.8}}
\put(3.5,3.5){\makebox(0,0)[c]{3}}
\put(3.5,3.5){\circle{0.8}}
\put(4.5,3.5){\makebox(0,0)[c]{4}}
\put(4.5,3.5){\circle{0.8}}
\put(5.5,3.5){\makebox(0,0)[c]{2}}
\put(5.5,3.5){\circle{0.8}}
\put(6.5,3.5){\makebox(0,0)[c]{6}}
\put(6.5,3.5){\circle{0.8}}
\put(0.5,4.5){\makebox(0,0)[c]{1}}
\put(0.5,4.5){\circle{0.8}}
\put(1.5,4.5){\makebox(0,0)[c]{0}}
\put(1.5,4.5){\circle{0.8}}
\put(2.5,4.5){\makebox(0,0)[c]{4}}
\put(2.5,4.5){\circle{0.8}}
\put(3.5,4.5){\makebox(0,0)[c]{1}}
\put(3.5,4.5){\circle{0.8}}
\put(4.5,4.5){\makebox(0,0)[c]{1}}
\put(4.5,4.5){\circle{0.8}}
\put(5.5,4.5){\makebox(0,0)[c]{2}}
\put(5.5,4.5){\circle{0.8}}
\put(6.5,4.5){\makebox(0,0)[c]{5}}
\put(6.5,4.5){\circle{0.8}}
\put(0.5,5.5){\makebox(0,0)[c]{2}}
\put(0.5,5.5){\circle{0.8}}
\put(1.5,5.5){\makebox(0,0)[c]{1}}
\put(1.5,5.5){\circle{0.8}}
\put(2.5,5.5){\makebox(0,0)[c]{9}}
\put(2.5,5.5){\circle{0.8}}
\put(3.5,5.5){\makebox(0,0)[c]{8}}
\put(3.5,5.5){\circle{0.8}}
\put(4.5,5.5){\makebox(0,0)[c]{8}}
\put(4.5,5.5){\circle{0.8}}
\put(5.5,5.5){\makebox(0,0)[c]{9}}
\put(5.5,5.5){\circle{0.8}}
\put(6.5,5.5){\makebox(0,0)[c]{1}}
\put(6.5,5.5){\circle{0.8}}
\put(0.5,6.5){\makebox(0,0)[c]{1}}
\put(0.5,6.5){\circle{0.8}}
\put(1.5,6.5){\makebox(0,0)[c]{3}}
\put(1.5,6.5){\circle{0.8}}
\put(2.5,6.5){\makebox(0,0)[c]{8}}
\put(2.5,6.5){\circle{0.8}}
\put(3.5,6.5){\makebox(0,0)[c]{7}}
\put(3.5,6.5){\circle{0.8}}
\put(4.5,6.5){\makebox(0,0)[c]{8}}
\put(4.5,6.5){\circle{0.8}}
\put(5.5,6.5){\makebox(0,0)[c]{8}}
\put(5.5,6.5){\circle{0.8}}
\put(6.5,6.5){\makebox(0,0)[c]{2}}
\put(6.5,6.5){\circle{0.8}}
\allinethickness{1.5pt}
\put(0,0){\line(1,0){7}}
\put(0,0){\line(0,1){7}}
\put(7,7){\line(-1,0){7}}
\put(7,7){\line(0,-1){7}}
\put(1,6.5){\line(1,0){0.11}}
\put(1,6.5){\line(-1,0){0.11}}
\put(2,7){\line(0,-1){1}}
\put(3,6.5){\line(1,0){0.11}}
\put(3,6.5){\line(-1,0){0.11}}
\put(4,6.5){\line(1,0){0.11}}
\put(4,6.5){\line(-1,0){0.11}}
\put(5,6.5){\line(1,0){0.11}}
\put(5,6.5){\line(-1,0){0.11}}
\put(6,7){\line(0,-1){1}}
\put(1,5.5){\line(1,0){0.11}}
\put(1,5.5){\line(-1,0){0.11}}
\put(2,6){\line(0,-1){1}}
\put(3,5.5){\line(1,0){0.11}}
\put(3,5.5){\line(-1,0){0.11}}
\put(4,5.5){\line(1,0){0.11}}
\put(4,5.5){\line(-1,0){0.11}}
\put(5,5.5){\line(1,0){0.11}}
\put(5,5.5){\line(-1,0){0.11}}
\put(6,6){\line(0,-1){1}}
\put(1,4.5){\line(1,0){0.11}}
\put(1,4.5){\line(-1,0){0.11}}
\put(2,5){\line(0,-1){1}}
\put(3,5){\line(0,-1){1}}
\put(4,4.5){\line(1,0){0.11}}
\put(4,4.5){\line(-1,0){0.11}}
\put(5,4.5){\line(1,0){0.11}}
\put(5,4.5){\line(-1,0){0.11}}
\put(6,5){\line(0,-1){1}}
\put(1,3.5){\line(1,0){0.11}}
\put(1,3.5){\line(-1,0){0.11}}
\put(2,4){\line(0,-1){1}}
\put(3,4){\line(0,-1){1}}
\put(4,3.5){\line(1,0){0.11}}
\put(4,3.5){\line(-1,0){0.11}}
\put(5,3.5){\line(1,0){0.11}}
\put(5,3.5){\line(-1,0){0.11}}
\put(6,4){\line(0,-1){1}}
\put(1,2.5){\line(1,0){0.11}}
\put(1,2.5){\line(-1,0){0.11}}
\put(2,3){\line(0,-1){1}}
\put(3,2.5){\line(1,0){0.11}}
\put(3,2.5){\line(-1,0){0.11}}
\put(4,2.5){\line(1,0){0.11}}
\put(4,2.5){\line(-1,0){0.11}}
\put(5,3){\line(0,-1){1}}
\put(6,2.5){\line(1,0){0.11}}
\put(6,2.5){\line(-1,0){0.11}}
\put(1,1.5){\line(1,0){0.11}}
\put(1,1.5){\line(-1,0){0.11}}
\put(2,2){\line(0,-1){1}}
\put(3,2){\line(0,-1){1}}
\put(4,2){\line(0,-1){1}}
\put(5,2){\line(0,-1){1}}
\put(6,1.5){\line(1,0){0.11}}
\put(6,1.5){\line(-1,0){0.11}}
\put(1,0.5){\line(1,0){0.11}}
\put(1,0.5){\line(-1,0){0.11}}
\put(2,1){\line(0,-1){1}}
\put(3,1){\line(0,-1){1}}
\put(4,1){\line(0,-1){1}}
\put(5,1){\line(0,-1){1}}
\put(0.5,6){\line(0,1){0.11}}
\put(0.5,6){\line(0,-1){0.11}}
\put(0.5,5){\line(0,1){0.11}}
\put(0.5,5){\line(0,-1){0.11}}
\put(0.5,4){\line(0,1){0.11}}
\put(0.5,4){\line(0,-1){0.11}}
\put(0.5,3){\line(0,1){0.11}}
\put(0.5,3){\line(0,-1){0.11}}
\put(0.5,2){\line(0,1){0.11}}
\put(0.5,2){\line(0,-1){0.11}}
\put(0.5,1){\line(0,1){0.11}}
\put(0.5,1){\line(0,-1){0.11}}
\put(1.5,6){\line(0,1){0.11}}
\put(1.5,6){\line(0,-1){0.11}}
\put(1.5,5){\line(0,1){0.11}}
\put(1.5,5){\line(0,-1){0.11}}
\put(1.5,4){\line(0,1){0.11}}
\put(1.5,4){\line(0,-1){0.11}}
\put(1.5,3){\line(0,1){0.11}}
\put(1.5,3){\line(0,-1){0.11}}
\put(1.5,2){\line(0,1){0.11}}
\put(1.5,2){\line(0,-1){0.11}}
\put(1.5,1){\line(0,1){0.11}}
\put(1.5,1){\line(0,-1){0.11}}
\put(2.5,6){\line(0,1){0.11}}
\put(2.5,6){\line(0,-1){0.11}}
\put(2,5){\line(1,0){1}}
\put(2,4){\line(1,0){1}}
\put(2.5,3){\line(0,1){0.11}}
\put(2.5,3){\line(0,-1){0.11}}
\put(2.5,2){\line(0,1){0.11}}
\put(2.5,2){\line(0,-1){0.11}}
\put(2.5,1){\line(0,1){0.11}}
\put(2.5,1){\line(0,-1){0.11}}
\put(3.5,6){\line(0,1){0.11}}
\put(3.5,6){\line(0,-1){0.11}}
\put(3,5){\line(1,0){1}}
\put(3.5,4){\line(0,1){0.11}}
\put(3.5,4){\line(0,-1){0.11}}
\put(3,3){\line(1,0){1}}
\put(3,2){\line(1,0){1}}
\put(3.5,1){\line(0,1){0.11}}
\put(3.5,1){\line(0,-1){0.11}}
\put(4.5,6){\line(0,1){0.11}}
\put(4.5,6){\line(0,-1){0.11}}
\put(4,5){\line(1,0){1}}
\put(4,3){\line(1,0){1}}
\put(4.5,2){\line(0,1){0.11}}
\put(4.5,2){\line(0,-1){0.11}}
\put(4.5,1){\line(0,1){0.11}}
\put(4.5,1){\line(0,-1){0.11}}
\put(5.5,6){\line(0,1){0.11}}
\put(5.5,6){\line(0,-1){0.11}}
\put(5,5){\line(1,0){1}}
\put(5.5,4){\line(0,1){0.11}}
\put(5.5,4){\line(0,-1){0.11}}
\put(5.5,3){\line(0,1){0.11}}
\put(5.5,3){\line(0,-1){0.11}}
\put(5,2){\line(1,0){1}}
\put(5.5,1){\line(0,1){0.11}}
\put(5.5,1){\line(0,-1){0.11}}
\put(6.5,6){\line(0,1){0.11}}
\put(6.5,6){\line(0,-1){0.11}}
\put(6,5){\line(1,0){1}}
\put(6.5,4){\line(0,1){0.11}}
\put(6.5,4){\line(0,-1){0.11}}
\put(6,3){\line(1,0){1}}
\put(6,2){\line(1,0){1}}
\put(6.5,1){\line(0,1){0.11}}
\put(6.5,1){\line(0,-1){0.11}}
\end{picture}
    }
\\
    \subfigure[]{
      \setlength{\unitlength}{0.5cm}
\begin{picture}(7,7)(0,0)
\setcoordinatesystem units <0.5cm,0.5cm> point at 0 0
\setplotarea x from 0 to 7, y from 0 to 7
\put(0.5,0.5){\makebox(0,0)[c]{0}}
\put(0.5,0.5){\circle{0.8}}
\put(1.5,0.5){\makebox(0,0)[c]{2}}
\put(1.5,0.5){\circle{0.8}}
\put(2.5,0.5){\makebox(0,0)[c]{9}}
\put(2.5,0.5){\circle{0.8}}
\put(3.5,0.5){\makebox(0,0)[c]{3}}
\put(3.5,0.5){\circle{0.8}}
\put(4.5,0.5){\makebox(0,0)[c]{8}}
\put(4.5,0.5){\circle{0.8}}
\put(5.5,0.5){\makebox(0,0)[c]{5}}
\put(5.5,0.5){\circle{0.8}}
\put(6.5,0.5){\makebox(0,0)[c]{9}}
\put(6.5,0.5){\circle{0.8}}
\put(0.5,1.5){\makebox(0,0)[c]{1}}
\put(0.5,1.5){\circle{0.8}}
\put(1.5,1.5){\makebox(0,0)[c]{0}}
\put(1.5,1.5){\circle{0.8}}
\put(2.5,1.5){\makebox(0,0)[c]{8}}
\put(2.5,1.5){\circle{0.8}}
\put(3.5,1.5){\makebox(0,0)[c]{4}}
\put(3.5,1.5){\circle{0.8}}
\put(4.5,1.5){\makebox(0,0)[c]{9}}
\put(4.5,1.5){\circle{0.8}}
\put(5.5,1.5){\makebox(0,0)[c]{6}}
\put(5.5,1.5){\circle{0.8}}
\put(6.5,1.5){\makebox(0,0)[c]{7}}
\put(6.5,1.5){\circle{0.8}}
\put(0.5,2.5){\makebox(0,0)[c]{3}}
\put(0.5,2.5){\circle{0.8}}
\put(1.5,2.5){\makebox(0,0)[c]{2}}
\put(1.5,2.5){\circle{0.8}}
\put(2.5,2.5){\makebox(0,0)[c]{7}}
\put(2.5,2.5){\circle{0.8}}
\put(3.5,2.5){\makebox(0,0)[c]{9}}
\put(3.5,2.5){\circle{0.8}}
\put(4.5,2.5){\makebox(0,0)[c]{9}}
\put(4.5,2.5){\circle{0.8}}
\put(5.5,2.5){\makebox(0,0)[c]{1}}
\put(5.5,2.5){\circle{0.8}}
\put(6.5,2.5){\makebox(0,0)[c]{1}}
\put(6.5,2.5){\circle{0.8}}
\put(0.5,3.5){\makebox(0,0)[c]{1}}
\put(0.5,3.5){\circle{0.8}}
\put(1.5,3.5){\makebox(0,0)[c]{1}}
\put(1.5,3.5){\circle{0.8}}
\put(2.5,3.5){\makebox(0,0)[c]{9}}
\put(2.5,3.5){\circle{0.8}}
\put(3.5,3.5){\makebox(0,0)[c]{3}}
\put(3.5,3.5){\circle{0.8}}
\put(4.5,3.5){\makebox(0,0)[c]{4}}
\put(4.5,3.5){\circle{0.8}}
\put(5.5,3.5){\makebox(0,0)[c]{2}}
\put(5.5,3.5){\circle{0.8}}
\put(6.5,3.5){\makebox(0,0)[c]{6}}
\put(6.5,3.5){\circle{0.8}}
\put(0.5,4.5){\makebox(0,0)[c]{1}}
\put(0.5,4.5){\circle{0.8}}
\put(1.5,4.5){\makebox(0,0)[c]{0}}
\put(1.5,4.5){\circle{0.8}}
\put(2.5,4.5){\makebox(0,0)[c]{4}}
\put(2.5,4.5){\circle{0.8}}
\put(3.5,4.5){\makebox(0,0)[c]{1}}
\put(3.5,4.5){\circle{0.8}}
\put(4.5,4.5){\makebox(0,0)[c]{1}}
\put(4.5,4.5){\circle{0.8}}
\put(5.5,4.5){\makebox(0,0)[c]{2}}
\put(5.5,4.5){\circle{0.8}}
\put(6.5,4.5){\makebox(0,0)[c]{5}}
\put(6.5,4.5){\circle{0.8}}
\put(0.5,5.5){\makebox(0,0)[c]{2}}
\put(0.5,5.5){\circle{0.8}}
\put(1.5,5.5){\makebox(0,0)[c]{1}}
\put(1.5,5.5){\circle{0.8}}
\put(2.5,5.5){\makebox(0,0)[c]{9}}
\put(2.5,5.5){\circle{0.8}}
\put(3.5,5.5){\makebox(0,0)[c]{8}}
\put(3.5,5.5){\circle{0.8}}
\put(4.5,5.5){\makebox(0,0)[c]{8}}
\put(4.5,5.5){\circle{0.8}}
\put(5.5,5.5){\makebox(0,0)[c]{9}}
\put(5.5,5.5){\circle{0.8}}
\put(6.5,5.5){\makebox(0,0)[c]{1}}
\put(6.5,5.5){\circle{0.8}}
\put(0.5,6.5){\makebox(0,0)[c]{1}}
\put(0.5,6.5){\circle{0.8}}
\put(1.5,6.5){\makebox(0,0)[c]{3}}
\put(1.5,6.5){\circle{0.8}}
\put(2.5,6.5){\makebox(0,0)[c]{8}}
\put(2.5,6.5){\circle{0.8}}
\put(3.5,6.5){\makebox(0,0)[c]{7}}
\put(3.5,6.5){\circle{0.8}}
\put(4.5,6.5){\makebox(0,0)[c]{8}}
\put(4.5,6.5){\circle{0.8}}
\put(5.5,6.5){\makebox(0,0)[c]{8}}
\put(5.5,6.5){\circle{0.8}}
\put(6.5,6.5){\makebox(0,0)[c]{2}}
\put(6.5,6.5){\circle{0.8}}
\allinethickness{1.5pt}
\put(0,0){\line(1,0){7}}
\put(0,0){\line(0,1){7}}
\put(7,7){\line(-1,0){7}}
\put(7,7){\line(0,-1){7}}
\put(1,6.5){\line(1,0){0.11}}
\put(1,6.5){\line(-1,0){0.11}}
\put(2,7){\line(0,-1){1}}
\put(3,6.5){\line(1,0){0.11}}
\put(3,6.5){\line(-1,0){0.11}}
\put(4,6.5){\line(1,0){0.11}}
\put(4,6.5){\line(-1,0){0.11}}
\put(5,6.5){\line(1,0){0.11}}
\put(5,6.5){\line(-1,0){0.11}}
\put(6,7){\line(0,-1){1}}
\put(1,5.5){\line(1,0){0.11}}
\put(1,5.5){\line(-1,0){0.11}}
\put(2,6){\line(0,-1){1}}
\put(3,5.5){\line(1,0){0.11}}
\put(3,5.5){\line(-1,0){0.11}}
\put(4,5.5){\line(1,0){0.11}}
\put(4,5.5){\line(-1,0){0.11}}
\put(5,5.5){\line(1,0){0.11}}
\put(5,5.5){\line(-1,0){0.11}}
\put(6,6){\line(0,-1){1}}
\put(1,4.5){\line(1,0){0.11}}
\put(1,4.5){\line(-1,0){0.11}}
\put(2,5){\line(0,-1){1}}
\put(3,4.5){\line(1,0){0.11}}
\put(3,4.5){\line(-1,0){0.11}}
\put(4,4.5){\line(1,0){0.11}}
\put(4,4.5){\line(-1,0){0.11}}
\put(5,4.5){\line(1,0){0.11}}
\put(5,4.5){\line(-1,0){0.11}}
\put(6,4.5){\line(1,0){0.11}}
\put(6,4.5){\line(-1,0){0.11}}
\put(1,3.5){\line(1,0){0.11}}
\put(1,3.5){\line(-1,0){0.11}}
\put(2,4){\line(0,-1){1}}
\put(3,4){\line(0,-1){1}}
\put(4,3.5){\line(1,0){0.11}}
\put(4,3.5){\line(-1,0){0.11}}
\put(5,3.5){\line(1,0){0.11}}
\put(5,3.5){\line(-1,0){0.11}}
\put(1,2.5){\line(1,0){0.11}}
\put(1,2.5){\line(-1,0){0.11}}
\put(2,3){\line(0,-1){1}}
\put(3,2.5){\line(1,0){0.11}}
\put(3,2.5){\line(-1,0){0.11}}
\put(4,2.5){\line(1,0){0.11}}
\put(4,2.5){\line(-1,0){0.11}}
\put(5,3){\line(0,-1){1}}
\put(6,2.5){\line(1,0){0.11}}
\put(6,2.5){\line(-1,0){0.11}}
\put(1,1.5){\line(1,0){0.11}}
\put(1,1.5){\line(-1,0){0.11}}
\put(2,2){\line(0,-1){1}}
\put(3,2){\line(0,-1){1}}
\put(4,2){\line(0,-1){1}}
\put(5,1.5){\line(1,0){0.11}}
\put(5,1.5){\line(-1,0){0.11}}
\put(6,1.5){\line(1,0){0.11}}
\put(6,1.5){\line(-1,0){0.11}}
\put(1,0.5){\line(1,0){0.11}}
\put(1,0.5){\line(-1,0){0.11}}
\put(2,1){\line(0,-1){1}}
\put(3,1){\line(0,-1){1}}
\put(4,1){\line(0,-1){1}}
\put(5,0.5){\line(1,0){0.11}}
\put(5,0.5){\line(-1,0){0.11}}
\put(0.5,6){\line(0,1){0.11}}
\put(0.5,6){\line(0,-1){0.11}}
\put(0.5,5){\line(0,1){0.11}}
\put(0.5,5){\line(0,-1){0.11}}
\put(0.5,4){\line(0,1){0.11}}
\put(0.5,4){\line(0,-1){0.11}}
\put(0.5,3){\line(0,1){0.11}}
\put(0.5,3){\line(0,-1){0.11}}
\put(0.5,2){\line(0,1){0.11}}
\put(0.5,2){\line(0,-1){0.11}}
\put(0.5,1){\line(0,1){0.11}}
\put(0.5,1){\line(0,-1){0.11}}
\put(1.5,6){\line(0,1){0.11}}
\put(1.5,6){\line(0,-1){0.11}}
\put(1.5,5){\line(0,1){0.11}}
\put(1.5,5){\line(0,-1){0.11}}
\put(1.5,4){\line(0,1){0.11}}
\put(1.5,4){\line(0,-1){0.11}}
\put(1.5,3){\line(0,1){0.11}}
\put(1.5,3){\line(0,-1){0.11}}
\put(1.5,2){\line(0,1){0.11}}
\put(1.5,2){\line(0,-1){0.11}}
\put(1.5,1){\line(0,1){0.11}}
\put(1.5,1){\line(0,-1){0.11}}
\put(2.5,6){\line(0,1){0.11}}
\put(2.5,6){\line(0,-1){0.11}}
\put(2,5){\line(1,0){1}}
\put(2,4){\line(1,0){1}}
\put(2.5,3){\line(0,1){0.11}}
\put(2.5,3){\line(0,-1){0.11}}
\put(2.5,2){\line(0,1){0.11}}
\put(2.5,2){\line(0,-1){0.11}}
\put(2.5,1){\line(0,1){0.11}}
\put(2.5,1){\line(0,-1){0.11}}
\put(3.5,6){\line(0,1){0.11}}
\put(3.5,6){\line(0,-1){0.11}}
\put(3,5){\line(1,0){1}}
\put(3.5,4){\line(0,1){0.11}}
\put(3.5,4){\line(0,-1){0.11}}
\put(3,3){\line(1,0){1}}
\put(3,2){\line(1,0){1}}
\put(3.5,1){\line(0,1){0.11}}
\put(3.5,1){\line(0,-1){0.11}}
\put(4.5,6){\line(0,1){0.11}}
\put(4.5,6){\line(0,-1){0.11}}
\put(4,5){\line(1,0){1}}
\put(4.5,4){\line(0,1){0.11}}
\put(4.5,4){\line(0,-1){0.11}}
\put(4,3){\line(1,0){1}}
\put(4.5,2){\line(0,1){0.11}}
\put(4.5,2){\line(0,-1){0.11}}
\put(4.5,1){\line(0,1){0.11}}
\put(4.5,1){\line(0,-1){0.11}}
\put(5.5,6){\line(0,1){0.11}}
\put(5.5,6){\line(0,-1){0.11}}
\put(5,5){\line(1,0){1}}
\put(5.5,4){\line(0,1){0.11}}
\put(5.5,4){\line(0,-1){0.11}}
\put(5.5,3){\line(0,1){0.11}}
\put(5.5,3){\line(0,-1){0.11}}
\put(5,2){\line(1,0){1}}
\put(5.5,1){\line(0,1){0.11}}
\put(5.5,1){\line(0,-1){0.11}}
\put(6.5,6){\line(0,1){0.11}}
\put(6.5,6){\line(0,-1){0.11}}
\put(6,5){\line(1,0){1}}
\put(6.5,4){\line(0,1){0.11}}
\put(6.5,4){\line(0,-1){0.11}}
\put(6,2){\line(1,0){1}}
\put(6.5,1){\line(0,1){0.11}}
\put(6.5,1){\line(0,-1){0.11}}
\end{picture}
    }
    &
    \subfigure[]{
      \setlength{\unitlength}{0.5cm}
\begin{picture}(7,7)(0,0)
\setcoordinatesystem units <0.5cm,0.5cm> point at 0 0
\setplotarea x from 0 to 7, y from 0 to 7
\put(0.5,0.5){\makebox(0,0)[c]{0}}
\put(0.5,0.5){\circle{0.8}}
\put(1.5,0.5){\makebox(0,0)[c]{2}}
\put(1.5,0.5){\circle{0.8}}
\put(2.5,0.5){\makebox(0,0)[c]{9}}
\put(2.5,0.5){\circle{0.8}}
\put(3.5,0.5){\makebox(0,0)[c]{3}}
\put(3.5,0.5){\circle{0.8}}
\put(4.5,0.5){\makebox(0,0)[c]{8}}
\put(4.5,0.5){\circle{0.8}}
\put(5.5,0.5){\makebox(0,0)[c]{5}}
\put(5.5,0.5){\circle{0.8}}
\put(6.5,0.5){\makebox(0,0)[c]{9}}
\put(6.5,0.5){\circle{0.8}}
\put(0.5,1.5){\makebox(0,0)[c]{1}}
\put(0.5,1.5){\circle{0.8}}
\put(1.5,1.5){\makebox(0,0)[c]{0}}
\put(1.5,1.5){\circle{0.8}}
\put(2.5,1.5){\makebox(0,0)[c]{8}}
\put(2.5,1.5){\circle{0.8}}
\put(3.5,1.5){\makebox(0,0)[c]{4}}
\put(3.5,1.5){\circle{0.8}}
\put(4.5,1.5){\makebox(0,0)[c]{9}}
\put(4.5,1.5){\circle{0.8}}
\put(5.5,1.5){\makebox(0,0)[c]{6}}
\put(5.5,1.5){\circle{0.8}}
\put(6.5,1.5){\makebox(0,0)[c]{7}}
\put(6.5,1.5){\circle{0.8}}
\put(0.5,2.5){\makebox(0,0)[c]{3}}
\put(0.5,2.5){\circle{0.8}}
\put(1.5,2.5){\makebox(0,0)[c]{2}}
\put(1.5,2.5){\circle{0.8}}
\put(2.5,2.5){\makebox(0,0)[c]{7}}
\put(2.5,2.5){\circle{0.8}}
\put(3.5,2.5){\makebox(0,0)[c]{9}}
\put(3.5,2.5){\circle{0.8}}
\put(4.5,2.5){\makebox(0,0)[c]{9}}
\put(4.5,2.5){\circle{0.8}}
\put(5.5,2.5){\makebox(0,0)[c]{1}}
\put(5.5,2.5){\circle{0.8}}
\put(6.5,2.5){\makebox(0,0)[c]{1}}
\put(6.5,2.5){\circle{0.8}}
\put(0.5,3.5){\makebox(0,0)[c]{1}}
\put(0.5,3.5){\circle{0.8}}
\put(1.5,3.5){\makebox(0,0)[c]{1}}
\put(1.5,3.5){\circle{0.8}}
\put(2.5,3.5){\makebox(0,0)[c]{9}}
\put(2.5,3.5){\circle{0.8}}
\put(3.5,3.5){\makebox(0,0)[c]{3}}
\put(3.5,3.5){\circle{0.8}}
\put(4.5,3.5){\makebox(0,0)[c]{4}}
\put(4.5,3.5){\circle{0.8}}
\put(5.5,3.5){\makebox(0,0)[c]{2}}
\put(5.5,3.5){\circle{0.8}}
\put(6.5,3.5){\makebox(0,0)[c]{6}}
\put(6.5,3.5){\circle{0.8}}
\put(0.5,4.5){\makebox(0,0)[c]{1}}
\put(0.5,4.5){\circle{0.8}}
\put(1.5,4.5){\makebox(0,0)[c]{0}}
\put(1.5,4.5){\circle{0.8}}
\put(2.5,4.5){\makebox(0,0)[c]{4}}
\put(2.5,4.5){\circle{0.8}}
\put(3.5,4.5){\makebox(0,0)[c]{1}}
\put(3.5,4.5){\circle{0.8}}
\put(4.5,4.5){\makebox(0,0)[c]{1}}
\put(4.5,4.5){\circle{0.8}}
\put(5.5,4.5){\makebox(0,0)[c]{2}}
\put(5.5,4.5){\circle{0.8}}
\put(6.5,4.5){\makebox(0,0)[c]{5}}
\put(6.5,4.5){\circle{0.8}}
\put(0.5,5.5){\makebox(0,0)[c]{2}}
\put(0.5,5.5){\circle{0.8}}
\put(1.5,5.5){\makebox(0,0)[c]{1}}
\put(1.5,5.5){\circle{0.8}}
\put(2.5,5.5){\makebox(0,0)[c]{9}}
\put(2.5,5.5){\circle{0.8}}
\put(3.5,5.5){\makebox(0,0)[c]{8}}
\put(3.5,5.5){\circle{0.8}}
\put(4.5,5.5){\makebox(0,0)[c]{8}}
\put(4.5,5.5){\circle{0.8}}
\put(5.5,5.5){\makebox(0,0)[c]{9}}
\put(5.5,5.5){\circle{0.8}}
\put(6.5,5.5){\makebox(0,0)[c]{1}}
\put(6.5,5.5){\circle{0.8}}
\put(0.5,6.5){\makebox(0,0)[c]{1}}
\put(0.5,6.5){\circle{0.8}}
\put(1.5,6.5){\makebox(0,0)[c]{3}}
\put(1.5,6.5){\circle{0.8}}
\put(2.5,6.5){\makebox(0,0)[c]{8}}
\put(2.5,6.5){\circle{0.8}}
\put(3.5,6.5){\makebox(0,0)[c]{7}}
\put(3.5,6.5){\circle{0.8}}
\put(4.5,6.5){\makebox(0,0)[c]{8}}
\put(4.5,6.5){\circle{0.8}}
\put(5.5,6.5){\makebox(0,0)[c]{8}}
\put(5.5,6.5){\circle{0.8}}
\put(6.5,6.5){\makebox(0,0)[c]{2}}
\put(6.5,6.5){\circle{0.8}}
\allinethickness{1.5pt}
\put(0,0){\line(1,0){7}}
\put(0,0){\line(0,1){7}}
\put(7,7){\line(-1,0){7}}
\put(7,7){\line(0,-1){7}}
\put(1,6.5){\line(1,0){0.11}}
\put(1,6.5){\line(-1,0){0.11}}
\put(2,7){\line(0,-1){1}}
\put(3,6.5){\line(1,0){0.11}}
\put(3,6.5){\line(-1,0){0.11}}
\put(4,6.5){\line(1,0){0.11}}
\put(4,6.5){\line(-1,0){0.11}}
\put(5,6.5){\line(1,0){0.11}}
\put(5,6.5){\line(-1,0){0.11}}
\put(6,7){\line(0,-1){1}}
\put(1,5.5){\line(1,0){0.11}}
\put(1,5.5){\line(-1,0){0.11}}
\put(2,6){\line(0,-1){1}}
\put(3,5.5){\line(1,0){0.11}}
\put(3,5.5){\line(-1,0){0.11}}
\put(4,5.5){\line(1,0){0.11}}
\put(4,5.5){\line(-1,0){0.11}}
\put(5,5.5){\line(1,0){0.11}}
\put(5,5.5){\line(-1,0){0.11}}
\put(6,6){\line(0,-1){1}}
\put(1,4.5){\line(1,0){0.11}}
\put(1,4.5){\line(-1,0){0.11}}
\put(2,4.5){\line(1,0){0.11}}
\put(2,4.5){\line(-1,0){0.11}}
\put(3,4.5){\line(1,0){0.11}}
\put(3,4.5){\line(-1,0){0.11}}
\put(4,4.5){\line(1,0){0.11}}
\put(4,4.5){\line(-1,0){0.11}}
\put(5,4.5){\line(1,0){0.11}}
\put(5,4.5){\line(-1,0){0.11}}
\put(6,4.5){\line(1,0){0.11}}
\put(6,4.5){\line(-1,0){0.11}}
\put(1,3.5){\line(1,0){0.11}}
\put(1,3.5){\line(-1,0){0.11}}
\put(2,4){\line(0,-1){1}}
\put(3,4){\line(0,-1){1}}
\put(4,3.5){\line(1,0){0.11}}
\put(4,3.5){\line(-1,0){0.11}}
\put(5,3.5){\line(1,0){0.11}}
\put(5,3.5){\line(-1,0){0.11}}
\put(6,3.5){\line(1,0){0.11}}
\put(6,3.5){\line(-1,0){0.11}}
\put(1,2.5){\line(1,0){0.11}}
\put(1,2.5){\line(-1,0){0.11}}
\put(2,3){\line(0,-1){1}}
\put(3,2.5){\line(1,0){0.11}}
\put(3,2.5){\line(-1,0){0.11}}
\put(4,2.5){\line(1,0){0.11}}
\put(4,2.5){\line(-1,0){0.11}}
\put(5,3){\line(0,-1){1}}
\put(6,2.5){\line(1,0){0.11}}
\put(6,2.5){\line(-1,0){0.11}}
\put(1,1.5){\line(1,0){0.11}}
\put(1,1.5){\line(-1,0){0.11}}
\put(2,2){\line(0,-1){1}}
\put(3,1.5){\line(1,0){0.11}}
\put(3,1.5){\line(-1,0){0.11}}
\put(5,1.5){\line(1,0){0.11}}
\put(5,1.5){\line(-1,0){0.11}}
\put(6,1.5){\line(1,0){0.11}}
\put(6,1.5){\line(-1,0){0.11}}
\put(1,0.5){\line(1,0){0.11}}
\put(1,0.5){\line(-1,0){0.11}}
\put(2,1){\line(0,-1){1}}
\put(5,0.5){\line(1,0){0.11}}
\put(5,0.5){\line(-1,0){0.11}}
\put(6,0.5){\line(1,0){0.11}}
\put(6,0.5){\line(-1,0){0.11}}
\put(0.5,6){\line(0,1){0.11}}
\put(0.5,6){\line(0,-1){0.11}}
\put(0.5,5){\line(0,1){0.11}}
\put(0.5,5){\line(0,-1){0.11}}
\put(0.5,4){\line(0,1){0.11}}
\put(0.5,4){\line(0,-1){0.11}}
\put(0.5,3){\line(0,1){0.11}}
\put(0.5,3){\line(0,-1){0.11}}
\put(0.5,2){\line(0,1){0.11}}
\put(0.5,2){\line(0,-1){0.11}}
\put(0.5,1){\line(0,1){0.11}}
\put(0.5,1){\line(0,-1){0.11}}
\put(1.5,6){\line(0,1){0.11}}
\put(1.5,6){\line(0,-1){0.11}}
\put(1.5,5){\line(0,1){0.11}}
\put(1.5,5){\line(0,-1){0.11}}
\put(1.5,4){\line(0,1){0.11}}
\put(1.5,4){\line(0,-1){0.11}}
\put(1.5,3){\line(0,1){0.11}}
\put(1.5,3){\line(0,-1){0.11}}
\put(1.5,2){\line(0,1){0.11}}
\put(1.5,2){\line(0,-1){0.11}}
\put(1.5,1){\line(0,1){0.11}}
\put(1.5,1){\line(0,-1){0.11}}
\put(2.5,6){\line(0,1){0.11}}
\put(2.5,6){\line(0,-1){0.11}}
\put(2,5){\line(1,0){1}}
\put(2,4){\line(1,0){1}}
\put(2.5,3){\line(0,1){0.11}}
\put(2.5,3){\line(0,-1){0.11}}
\put(2.5,2){\line(0,1){0.11}}
\put(2.5,2){\line(0,-1){0.11}}
\put(2.5,1){\line(0,1){0.11}}
\put(2.5,1){\line(0,-1){0.11}}
\put(3.5,6){\line(0,1){0.11}}
\put(3.5,6){\line(0,-1){0.11}}
\put(3,5){\line(1,0){1}}
\put(3.5,4){\line(0,1){0.11}}
\put(3.5,4){\line(0,-1){0.11}}
\put(3,3){\line(1,0){1}}
\put(3.5,1){\line(0,1){0.11}}
\put(3.5,1){\line(0,-1){0.11}}
\put(4.5,6){\line(0,1){0.11}}
\put(4.5,6){\line(0,-1){0.11}}
\put(4,5){\line(1,0){1}}
\put(4.5,4){\line(0,1){0.11}}
\put(4.5,4){\line(0,-1){0.11}}
\put(4,3){\line(1,0){1}}
\put(4.5,2){\line(0,1){0.11}}
\put(4.5,2){\line(0,-1){0.11}}
\put(4.5,1){\line(0,1){0.11}}
\put(4.5,1){\line(0,-1){0.11}}
\put(5.5,6){\line(0,1){0.11}}
\put(5.5,6){\line(0,-1){0.11}}
\put(5,5){\line(1,0){1}}
\put(5.5,4){\line(0,1){0.11}}
\put(5.5,4){\line(0,-1){0.11}}
\put(5.5,3){\line(0,1){0.11}}
\put(5.5,3){\line(0,-1){0.11}}
\put(5,2){\line(1,0){1}}
\put(5.5,1){\line(0,1){0.11}}
\put(5.5,1){\line(0,-1){0.11}}
\put(6.5,6){\line(0,1){0.11}}
\put(6.5,6){\line(0,-1){0.11}}
\put(6.5,5){\line(0,1){0.11}}
\put(6.5,5){\line(0,-1){0.11}}
\put(6.5,4){\line(0,1){0.11}}
\put(6.5,4){\line(0,-1){0.11}}
\put(6,2){\line(1,0){1}}
\put(6.5,1){\line(0,1){0.11}}
\put(6.5,1){\line(0,-1){0.11}}
\end{picture}
    }
    &
    \subfigure[]{
      \setlength{\unitlength}{0.5cm}
\begin{picture}(7,7)(0,0)
\setcoordinatesystem units <0.5cm,0.5cm> point at 0 0
\setplotarea x from 0 to 7, y from 0 to 7
\put(0.5,0.5){\makebox(0,0)[c]{0}}
\put(0.5,0.5){\circle{0.8}}
\put(1.5,0.5){\makebox(0,0)[c]{2}}
\put(1.5,0.5){\circle{0.8}}
\put(2.5,0.5){\makebox(0,0)[c]{9}}
\put(2.5,0.5){\circle{0.8}}
\put(3.5,0.5){\makebox(0,0)[c]{3}}
\put(3.5,0.5){\circle{0.8}}
\put(4.5,0.5){\makebox(0,0)[c]{8}}
\put(4.5,0.5){\circle{0.8}}
\put(5.5,0.5){\makebox(0,0)[c]{5}}
\put(5.5,0.5){\circle{0.8}}
\put(6.5,0.5){\makebox(0,0)[c]{9}}
\put(6.5,0.5){\circle{0.8}}
\put(0.5,1.5){\makebox(0,0)[c]{1}}
\put(0.5,1.5){\circle{0.8}}
\put(1.5,1.5){\makebox(0,0)[c]{0}}
\put(1.5,1.5){\circle{0.8}}
\put(2.5,1.5){\makebox(0,0)[c]{8}}
\put(2.5,1.5){\circle{0.8}}
\put(3.5,1.5){\makebox(0,0)[c]{4}}
\put(3.5,1.5){\circle{0.8}}
\put(4.5,1.5){\makebox(0,0)[c]{9}}
\put(4.5,1.5){\circle{0.8}}
\put(5.5,1.5){\makebox(0,0)[c]{6}}
\put(5.5,1.5){\circle{0.8}}
\put(6.5,1.5){\makebox(0,0)[c]{7}}
\put(6.5,1.5){\circle{0.8}}
\put(0.5,2.5){\makebox(0,0)[c]{3}}
\put(0.5,2.5){\circle{0.8}}
\put(1.5,2.5){\makebox(0,0)[c]{2}}
\put(1.5,2.5){\circle{0.8}}
\put(2.5,2.5){\makebox(0,0)[c]{7}}
\put(2.5,2.5){\circle{0.8}}
\put(3.5,2.5){\makebox(0,0)[c]{9}}
\put(3.5,2.5){\circle{0.8}}
\put(4.5,2.5){\makebox(0,0)[c]{9}}
\put(4.5,2.5){\circle{0.8}}
\put(5.5,2.5){\makebox(0,0)[c]{1}}
\put(5.5,2.5){\circle{0.8}}
\put(6.5,2.5){\makebox(0,0)[c]{1}}
\put(6.5,2.5){\circle{0.8}}
\put(0.5,3.5){\makebox(0,0)[c]{1}}
\put(0.5,3.5){\circle{0.8}}
\put(1.5,3.5){\makebox(0,0)[c]{1}}
\put(1.5,3.5){\circle{0.8}}
\put(2.5,3.5){\makebox(0,0)[c]{9}}
\put(2.5,3.5){\circle{0.8}}
\put(3.5,3.5){\makebox(0,0)[c]{3}}
\put(3.5,3.5){\circle{0.8}}
\put(4.5,3.5){\makebox(0,0)[c]{4}}
\put(4.5,3.5){\circle{0.8}}
\put(5.5,3.5){\makebox(0,0)[c]{2}}
\put(5.5,3.5){\circle{0.8}}
\put(6.5,3.5){\makebox(0,0)[c]{6}}
\put(6.5,3.5){\circle{0.8}}
\put(0.5,4.5){\makebox(0,0)[c]{1}}
\put(0.5,4.5){\circle{0.8}}
\put(1.5,4.5){\makebox(0,0)[c]{0}}
\put(1.5,4.5){\circle{0.8}}
\put(2.5,4.5){\makebox(0,0)[c]{4}}
\put(2.5,4.5){\circle{0.8}}
\put(3.5,4.5){\makebox(0,0)[c]{1}}
\put(3.5,4.5){\circle{0.8}}
\put(4.5,4.5){\makebox(0,0)[c]{1}}
\put(4.5,4.5){\circle{0.8}}
\put(5.5,4.5){\makebox(0,0)[c]{2}}
\put(5.5,4.5){\circle{0.8}}
\put(6.5,4.5){\makebox(0,0)[c]{5}}
\put(6.5,4.5){\circle{0.8}}
\put(0.5,5.5){\makebox(0,0)[c]{2}}
\put(0.5,5.5){\circle{0.8}}
\put(1.5,5.5){\makebox(0,0)[c]{1}}
\put(1.5,5.5){\circle{0.8}}
\put(2.5,5.5){\makebox(0,0)[c]{9}}
\put(2.5,5.5){\circle{0.8}}
\put(3.5,5.5){\makebox(0,0)[c]{8}}
\put(3.5,5.5){\circle{0.8}}
\put(4.5,5.5){\makebox(0,0)[c]{8}}
\put(4.5,5.5){\circle{0.8}}
\put(5.5,5.5){\makebox(0,0)[c]{9}}
\put(5.5,5.5){\circle{0.8}}
\put(6.5,5.5){\makebox(0,0)[c]{1}}
\put(6.5,5.5){\circle{0.8}}
\put(0.5,6.5){\makebox(0,0)[c]{1}}
\put(0.5,6.5){\circle{0.8}}
\put(1.5,6.5){\makebox(0,0)[c]{3}}
\put(1.5,6.5){\circle{0.8}}
\put(2.5,6.5){\makebox(0,0)[c]{8}}
\put(2.5,6.5){\circle{0.8}}
\put(3.5,6.5){\makebox(0,0)[c]{7}}
\put(3.5,6.5){\circle{0.8}}
\put(4.5,6.5){\makebox(0,0)[c]{8}}
\put(4.5,6.5){\circle{0.8}}
\put(5.5,6.5){\makebox(0,0)[c]{8}}
\put(5.5,6.5){\circle{0.8}}
\put(6.5,6.5){\makebox(0,0)[c]{2}}
\put(6.5,6.5){\circle{0.8}}
\allinethickness{1.5pt}
\put(0,0){\line(1,0){7}}
\put(0,0){\line(0,1){7}}
\put(7,7){\line(-1,0){7}}
\put(7,7){\line(0,-1){7}}
\put(1,6.5){\line(1,0){0.11}}
\put(1,6.5){\line(-1,0){0.11}}
\put(2,6.5){\line(1,0){0.11}}
\put(2,6.5){\line(-1,0){0.11}}
\put(3,6.5){\line(1,0){0.11}}
\put(3,6.5){\line(-1,0){0.11}}
\put(4,6.5){\line(1,0){0.11}}
\put(4,6.5){\line(-1,0){0.11}}
\put(5,6.5){\line(1,0){0.11}}
\put(5,6.5){\line(-1,0){0.11}}
\put(1,5.5){\line(1,0){0.11}}
\put(1,5.5){\line(-1,0){0.11}}
\put(3,5.5){\line(1,0){0.11}}
\put(3,5.5){\line(-1,0){0.11}}
\put(4,5.5){\line(1,0){0.11}}
\put(4,5.5){\line(-1,0){0.11}}
\put(5,5.5){\line(1,0){0.11}}
\put(5,5.5){\line(-1,0){0.11}}
\put(1,4.5){\line(1,0){0.11}}
\put(1,4.5){\line(-1,0){0.11}}
\put(2,4.5){\line(1,0){0.11}}
\put(2,4.5){\line(-1,0){0.11}}
\put(3,4.5){\line(1,0){0.11}}
\put(3,4.5){\line(-1,0){0.11}}
\put(4,4.5){\line(1,0){0.11}}
\put(4,4.5){\line(-1,0){0.11}}
\put(5,4.5){\line(1,0){0.11}}
\put(5,4.5){\line(-1,0){0.11}}
\put(6,4.5){\line(1,0){0.11}}
\put(6,4.5){\line(-1,0){0.11}}
\put(1,3.5){\line(1,0){0.11}}
\put(1,3.5){\line(-1,0){0.11}}
\put(4,3.5){\line(1,0){0.11}}
\put(4,3.5){\line(-1,0){0.11}}
\put(5,3.5){\line(1,0){0.11}}
\put(5,3.5){\line(-1,0){0.11}}
\put(6,3.5){\line(1,0){0.11}}
\put(6,3.5){\line(-1,0){0.11}}
\put(1,2.5){\line(1,0){0.11}}
\put(1,2.5){\line(-1,0){0.11}}
\put(2,2.5){\line(1,0){0.11}}
\put(2,2.5){\line(-1,0){0.11}}
\put(3,2.5){\line(1,0){0.11}}
\put(3,2.5){\line(-1,0){0.11}}
\put(4,2.5){\line(1,0){0.11}}
\put(4,2.5){\line(-1,0){0.11}}
\put(6,2.5){\line(1,0){0.11}}
\put(6,2.5){\line(-1,0){0.11}}
\put(1,1.5){\line(1,0){0.11}}
\put(1,1.5){\line(-1,0){0.11}}
\put(3,1.5){\line(1,0){0.11}}
\put(3,1.5){\line(-1,0){0.11}}
\put(4,1.5){\line(1,0){0.11}}
\put(4,1.5){\line(-1,0){0.11}}
\put(5,1.5){\line(1,0){0.11}}
\put(5,1.5){\line(-1,0){0.11}}
\put(6,1.5){\line(1,0){0.11}}
\put(6,1.5){\line(-1,0){0.11}}
\put(1,0.5){\line(1,0){0.11}}
\put(1,0.5){\line(-1,0){0.11}}
\put(4,0.5){\line(1,0){0.11}}
\put(4,0.5){\line(-1,0){0.11}}
\put(5,0.5){\line(1,0){0.11}}
\put(5,0.5){\line(-1,0){0.11}}
\put(6,0.5){\line(1,0){0.11}}
\put(6,0.5){\line(-1,0){0.11}}
\put(0.5,6){\line(0,1){0.11}}
\put(0.5,6){\line(0,-1){0.11}}
\put(0.5,5){\line(0,1){0.11}}
\put(0.5,5){\line(0,-1){0.11}}
\put(0.5,4){\line(0,1){0.11}}
\put(0.5,4){\line(0,-1){0.11}}
\put(0.5,3){\line(0,1){0.11}}
\put(0.5,3){\line(0,-1){0.11}}
\put(0.5,2){\line(0,1){0.11}}
\put(0.5,2){\line(0,-1){0.11}}
\put(0.5,1){\line(0,1){0.11}}
\put(0.5,1){\line(0,-1){0.11}}
\put(1.5,6){\line(0,1){0.11}}
\put(1.5,6){\line(0,-1){0.11}}
\put(1.5,5){\line(0,1){0.11}}
\put(1.5,5){\line(0,-1){0.11}}
\put(1.5,4){\line(0,1){0.11}}
\put(1.5,4){\line(0,-1){0.11}}
\put(1.5,3){\line(0,1){0.11}}
\put(1.5,3){\line(0,-1){0.11}}
\put(1.5,2){\line(0,1){0.11}}
\put(1.5,2){\line(0,-1){0.11}}
\put(1.5,1){\line(0,1){0.11}}
\put(1.5,1){\line(0,-1){0.11}}
\put(2.5,6){\line(0,1){0.11}}
\put(2.5,6){\line(0,-1){0.11}}
\put(2.5,5){\line(0,1){0.11}}
\put(2.5,5){\line(0,-1){0.11}}
\put(2.5,4){\line(0,1){0.11}}
\put(2.5,4){\line(0,-1){0.11}}
\put(2.5,3){\line(0,1){0.11}}
\put(2.5,3){\line(0,-1){0.11}}
\put(2.5,2){\line(0,1){0.11}}
\put(2.5,2){\line(0,-1){0.11}}
\put(2.5,1){\line(0,1){0.11}}
\put(2.5,1){\line(0,-1){0.11}}
\put(3.5,6){\line(0,1){0.11}}
\put(3.5,6){\line(0,-1){0.11}}
\put(3.5,4){\line(0,1){0.11}}
\put(3.5,4){\line(0,-1){0.11}}
\put(3.5,2){\line(0,1){0.11}}
\put(3.5,2){\line(0,-1){0.11}}
\put(3.5,1){\line(0,1){0.11}}
\put(3.5,1){\line(0,-1){0.11}}
\put(4.5,6){\line(0,1){0.11}}
\put(4.5,6){\line(0,-1){0.11}}
\put(4.5,4){\line(0,1){0.11}}
\put(4.5,4){\line(0,-1){0.11}}
\put(4.5,3){\line(0,1){0.11}}
\put(4.5,3){\line(0,-1){0.11}}
\put(4.5,2){\line(0,1){0.11}}
\put(4.5,2){\line(0,-1){0.11}}
\put(4.5,1){\line(0,1){0.11}}
\put(4.5,1){\line(0,-1){0.11}}
\put(5.5,6){\line(0,1){0.11}}
\put(5.5,6){\line(0,-1){0.11}}
\put(5.5,4){\line(0,1){0.11}}
\put(5.5,4){\line(0,-1){0.11}}
\put(5.5,3){\line(0,1){0.11}}
\put(5.5,3){\line(0,-1){0.11}}
\put(5.5,2){\line(0,1){0.11}}
\put(5.5,2){\line(0,-1){0.11}}
\put(5.5,1){\line(0,1){0.11}}
\put(5.5,1){\line(0,-1){0.11}}
\put(6.5,6){\line(0,1){0.11}}
\put(6.5,6){\line(0,-1){0.11}}
\put(6.5,5){\line(0,1){0.11}}
\put(6.5,5){\line(0,-1){0.11}}
\put(6.5,4){\line(0,1){0.11}}
\put(6.5,4){\line(0,-1){0.11}}
\put(6.5,3){\line(0,1){0.11}}
\put(6.5,3){\line(0,-1){0.11}}
\put(6.5,1){\line(0,1){0.11}}
\put(6.5,1){\line(0,-1){0.11}}
\end{picture}
    }
  \end{tabular}
\end{center}
 \caption{Example from~\cite{Soille2007} of a 7x7 image 
 and its partitions into $\alpha$-connected components for $\alpha$
 ranging from 0 to 5. 
(a) 0-CCs. (b) 1-CCs. (c) 2-CCs. (d) 3-CCs.
(e) 4-CCs. (f) 5-CCs. 
}
 \label{fig:alphapami}
\end{figure*}
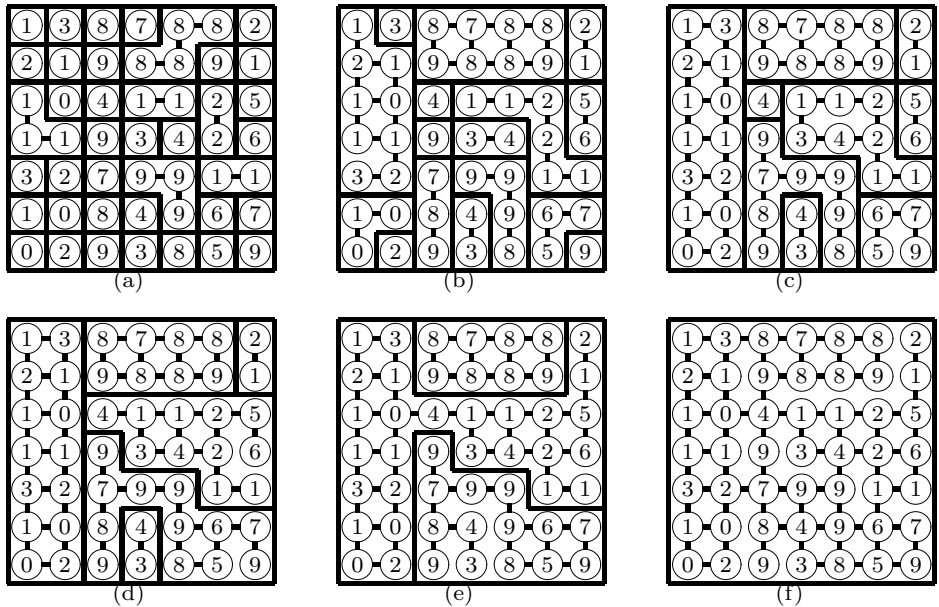

We now define the $(\alpha,\omega)$-connected component of an
arbitrary point $x$ as the largest $\alpha$-connected component of $x$
whose range is lower that $\omega$; 
more precisely,
\begin{eqnarray}
\label{eq:alphaomegaCC}
(\alpha,\omega)\mbox{-CC}(x)  = \sup\{ \beta\mbox{-CC}(x) & | & \beta\leq\alpha\mbox{~and~}\nonumber\\ 
& & R_f(\beta\mbox{-CC}(x))\leq\omega \}
\end{eqnarray}

The $(\alpha,\omega)$-CCs also define a hierarchy, that is called a
{\em constrained connectivity hierarchy}. We have:
\begin{equation}
(\alpha,\omega)\mbox{-CC}(x) \subseteq (\alpha',\omega')\mbox{-CC}(x)
\end{equation}
whenever $\alpha'\geq\alpha$ and $\forall \omega'\geq\omega$.  In
practice~\cite{Soille2007}, we are interested in this hierarchy for
$\alpha=\omega$, {\em i.e.}, for any $x\in V$ and any $\lambda\geq 0$,
we are looking for ($\lambda,\lambda$)-CC(x). 

Thus, informally, a hierarchy of $\alpha$-connected components is
given by connectivity relations constraining gray-level variations
along connected paths; a constrained connectivity hierarchy is given
by connectivity relations constraining gray-level variations both
along connected paths and within entire connected components.

An example of a constrained-connectivity hierarchy is given in
Fig.~\ref{fig:alphaomegapami}. 

\begin{figure*}[htbp]
\begin{center}
 \begin{tabular}{ccc}	
    \subfigure[]{
      \setlength{\unitlength}{0.5cm}
\begin{picture}(7,7)(0,0)
\setcoordinatesystem units <0.5cm,0.5cm> point at 0 0
\setplotarea x from 0 to 7, y from 0 to 7
\put(0.5,0.5){\makebox(0,0)[c]{0}}
\put(0.5,0.5){\circle{0.8}}
\put(1.5,0.5){\makebox(0,0)[c]{2}}
\put(1.5,0.5){\circle{0.8}}
\put(2.5,0.5){\makebox(0,0)[c]{9}}
\put(2.5,0.5){\circle{0.8}}
\put(3.5,0.5){\makebox(0,0)[c]{3}}
\put(3.5,0.5){\circle{0.8}}
\put(4.5,0.5){\makebox(0,0)[c]{8}}
\put(4.5,0.5){\circle{0.8}}
\put(5.5,0.5){\makebox(0,0)[c]{5}}
\put(5.5,0.5){\circle{0.8}}
\put(6.5,0.5){\makebox(0,0)[c]{9}}
\put(6.5,0.5){\circle{0.8}}
\put(0.5,1.5){\makebox(0,0)[c]{1}}
\put(0.5,1.5){\circle{0.8}}
\put(1.5,1.5){\makebox(0,0)[c]{0}}
\put(1.5,1.5){\circle{0.8}}
\put(2.5,1.5){\makebox(0,0)[c]{8}}
\put(2.5,1.5){\circle{0.8}}
\put(3.5,1.5){\makebox(0,0)[c]{4}}
\put(3.5,1.5){\circle{0.8}}
\put(4.5,1.5){\makebox(0,0)[c]{9}}
\put(4.5,1.5){\circle{0.8}}
\put(5.5,1.5){\makebox(0,0)[c]{6}}
\put(5.5,1.5){\circle{0.8}}
\put(6.5,1.5){\makebox(0,0)[c]{7}}
\put(6.5,1.5){\circle{0.8}}
\put(0.5,2.5){\makebox(0,0)[c]{3}}
\put(0.5,2.5){\circle{0.8}}
\put(1.5,2.5){\makebox(0,0)[c]{2}}
\put(1.5,2.5){\circle{0.8}}
\put(2.5,2.5){\makebox(0,0)[c]{7}}
\put(2.5,2.5){\circle{0.8}}
\put(3.5,2.5){\makebox(0,0)[c]{9}}
\put(3.5,2.5){\circle{0.8}}
\put(4.5,2.5){\makebox(0,0)[c]{9}}
\put(4.5,2.5){\circle{0.8}}
\put(5.5,2.5){\makebox(0,0)[c]{1}}
\put(5.5,2.5){\circle{0.8}}
\put(6.5,2.5){\makebox(0,0)[c]{1}}
\put(6.5,2.5){\circle{0.8}}
\put(0.5,3.5){\makebox(0,0)[c]{1}}
\put(0.5,3.5){\circle{0.8}}
\put(1.5,3.5){\makebox(0,0)[c]{1}}
\put(1.5,3.5){\circle{0.8}}
\put(2.5,3.5){\makebox(0,0)[c]{9}}
\put(2.5,3.5){\circle{0.8}}
\put(3.5,3.5){\makebox(0,0)[c]{3}}
\put(3.5,3.5){\circle{0.8}}
\put(4.5,3.5){\makebox(0,0)[c]{4}}
\put(4.5,3.5){\circle{0.8}}
\put(5.5,3.5){\makebox(0,0)[c]{2}}
\put(5.5,3.5){\circle{0.8}}
\put(6.5,3.5){\makebox(0,0)[c]{6}}
\put(6.5,3.5){\circle{0.8}}
\put(0.5,4.5){\makebox(0,0)[c]{1}}
\put(0.5,4.5){\circle{0.8}}
\put(1.5,4.5){\makebox(0,0)[c]{0}}
\put(1.5,4.5){\circle{0.8}}
\put(2.5,4.5){\makebox(0,0)[c]{4}}
\put(2.5,4.5){\circle{0.8}}
\put(3.5,4.5){\makebox(0,0)[c]{1}}
\put(3.5,4.5){\circle{0.8}}
\put(4.5,4.5){\makebox(0,0)[c]{1}}
\put(4.5,4.5){\circle{0.8}}
\put(5.5,4.5){\makebox(0,0)[c]{2}}
\put(5.5,4.5){\circle{0.8}}
\put(6.5,4.5){\makebox(0,0)[c]{5}}
\put(6.5,4.5){\circle{0.8}}
\put(0.5,5.5){\makebox(0,0)[c]{2}}
\put(0.5,5.5){\circle{0.8}}
\put(1.5,5.5){\makebox(0,0)[c]{1}}
\put(1.5,5.5){\circle{0.8}}
\put(2.5,5.5){\makebox(0,0)[c]{9}}
\put(2.5,5.5){\circle{0.8}}
\put(3.5,5.5){\makebox(0,0)[c]{8}}
\put(3.5,5.5){\circle{0.8}}
\put(4.5,5.5){\makebox(0,0)[c]{8}}
\put(4.5,5.5){\circle{0.8}}
\put(5.5,5.5){\makebox(0,0)[c]{9}}
\put(5.5,5.5){\circle{0.8}}
\put(6.5,5.5){\makebox(0,0)[c]{1}}
\put(6.5,5.5){\circle{0.8}}
\put(0.5,6.5){\makebox(0,0)[c]{1}}
\put(0.5,6.5){\circle{0.8}}
\put(1.5,6.5){\makebox(0,0)[c]{3}}
\put(1.5,6.5){\circle{0.8}}
\put(2.5,6.5){\makebox(0,0)[c]{8}}
\put(2.5,6.5){\circle{0.8}}
\put(3.5,6.5){\makebox(0,0)[c]{7}}
\put(3.5,6.5){\circle{0.8}}
\put(4.5,6.5){\makebox(0,0)[c]{8}}
\put(4.5,6.5){\circle{0.8}}
\put(5.5,6.5){\makebox(0,0)[c]{8}}
\put(5.5,6.5){\circle{0.8}}
\put(6.5,6.5){\makebox(0,0)[c]{2}}
\put(6.5,6.5){\circle{0.8}}
\allinethickness{1.5pt}
\put(0,0){\line(1,0){7}}
\put(0,0){\line(0,1){7}}
\put(7,7){\line(-1,0){7}}
\put(7,7){\line(0,-1){7}}
\put(1,7){\line(0,-1){1}}
\put(2,7){\line(0,-1){1}}
\put(3,7){\line(0,-1){1}}
\put(4,7){\line(0,-1){1}}
\put(5,6.5){\line(1,0){0.11}}
\put(5,6.5){\line(-1,0){0.11}}
\put(6,7){\line(0,-1){1}}
\put(1,6){\line(0,-1){1}}
\put(2,6){\line(0,-1){1}}
\put(3,6){\line(0,-1){1}}
\put(4,5.5){\line(1,0){0.11}}
\put(4,5.5){\line(-1,0){0.11}}
\put(5,6){\line(0,-1){1}}
\put(6,6){\line(0,-1){1}}
\put(1,5){\line(0,-1){1}}
\put(2,5){\line(0,-1){1}}
\put(3,5){\line(0,-1){1}}
\put(4,4.5){\line(1,0){0.11}}
\put(4,4.5){\line(-1,0){0.11}}
\put(5,4.5){\line(1,0){0.11}}
\put(5,4.5){\line(-1,0){0.11}}
\put(6,5){\line(0,-1){1}}
\put(1,3.5){\line(1,0){0.11}}
\put(1,3.5){\line(-1,0){0.11}}
\put(2,4){\line(0,-1){1}}
\put(3,4){\line(0,-1){1}}
\put(4,3.5){\line(1,0){0.11}}
\put(4,3.5){\line(-1,0){0.11}}
\put(5,4){\line(0,-1){1}}
\put(6,4){\line(0,-1){1}}
\put(1,3){\line(0,-1){1}}
\put(2,3){\line(0,-1){1}}
\put(3,3){\line(0,-1){1}}
\put(4,2.5){\line(1,0){0.11}}
\put(4,2.5){\line(-1,0){0.11}}
\put(5,3){\line(0,-1){1}}
\put(6,2.5){\line(1,0){0.11}}
\put(6,2.5){\line(-1,0){0.11}}
\put(1,1.5){\line(1,0){0.11}}
\put(1,1.5){\line(-1,0){0.11}}
\put(2,2){\line(0,-1){1}}
\put(3,2){\line(0,-1){1}}
\put(4,2){\line(0,-1){1}}
\put(5,2){\line(0,-1){1}}
\put(6,2){\line(0,-1){1}}
\put(1,1){\line(0,-1){1}}
\put(2,1){\line(0,-1){1}}
\put(3,1){\line(0,-1){1}}
\put(4,1){\line(0,-1){1}}
\put(5,1){\line(0,-1){1}}
\put(6,1){\line(0,-1){1}}
\put(0,6){\line(1,0){1}}
\put(0,5){\line(1,0){1}}
\put(0.5,4){\line(0,1){0.11}}
\put(0.5,4){\line(0,-1){0.11}}
\put(0,3){\line(1,0){1}}
\put(0,2){\line(1,0){1}}
\put(0.5,1){\line(0,1){0.11}}
\put(0.5,1){\line(0,-1){0.11}}
\put(1,6){\line(1,0){1}}
\put(1,5){\line(1,0){1}}
\put(1,4){\line(1,0){1}}
\put(1,3){\line(1,0){1}}
\put(1,2){\line(1,0){1}}
\put(1,1){\line(1,0){1}}
\put(2,6){\line(1,0){1}}
\put(2,5){\line(1,0){1}}
\put(2,4){\line(1,0){1}}
\put(2,3){\line(1,0){1}}
\put(2,2){\line(1,0){1}}
\put(2,1){\line(1,0){1}}
\put(3,6){\line(1,0){1}}
\put(3,5){\line(1,0){1}}
\put(3,4){\line(1,0){1}}
\put(3,3){\line(1,0){1}}
\put(3,2){\line(1,0){1}}
\put(3.5,1){\line(0,1){0.11}}
\put(3.5,1){\line(0,-1){0.11}}
\put(4.5,6){\line(0,1){0.11}}
\put(4.5,6){\line(0,-1){0.11}}
\put(4,5){\line(1,0){1}}
\put(4,4){\line(1,0){1}}
\put(4,3){\line(1,0){1}}
\put(4.5,2){\line(0,1){0.11}}
\put(4.5,2){\line(0,-1){0.11}}
\put(4.5,1){\line(0,1){0.11}}
\put(4.5,1){\line(0,-1){0.11}}
\put(5,6){\line(1,0){1}}
\put(5,5){\line(1,0){1}}
\put(5.5,4){\line(0,1){0.11}}
\put(5.5,4){\line(0,-1){0.11}}
\put(5.5,3){\line(0,1){0.11}}
\put(5.5,3){\line(0,-1){0.11}}
\put(5,2){\line(1,0){1}}
\put(5,1){\line(1,0){1}}
\put(6.5,6){\line(0,1){0.11}}
\put(6.5,6){\line(0,-1){0.11}}
\put(6,5){\line(1,0){1}}
\put(6.5,4){\line(0,1){0.11}}
\put(6.5,4){\line(0,-1){0.11}}
\put(6,3){\line(1,0){1}}
\put(6,2){\line(1,0){1}}
\put(6,1){\line(1,0){1}}
\end{picture}
    }
    &
    \subfigure[]{
      \setlength{\unitlength}{0.5cm}
\begin{picture}(7,7)(0,0)
\setcoordinatesystem units <0.5cm,0.5cm> point at 0 0
\setplotarea x from 0 to 7, y from 0 to 7
\put(0.5,0.5){\makebox(0,0)[c]{0}}
\put(0.5,0.5){\circle{0.8}}
\put(1.5,0.5){\makebox(0,0)[c]{2}}
\put(1.5,0.5){\circle{0.8}}
\put(2.5,0.5){\makebox(0,0)[c]{9}}
\put(2.5,0.5){\circle{0.8}}
\put(3.5,0.5){\makebox(0,0)[c]{3}}
\put(3.5,0.5){\circle{0.8}}
\put(4.5,0.5){\makebox(0,0)[c]{8}}
\put(4.5,0.5){\circle{0.8}}
\put(5.5,0.5){\makebox(0,0)[c]{5}}
\put(5.5,0.5){\circle{0.8}}
\put(6.5,0.5){\makebox(0,0)[c]{9}}
\put(6.5,0.5){\circle{0.8}}
\put(0.5,1.5){\makebox(0,0)[c]{1}}
\put(0.5,1.5){\circle{0.8}}
\put(1.5,1.5){\makebox(0,0)[c]{0}}
\put(1.5,1.5){\circle{0.8}}
\put(2.5,1.5){\makebox(0,0)[c]{8}}
\put(2.5,1.5){\circle{0.8}}
\put(3.5,1.5){\makebox(0,0)[c]{4}}
\put(3.5,1.5){\circle{0.8}}
\put(4.5,1.5){\makebox(0,0)[c]{9}}
\put(4.5,1.5){\circle{0.8}}
\put(5.5,1.5){\makebox(0,0)[c]{6}}
\put(5.5,1.5){\circle{0.8}}
\put(6.5,1.5){\makebox(0,0)[c]{7}}
\put(6.5,1.5){\circle{0.8}}
\put(0.5,2.5){\makebox(0,0)[c]{3}}
\put(0.5,2.5){\circle{0.8}}
\put(1.5,2.5){\makebox(0,0)[c]{2}}
\put(1.5,2.5){\circle{0.8}}
\put(2.5,2.5){\makebox(0,0)[c]{7}}
\put(2.5,2.5){\circle{0.8}}
\put(3.5,2.5){\makebox(0,0)[c]{9}}
\put(3.5,2.5){\circle{0.8}}
\put(4.5,2.5){\makebox(0,0)[c]{9}}
\put(4.5,2.5){\circle{0.8}}
\put(5.5,2.5){\makebox(0,0)[c]{1}}
\put(5.5,2.5){\circle{0.8}}
\put(6.5,2.5){\makebox(0,0)[c]{1}}
\put(6.5,2.5){\circle{0.8}}
\put(0.5,3.5){\makebox(0,0)[c]{1}}
\put(0.5,3.5){\circle{0.8}}
\put(1.5,3.5){\makebox(0,0)[c]{1}}
\put(1.5,3.5){\circle{0.8}}
\put(2.5,3.5){\makebox(0,0)[c]{9}}
\put(2.5,3.5){\circle{0.8}}
\put(3.5,3.5){\makebox(0,0)[c]{3}}
\put(3.5,3.5){\circle{0.8}}
\put(4.5,3.5){\makebox(0,0)[c]{4}}
\put(4.5,3.5){\circle{0.8}}
\put(5.5,3.5){\makebox(0,0)[c]{2}}
\put(5.5,3.5){\circle{0.8}}
\put(6.5,3.5){\makebox(0,0)[c]{6}}
\put(6.5,3.5){\circle{0.8}}
\put(0.5,4.5){\makebox(0,0)[c]{1}}
\put(0.5,4.5){\circle{0.8}}
\put(1.5,4.5){\makebox(0,0)[c]{0}}
\put(1.5,4.5){\circle{0.8}}
\put(2.5,4.5){\makebox(0,0)[c]{4}}
\put(2.5,4.5){\circle{0.8}}
\put(3.5,4.5){\makebox(0,0)[c]{1}}
\put(3.5,4.5){\circle{0.8}}
\put(4.5,4.5){\makebox(0,0)[c]{1}}
\put(4.5,4.5){\circle{0.8}}
\put(5.5,4.5){\makebox(0,0)[c]{2}}
\put(5.5,4.5){\circle{0.8}}
\put(6.5,4.5){\makebox(0,0)[c]{5}}
\put(6.5,4.5){\circle{0.8}}
\put(0.5,5.5){\makebox(0,0)[c]{2}}
\put(0.5,5.5){\circle{0.8}}
\put(1.5,5.5){\makebox(0,0)[c]{1}}
\put(1.5,5.5){\circle{0.8}}
\put(2.5,5.5){\makebox(0,0)[c]{9}}
\put(2.5,5.5){\circle{0.8}}
\put(3.5,5.5){\makebox(0,0)[c]{8}}
\put(3.5,5.5){\circle{0.8}}
\put(4.5,5.5){\makebox(0,0)[c]{8}}
\put(4.5,5.5){\circle{0.8}}
\put(5.5,5.5){\makebox(0,0)[c]{9}}
\put(5.5,5.5){\circle{0.8}}
\put(6.5,5.5){\makebox(0,0)[c]{1}}
\put(6.5,5.5){\circle{0.8}}
\put(0.5,6.5){\makebox(0,0)[c]{1}}
\put(0.5,6.5){\circle{0.8}}
\put(1.5,6.5){\makebox(0,0)[c]{3}}
\put(1.5,6.5){\circle{0.8}}
\put(2.5,6.5){\makebox(0,0)[c]{8}}
\put(2.5,6.5){\circle{0.8}}
\put(3.5,6.5){\makebox(0,0)[c]{7}}
\put(3.5,6.5){\circle{0.8}}
\put(4.5,6.5){\makebox(0,0)[c]{8}}
\put(4.5,6.5){\circle{0.8}}
\put(5.5,6.5){\makebox(0,0)[c]{8}}
\put(5.5,6.5){\circle{0.8}}
\put(6.5,6.5){\makebox(0,0)[c]{2}}
\put(6.5,6.5){\circle{0.8}}
\allinethickness{1.5pt}
\put(0,0){\line(1,0){7}}
\put(0,0){\line(0,1){7}}
\put(7,7){\line(-1,0){7}}
\put(7,7){\line(0,-1){7}}
\put(1,7){\line(0,-1){1}}
\put(2,7){\line(0,-1){1}}
\put(3,6.5){\line(1,0){0.11}}
\put(3,6.5){\line(-1,0){0.11}}
\put(4,6.5){\line(1,0){0.11}}
\put(4,6.5){\line(-1,0){0.11}}
\put(5,6.5){\line(1,0){0.11}}
\put(5,6.5){\line(-1,0){0.11}}
\put(6,7){\line(0,-1){1}}
\put(1,6){\line(0,-1){1}}
\put(2,6){\line(0,-1){1}}
\put(3,5.5){\line(1,0){0.11}}
\put(3,5.5){\line(-1,0){0.11}}
\put(4,5.5){\line(1,0){0.11}}
\put(4,5.5){\line(-1,0){0.11}}
\put(5,5.5){\line(1,0){0.11}}
\put(5,5.5){\line(-1,0){0.11}}
\put(6,6){\line(0,-1){1}}
\put(1,5){\line(0,-1){1}}
\put(2,5){\line(0,-1){1}}
\put(3,5){\line(0,-1){1}}
\put(4,4.5){\line(1,0){0.11}}
\put(4,4.5){\line(-1,0){0.11}}
\put(5,4.5){\line(1,0){0.11}}
\put(5,4.5){\line(-1,0){0.11}}
\put(6,5){\line(0,-1){1}}
\put(1,3.5){\line(1,0){0.11}}
\put(1,3.5){\line(-1,0){0.11}}
\put(2,4){\line(0,-1){1}}
\put(3,4){\line(0,-1){1}}
\put(4,3.5){\line(1,0){0.11}}
\put(4,3.5){\line(-1,0){0.11}}
\put(5,4){\line(0,-1){1}}
\put(6,4){\line(0,-1){1}}
\put(1,3){\line(0,-1){1}}
\put(2,3){\line(0,-1){1}}
\put(3,2.5){\line(1,0){0.11}}
\put(3,2.5){\line(-1,0){0.11}}
\put(4,2.5){\line(1,0){0.11}}
\put(4,2.5){\line(-1,0){0.11}}
\put(5,3){\line(0,-1){1}}
\put(6,2.5){\line(1,0){0.11}}
\put(6,2.5){\line(-1,0){0.11}}
\put(1,1.5){\line(1,0){0.11}}
\put(1,1.5){\line(-1,0){0.11}}
\put(2,2){\line(0,-1){1}}
\put(3,2){\line(0,-1){1}}
\put(4,2){\line(0,-1){1}}
\put(5,2){\line(0,-1){1}}
\put(6,1.5){\line(1,0){0.11}}
\put(6,1.5){\line(-1,0){0.11}}
\put(1,1){\line(0,-1){1}}
\put(2,1){\line(0,-1){1}}
\put(3,1){\line(0,-1){1}}
\put(4,1){\line(0,-1){1}}
\put(5,1){\line(0,-1){1}}
\put(6,1){\line(0,-1){1}}
\put(0,6){\line(1,0){1}}
\put(0,5){\line(1,0){1}}
\put(0.5,4){\line(0,1){0.11}}
\put(0.5,4){\line(0,-1){0.11}}
\put(0,3){\line(1,0){1}}
\put(0,2){\line(1,0){1}}
\put(0.5,1){\line(0,1){0.11}}
\put(0.5,1){\line(0,-1){0.11}}
\put(1,6){\line(1,0){1}}
\put(1,5){\line(1,0){1}}
\put(1,4){\line(1,0){1}}
\put(1,3){\line(1,0){1}}
\put(1,2){\line(1,0){1}}
\put(1,1){\line(1,0){1}}
\put(2.5,6){\line(0,1){0.11}}
\put(2.5,6){\line(0,-1){0.11}}
\put(2,5){\line(1,0){1}}
\put(2,4){\line(1,0){1}}
\put(2.5,3){\line(0,1){0.11}}
\put(2.5,3){\line(0,-1){0.11}}
\put(2.5,2){\line(0,1){0.11}}
\put(2.5,2){\line(0,-1){0.11}}
\put(2.5,1){\line(0,1){0.11}}
\put(2.5,1){\line(0,-1){0.11}}
\put(3.5,6){\line(0,1){0.11}}
\put(3.5,6){\line(0,-1){0.11}}
\put(3,5){\line(1,0){1}}
\put(3,4){\line(1,0){1}}
\put(3,3){\line(1,0){1}}
\put(3,2){\line(1,0){1}}
\put(3.5,1){\line(0,1){0.11}}
\put(3.5,1){\line(0,-1){0.11}}
\put(4.5,6){\line(0,1){0.11}}
\put(4.5,6){\line(0,-1){0.11}}
\put(4,5){\line(1,0){1}}
\put(4,4){\line(1,0){1}}
\put(4,3){\line(1,0){1}}
\put(4.5,2){\line(0,1){0.11}}
\put(4.5,2){\line(0,-1){0.11}}
\put(4.5,1){\line(0,1){0.11}}
\put(4.5,1){\line(0,-1){0.11}}
\put(5.5,6){\line(0,1){0.11}}
\put(5.5,6){\line(0,-1){0.11}}
\put(5,5){\line(1,0){1}}
\put(5.5,4){\line(0,1){0.11}}
\put(5.5,4){\line(0,-1){0.11}}
\put(5.5,3){\line(0,1){0.11}}
\put(5.5,3){\line(0,-1){0.11}}
\put(5,2){\line(1,0){1}}
\put(5.5,1){\line(0,1){0.11}}
\put(5.5,1){\line(0,-1){0.11}}
\put(6.5,6){\line(0,1){0.11}}
\put(6.5,6){\line(0,-1){0.11}}
\put(6,5){\line(1,0){1}}
\put(6.5,4){\line(0,1){0.11}}
\put(6.5,4){\line(0,-1){0.11}}
\put(6,3){\line(1,0){1}}
\put(6,2){\line(1,0){1}}
\put(6,1){\line(1,0){1}}
\end{picture}
    }
    &
    \subfigure[]{
      \setlength{\unitlength}{0.5cm}
\begin{picture}(7,7)(0,0)
\setcoordinatesystem units <0.5cm,0.5cm> point at 0 0
\setplotarea x from 0 to 7, y from 0 to 7
\put(0.5,0.5){\makebox(0,0)[c]{0}}
\put(0.5,0.5){\circle{0.8}}
\put(1.5,0.5){\makebox(0,0)[c]{2}}
\put(1.5,0.5){\circle{0.8}}
\put(2.5,0.5){\makebox(0,0)[c]{9}}
\put(2.5,0.5){\circle{0.8}}
\put(3.5,0.5){\makebox(0,0)[c]{3}}
\put(3.5,0.5){\circle{0.8}}
\put(4.5,0.5){\makebox(0,0)[c]{8}}
\put(4.5,0.5){\circle{0.8}}
\put(5.5,0.5){\makebox(0,0)[c]{5}}
\put(5.5,0.5){\circle{0.8}}
\put(6.5,0.5){\makebox(0,0)[c]{9}}
\put(6.5,0.5){\circle{0.8}}
\put(0.5,1.5){\makebox(0,0)[c]{1}}
\put(0.5,1.5){\circle{0.8}}
\put(1.5,1.5){\makebox(0,0)[c]{0}}
\put(1.5,1.5){\circle{0.8}}
\put(2.5,1.5){\makebox(0,0)[c]{8}}
\put(2.5,1.5){\circle{0.8}}
\put(3.5,1.5){\makebox(0,0)[c]{4}}
\put(3.5,1.5){\circle{0.8}}
\put(4.5,1.5){\makebox(0,0)[c]{9}}
\put(4.5,1.5){\circle{0.8}}
\put(5.5,1.5){\makebox(0,0)[c]{6}}
\put(5.5,1.5){\circle{0.8}}
\put(6.5,1.5){\makebox(0,0)[c]{7}}
\put(6.5,1.5){\circle{0.8}}
\put(0.5,2.5){\makebox(0,0)[c]{3}}
\put(0.5,2.5){\circle{0.8}}
\put(1.5,2.5){\makebox(0,0)[c]{2}}
\put(1.5,2.5){\circle{0.8}}
\put(2.5,2.5){\makebox(0,0)[c]{7}}
\put(2.5,2.5){\circle{0.8}}
\put(3.5,2.5){\makebox(0,0)[c]{9}}
\put(3.5,2.5){\circle{0.8}}
\put(4.5,2.5){\makebox(0,0)[c]{9}}
\put(4.5,2.5){\circle{0.8}}
\put(5.5,2.5){\makebox(0,0)[c]{1}}
\put(5.5,2.5){\circle{0.8}}
\put(6.5,2.5){\makebox(0,0)[c]{1}}
\put(6.5,2.5){\circle{0.8}}
\put(0.5,3.5){\makebox(0,0)[c]{1}}
\put(0.5,3.5){\circle{0.8}}
\put(1.5,3.5){\makebox(0,0)[c]{1}}
\put(1.5,3.5){\circle{0.8}}
\put(2.5,3.5){\makebox(0,0)[c]{9}}
\put(2.5,3.5){\circle{0.8}}
\put(3.5,3.5){\makebox(0,0)[c]{3}}
\put(3.5,3.5){\circle{0.8}}
\put(4.5,3.5){\makebox(0,0)[c]{4}}
\put(4.5,3.5){\circle{0.8}}
\put(5.5,3.5){\makebox(0,0)[c]{2}}
\put(5.5,3.5){\circle{0.8}}
\put(6.5,3.5){\makebox(0,0)[c]{6}}
\put(6.5,3.5){\circle{0.8}}
\put(0.5,4.5){\makebox(0,0)[c]{1}}
\put(0.5,4.5){\circle{0.8}}
\put(1.5,4.5){\makebox(0,0)[c]{0}}
\put(1.5,4.5){\circle{0.8}}
\put(2.5,4.5){\makebox(0,0)[c]{4}}
\put(2.5,4.5){\circle{0.8}}
\put(3.5,4.5){\makebox(0,0)[c]{1}}
\put(3.5,4.5){\circle{0.8}}
\put(4.5,4.5){\makebox(0,0)[c]{1}}
\put(4.5,4.5){\circle{0.8}}
\put(5.5,4.5){\makebox(0,0)[c]{2}}
\put(5.5,4.5){\circle{0.8}}
\put(6.5,4.5){\makebox(0,0)[c]{5}}
\put(6.5,4.5){\circle{0.8}}
\put(0.5,5.5){\makebox(0,0)[c]{2}}
\put(0.5,5.5){\circle{0.8}}
\put(1.5,5.5){\makebox(0,0)[c]{1}}
\put(1.5,5.5){\circle{0.8}}
\put(2.5,5.5){\makebox(0,0)[c]{9}}
\put(2.5,5.5){\circle{0.8}}
\put(3.5,5.5){\makebox(0,0)[c]{8}}
\put(3.5,5.5){\circle{0.8}}
\put(4.5,5.5){\makebox(0,0)[c]{8}}
\put(4.5,5.5){\circle{0.8}}
\put(5.5,5.5){\makebox(0,0)[c]{9}}
\put(5.5,5.5){\circle{0.8}}
\put(6.5,5.5){\makebox(0,0)[c]{1}}
\put(6.5,5.5){\circle{0.8}}
\put(0.5,6.5){\makebox(0,0)[c]{1}}
\put(0.5,6.5){\circle{0.8}}
\put(1.5,6.5){\makebox(0,0)[c]{3}}
\put(1.5,6.5){\circle{0.8}}
\put(2.5,6.5){\makebox(0,0)[c]{8}}
\put(2.5,6.5){\circle{0.8}}
\put(3.5,6.5){\makebox(0,0)[c]{7}}
\put(3.5,6.5){\circle{0.8}}
\put(4.5,6.5){\makebox(0,0)[c]{8}}
\put(4.5,6.5){\circle{0.8}}
\put(5.5,6.5){\makebox(0,0)[c]{8}}
\put(5.5,6.5){\circle{0.8}}
\put(6.5,6.5){\makebox(0,0)[c]{2}}
\put(6.5,6.5){\circle{0.8}}
\allinethickness{1.5pt}
\put(0,0){\line(1,0){7}}
\put(0,0){\line(0,1){7}}
\put(7,7){\line(-1,0){7}}
\put(7,7){\line(0,-1){7}}
\put(1,6.5){\line(1,0){0.11}}
\put(1,6.5){\line(-1,0){0.11}}
\put(2,7){\line(0,-1){1}}
\put(3,6.5){\line(1,0){0.11}}
\put(3,6.5){\line(-1,0){0.11}}
\put(4,6.5){\line(1,0){0.11}}
\put(4,6.5){\line(-1,0){0.11}}
\put(5,6.5){\line(1,0){0.11}}
\put(5,6.5){\line(-1,0){0.11}}
\put(6,7){\line(0,-1){1}}
\put(1,5.5){\line(1,0){0.11}}
\put(1,5.5){\line(-1,0){0.11}}
\put(2,6){\line(0,-1){1}}
\put(3,5.5){\line(1,0){0.11}}
\put(3,5.5){\line(-1,0){0.11}}
\put(4,5.5){\line(1,0){0.11}}
\put(4,5.5){\line(-1,0){0.11}}
\put(5,5.5){\line(1,0){0.11}}
\put(5,5.5){\line(-1,0){0.11}}
\put(6,6){\line(0,-1){1}}
\put(1,4.5){\line(1,0){0.11}}
\put(1,4.5){\line(-1,0){0.11}}
\put(2,5){\line(0,-1){1}}
\put(3,5){\line(0,-1){1}}
\put(4,4.5){\line(1,0){0.11}}
\put(4,4.5){\line(-1,0){0.11}}
\put(5,4.5){\line(1,0){0.11}}
\put(5,4.5){\line(-1,0){0.11}}
\put(6,5){\line(0,-1){1}}
\put(1,3.5){\line(1,0){0.11}}
\put(1,3.5){\line(-1,0){0.11}}
\put(2,4){\line(0,-1){1}}
\put(3,4){\line(0,-1){1}}
\put(4,3.5){\line(1,0){0.11}}
\put(4,3.5){\line(-1,0){0.11}}
\put(5,3.5){\line(1,0){0.11}}
\put(5,3.5){\line(-1,0){0.11}}
\put(6,4){\line(0,-1){1}}
\put(1,2.5){\line(1,0){0.11}}
\put(1,2.5){\line(-1,0){0.11}}
\put(2,3){\line(0,-1){1}}
\put(3,2.5){\line(1,0){0.11}}
\put(3,2.5){\line(-1,0){0.11}}
\put(4,2.5){\line(1,0){0.11}}
\put(4,2.5){\line(-1,0){0.11}}
\put(5,3){\line(0,-1){1}}
\put(6,2.5){\line(1,0){0.11}}
\put(6,2.5){\line(-1,0){0.11}}
\put(1,1.5){\line(1,0){0.11}}
\put(1,1.5){\line(-1,0){0.11}}
\put(2,2){\line(0,-1){1}}
\put(3,2){\line(0,-1){1}}
\put(4,2){\line(0,-1){1}}
\put(5,2){\line(0,-1){1}}
\put(6,1.5){\line(1,0){0.11}}
\put(6,1.5){\line(-1,0){0.11}}
\put(1,0.5){\line(1,0){0.11}}
\put(1,0.5){\line(-1,0){0.11}}
\put(2,1){\line(0,-1){1}}
\put(3,1){\line(0,-1){1}}
\put(4,1){\line(0,-1){1}}
\put(5,1){\line(0,-1){1}}
\put(6,1){\line(0,-1){1}}
\put(0.5,6){\line(0,1){0.11}}
\put(0.5,6){\line(0,-1){0.11}}
\put(0.5,5){\line(0,1){0.11}}
\put(0.5,5){\line(0,-1){0.11}}
\put(0.5,4){\line(0,1){0.11}}
\put(0.5,4){\line(0,-1){0.11}}
\put(0.5,3){\line(0,1){0.11}}
\put(0.5,3){\line(0,-1){0.11}}
\put(0.5,2){\line(0,1){0.11}}
\put(0.5,2){\line(0,-1){0.11}}
\put(0.5,1){\line(0,1){0.11}}
\put(0.5,1){\line(0,-1){0.11}}
\put(1.5,6){\line(0,1){0.11}}
\put(1.5,6){\line(0,-1){0.11}}
\put(1.5,5){\line(0,1){0.11}}
\put(1.5,5){\line(0,-1){0.11}}
\put(1.5,4){\line(0,1){0.11}}
\put(1.5,4){\line(0,-1){0.11}}
\put(1.5,3){\line(0,1){0.11}}
\put(1.5,3){\line(0,-1){0.11}}
\put(1.5,2){\line(0,1){0.11}}
\put(1.5,2){\line(0,-1){0.11}}
\put(1.5,1){\line(0,1){0.11}}
\put(1.5,1){\line(0,-1){0.11}}
\put(2.5,6){\line(0,1){0.11}}
\put(2.5,6){\line(0,-1){0.11}}
\put(2,5){\line(1,0){1}}
\put(2,4){\line(1,0){1}}
\put(2.5,3){\line(0,1){0.11}}
\put(2.5,3){\line(0,-1){0.11}}
\put(2.5,2){\line(0,1){0.11}}
\put(2.5,2){\line(0,-1){0.11}}
\put(2.5,1){\line(0,1){0.11}}
\put(2.5,1){\line(0,-1){0.11}}
\put(3.5,6){\line(0,1){0.11}}
\put(3.5,6){\line(0,-1){0.11}}
\put(3,5){\line(1,0){1}}
\put(3.5,4){\line(0,1){0.11}}
\put(3.5,4){\line(0,-1){0.11}}
\put(3,3){\line(1,0){1}}
\put(3,2){\line(1,0){1}}
\put(3.5,1){\line(0,1){0.11}}
\put(3.5,1){\line(0,-1){0.11}}
\put(4.5,6){\line(0,1){0.11}}
\put(4.5,6){\line(0,-1){0.11}}
\put(4,5){\line(1,0){1}}
\put(4,3){\line(1,0){1}}
\put(4.5,2){\line(0,1){0.11}}
\put(4.5,2){\line(0,-1){0.11}}
\put(4.5,1){\line(0,1){0.11}}
\put(4.5,1){\line(0,-1){0.11}}
\put(5.5,6){\line(0,1){0.11}}
\put(5.5,6){\line(0,-1){0.11}}
\put(5,5){\line(1,0){1}}
\put(5.5,4){\line(0,1){0.11}}
\put(5.5,4){\line(0,-1){0.11}}
\put(5.5,3){\line(0,1){0.11}}
\put(5.5,3){\line(0,-1){0.11}}
\put(5,2){\line(1,0){1}}
\put(5.5,1){\line(0,1){0.11}}
\put(5.5,1){\line(0,-1){0.11}}
\put(6.5,6){\line(0,1){0.11}}
\put(6.5,6){\line(0,-1){0.11}}
\put(6,5){\line(1,0){1}}
\put(6.5,4){\line(0,1){0.11}}
\put(6.5,4){\line(0,-1){0.11}}
\put(6,3){\line(1,0){1}}
\put(6,2){\line(1,0){1}}
\put(6,1){\line(1,0){1}}
\end{picture}
    }
\\
    \subfigure[]{
      \setlength{\unitlength}{0.5cm}
\begin{picture}(7,7)(0,0)
\setcoordinatesystem units <0.5cm,0.5cm> point at 0 0
\setplotarea x from 0 to 7, y from 0 to 7
\put(0.5,0.5){\makebox(0,0)[c]{0}}
\put(0.5,0.5){\circle{0.8}}
\put(1.5,0.5){\makebox(0,0)[c]{2}}
\put(1.5,0.5){\circle{0.8}}
\put(2.5,0.5){\makebox(0,0)[c]{9}}
\put(2.5,0.5){\circle{0.8}}
\put(3.5,0.5){\makebox(0,0)[c]{3}}
\put(3.5,0.5){\circle{0.8}}
\put(4.5,0.5){\makebox(0,0)[c]{8}}
\put(4.5,0.5){\circle{0.8}}
\put(5.5,0.5){\makebox(0,0)[c]{5}}
\put(5.5,0.5){\circle{0.8}}
\put(6.5,0.5){\makebox(0,0)[c]{9}}
\put(6.5,0.5){\circle{0.8}}
\put(0.5,1.5){\makebox(0,0)[c]{1}}
\put(0.5,1.5){\circle{0.8}}
\put(1.5,1.5){\makebox(0,0)[c]{0}}
\put(1.5,1.5){\circle{0.8}}
\put(2.5,1.5){\makebox(0,0)[c]{8}}
\put(2.5,1.5){\circle{0.8}}
\put(3.5,1.5){\makebox(0,0)[c]{4}}
\put(3.5,1.5){\circle{0.8}}
\put(4.5,1.5){\makebox(0,0)[c]{9}}
\put(4.5,1.5){\circle{0.8}}
\put(5.5,1.5){\makebox(0,0)[c]{6}}
\put(5.5,1.5){\circle{0.8}}
\put(6.5,1.5){\makebox(0,0)[c]{7}}
\put(6.5,1.5){\circle{0.8}}
\put(0.5,2.5){\makebox(0,0)[c]{3}}
\put(0.5,2.5){\circle{0.8}}
\put(1.5,2.5){\makebox(0,0)[c]{2}}
\put(1.5,2.5){\circle{0.8}}
\put(2.5,2.5){\makebox(0,0)[c]{7}}
\put(2.5,2.5){\circle{0.8}}
\put(3.5,2.5){\makebox(0,0)[c]{9}}
\put(3.5,2.5){\circle{0.8}}
\put(4.5,2.5){\makebox(0,0)[c]{9}}
\put(4.5,2.5){\circle{0.8}}
\put(5.5,2.5){\makebox(0,0)[c]{1}}
\put(5.5,2.5){\circle{0.8}}
\put(6.5,2.5){\makebox(0,0)[c]{1}}
\put(6.5,2.5){\circle{0.8}}
\put(0.5,3.5){\makebox(0,0)[c]{1}}
\put(0.5,3.5){\circle{0.8}}
\put(1.5,3.5){\makebox(0,0)[c]{1}}
\put(1.5,3.5){\circle{0.8}}
\put(2.5,3.5){\makebox(0,0)[c]{9}}
\put(2.5,3.5){\circle{0.8}}
\put(3.5,3.5){\makebox(0,0)[c]{3}}
\put(3.5,3.5){\circle{0.8}}
\put(4.5,3.5){\makebox(0,0)[c]{4}}
\put(4.5,3.5){\circle{0.8}}
\put(5.5,3.5){\makebox(0,0)[c]{2}}
\put(5.5,3.5){\circle{0.8}}
\put(6.5,3.5){\makebox(0,0)[c]{6}}
\put(6.5,3.5){\circle{0.8}}
\put(0.5,4.5){\makebox(0,0)[c]{1}}
\put(0.5,4.5){\circle{0.8}}
\put(1.5,4.5){\makebox(0,0)[c]{0}}
\put(1.5,4.5){\circle{0.8}}
\put(2.5,4.5){\makebox(0,0)[c]{4}}
\put(2.5,4.5){\circle{0.8}}
\put(3.5,4.5){\makebox(0,0)[c]{1}}
\put(3.5,4.5){\circle{0.8}}
\put(4.5,4.5){\makebox(0,0)[c]{1}}
\put(4.5,4.5){\circle{0.8}}
\put(5.5,4.5){\makebox(0,0)[c]{2}}
\put(5.5,4.5){\circle{0.8}}
\put(6.5,4.5){\makebox(0,0)[c]{5}}
\put(6.5,4.5){\circle{0.8}}
\put(0.5,5.5){\makebox(0,0)[c]{2}}
\put(0.5,5.5){\circle{0.8}}
\put(1.5,5.5){\makebox(0,0)[c]{1}}
\put(1.5,5.5){\circle{0.8}}
\put(2.5,5.5){\makebox(0,0)[c]{9}}
\put(2.5,5.5){\circle{0.8}}
\put(3.5,5.5){\makebox(0,0)[c]{8}}
\put(3.5,5.5){\circle{0.8}}
\put(4.5,5.5){\makebox(0,0)[c]{8}}
\put(4.5,5.5){\circle{0.8}}
\put(5.5,5.5){\makebox(0,0)[c]{9}}
\put(5.5,5.5){\circle{0.8}}
\put(6.5,5.5){\makebox(0,0)[c]{1}}
\put(6.5,5.5){\circle{0.8}}
\put(0.5,6.5){\makebox(0,0)[c]{1}}
\put(0.5,6.5){\circle{0.8}}
\put(1.5,6.5){\makebox(0,0)[c]{3}}
\put(1.5,6.5){\circle{0.8}}
\put(2.5,6.5){\makebox(0,0)[c]{8}}
\put(2.5,6.5){\circle{0.8}}
\put(3.5,6.5){\makebox(0,0)[c]{7}}
\put(3.5,6.5){\circle{0.8}}
\put(4.5,6.5){\makebox(0,0)[c]{8}}
\put(4.5,6.5){\circle{0.8}}
\put(5.5,6.5){\makebox(0,0)[c]{8}}
\put(5.5,6.5){\circle{0.8}}
\put(6.5,6.5){\makebox(0,0)[c]{2}}
\put(6.5,6.5){\circle{0.8}}
\allinethickness{1.5pt}
\put(0,0){\line(1,0){7}}
\put(0,0){\line(0,1){7}}
\put(7,7){\line(-1,0){7}}
\put(7,7){\line(0,-1){7}}
\put(1,6.5){\line(1,0){0.11}}
\put(1,6.5){\line(-1,0){0.11}}
\put(2,7){\line(0,-1){1}}
\put(3,6.5){\line(1,0){0.11}}
\put(3,6.5){\line(-1,0){0.11}}
\put(4,6.5){\line(1,0){0.11}}
\put(4,6.5){\line(-1,0){0.11}}
\put(5,6.5){\line(1,0){0.11}}
\put(5,6.5){\line(-1,0){0.11}}
\put(6,7){\line(0,-1){1}}
\put(1,5.5){\line(1,0){0.11}}
\put(1,5.5){\line(-1,0){0.11}}
\put(2,6){\line(0,-1){1}}
\put(3,5.5){\line(1,0){0.11}}
\put(3,5.5){\line(-1,0){0.11}}
\put(4,5.5){\line(1,0){0.11}}
\put(4,5.5){\line(-1,0){0.11}}
\put(5,5.5){\line(1,0){0.11}}
\put(5,5.5){\line(-1,0){0.11}}
\put(6,6){\line(0,-1){1}}
\put(1,4.5){\line(1,0){0.11}}
\put(1,4.5){\line(-1,0){0.11}}
\put(2,5){\line(0,-1){1}}
\put(3,5){\line(0,-1){1}}
\put(4,4.5){\line(1,0){0.11}}
\put(4,4.5){\line(-1,0){0.11}}
\put(5,4.5){\line(1,0){0.11}}
\put(5,4.5){\line(-1,0){0.11}}
\put(6,5){\line(0,-1){1}}
\put(1,3.5){\line(1,0){0.11}}
\put(1,3.5){\line(-1,0){0.11}}
\put(2,4){\line(0,-1){1}}
\put(3,4){\line(0,-1){1}}
\put(4,3.5){\line(1,0){0.11}}
\put(4,3.5){\line(-1,0){0.11}}
\put(5,3.5){\line(1,0){0.11}}
\put(5,3.5){\line(-1,0){0.11}}
\put(6,4){\line(0,-1){1}}
\put(1,2.5){\line(1,0){0.11}}
\put(1,2.5){\line(-1,0){0.11}}
\put(2,3){\line(0,-1){1}}
\put(3,2.5){\line(1,0){0.11}}
\put(3,2.5){\line(-1,0){0.11}}
\put(4,2.5){\line(1,0){0.11}}
\put(4,2.5){\line(-1,0){0.11}}
\put(5,3){\line(0,-1){1}}
\put(6,2.5){\line(1,0){0.11}}
\put(6,2.5){\line(-1,0){0.11}}
\put(1,1.5){\line(1,0){0.11}}
\put(1,1.5){\line(-1,0){0.11}}
\put(2,2){\line(0,-1){1}}
\put(3,2){\line(0,-1){1}}
\put(4,2){\line(0,-1){1}}
\put(5,1.5){\line(1,0){0.11}}
\put(5,1.5){\line(-1,0){0.11}}
\put(6,1.5){\line(1,0){0.11}}
\put(6,1.5){\line(-1,0){0.11}}
\put(1,0.5){\line(1,0){0.11}}
\put(1,0.5){\line(-1,0){0.11}}
\put(2,1){\line(0,-1){1}}
\put(3,1){\line(0,-1){1}}
\put(4,1){\line(0,-1){1}}
\put(5,0.5){\line(1,0){0.11}}
\put(5,0.5){\line(-1,0){0.11}}
\put(0.5,6){\line(0,1){0.11}}
\put(0.5,6){\line(0,-1){0.11}}
\put(0.5,5){\line(0,1){0.11}}
\put(0.5,5){\line(0,-1){0.11}}
\put(0.5,4){\line(0,1){0.11}}
\put(0.5,4){\line(0,-1){0.11}}
\put(0.5,3){\line(0,1){0.11}}
\put(0.5,3){\line(0,-1){0.11}}
\put(0.5,2){\line(0,1){0.11}}
\put(0.5,2){\line(0,-1){0.11}}
\put(0.5,1){\line(0,1){0.11}}
\put(0.5,1){\line(0,-1){0.11}}
\put(1.5,6){\line(0,1){0.11}}
\put(1.5,6){\line(0,-1){0.11}}
\put(1.5,5){\line(0,1){0.11}}
\put(1.5,5){\line(0,-1){0.11}}
\put(1.5,4){\line(0,1){0.11}}
\put(1.5,4){\line(0,-1){0.11}}
\put(1.5,3){\line(0,1){0.11}}
\put(1.5,3){\line(0,-1){0.11}}
\put(1.5,2){\line(0,1){0.11}}
\put(1.5,2){\line(0,-1){0.11}}
\put(1.5,1){\line(0,1){0.11}}
\put(1.5,1){\line(0,-1){0.11}}
\put(2.5,6){\line(0,1){0.11}}
\put(2.5,6){\line(0,-1){0.11}}
\put(2,5){\line(1,0){1}}
\put(2,4){\line(1,0){1}}
\put(2.5,3){\line(0,1){0.11}}
\put(2.5,3){\line(0,-1){0.11}}
\put(2.5,2){\line(0,1){0.11}}
\put(2.5,2){\line(0,-1){0.11}}
\put(2.5,1){\line(0,1){0.11}}
\put(2.5,1){\line(0,-1){0.11}}
\put(3.5,6){\line(0,1){0.11}}
\put(3.5,6){\line(0,-1){0.11}}
\put(3,5){\line(1,0){1}}
\put(3.5,4){\line(0,1){0.11}}
\put(3.5,4){\line(0,-1){0.11}}
\put(3,3){\line(1,0){1}}
\put(3,2){\line(1,0){1}}
\put(3.5,1){\line(0,1){0.11}}
\put(3.5,1){\line(0,-1){0.11}}
\put(4.5,6){\line(0,1){0.11}}
\put(4.5,6){\line(0,-1){0.11}}
\put(4,5){\line(1,0){1}}
\put(4,3){\line(1,0){1}}
\put(4.5,2){\line(0,1){0.11}}
\put(4.5,2){\line(0,-1){0.11}}
\put(4.5,1){\line(0,1){0.11}}
\put(4.5,1){\line(0,-1){0.11}}
\put(5.5,6){\line(0,1){0.11}}
\put(5.5,6){\line(0,-1){0.11}}
\put(5,5){\line(1,0){1}}
\put(5.5,4){\line(0,1){0.11}}
\put(5.5,4){\line(0,-1){0.11}}
\put(5.5,3){\line(0,1){0.11}}
\put(5.5,3){\line(0,-1){0.11}}
\put(5,2){\line(1,0){1}}
\put(5.5,1){\line(0,1){0.11}}
\put(5.5,1){\line(0,-1){0.11}}
\put(6.5,6){\line(0,1){0.11}}
\put(6.5,6){\line(0,-1){0.11}}
\put(6,5){\line(1,0){1}}
\put(6.5,4){\line(0,1){0.11}}
\put(6.5,4){\line(0,-1){0.11}}
\put(6,3){\line(1,0){1}}
\put(6,2){\line(1,0){1}}
\put(6.5,1){\line(0,1){0.11}}
\put(6.5,1){\line(0,-1){0.11}}
\end{picture}
    }
    &
    \subfigure[]{
      \setlength{\unitlength}{0.5cm}
\begin{picture}(7,7)(0,0)
\setcoordinatesystem units <0.5cm,0.5cm> point at 0 0
\setplotarea x from 0 to 7, y from 0 to 7
\put(0.5,0.5){\makebox(0,0)[c]{0}}
\put(0.5,0.5){\circle{0.8}}
\put(1.5,0.5){\makebox(0,0)[c]{2}}
\put(1.5,0.5){\circle{0.8}}
\put(2.5,0.5){\makebox(0,0)[c]{9}}
\put(2.5,0.5){\circle{0.8}}
\put(3.5,0.5){\makebox(0,0)[c]{3}}
\put(3.5,0.5){\circle{0.8}}
\put(4.5,0.5){\makebox(0,0)[c]{8}}
\put(4.5,0.5){\circle{0.8}}
\put(5.5,0.5){\makebox(0,0)[c]{5}}
\put(5.5,0.5){\circle{0.8}}
\put(6.5,0.5){\makebox(0,0)[c]{9}}
\put(6.5,0.5){\circle{0.8}}
\put(0.5,1.5){\makebox(0,0)[c]{1}}
\put(0.5,1.5){\circle{0.8}}
\put(1.5,1.5){\makebox(0,0)[c]{0}}
\put(1.5,1.5){\circle{0.8}}
\put(2.5,1.5){\makebox(0,0)[c]{8}}
\put(2.5,1.5){\circle{0.8}}
\put(3.5,1.5){\makebox(0,0)[c]{4}}
\put(3.5,1.5){\circle{0.8}}
\put(4.5,1.5){\makebox(0,0)[c]{9}}
\put(4.5,1.5){\circle{0.8}}
\put(5.5,1.5){\makebox(0,0)[c]{6}}
\put(5.5,1.5){\circle{0.8}}
\put(6.5,1.5){\makebox(0,0)[c]{7}}
\put(6.5,1.5){\circle{0.8}}
\put(0.5,2.5){\makebox(0,0)[c]{3}}
\put(0.5,2.5){\circle{0.8}}
\put(1.5,2.5){\makebox(0,0)[c]{2}}
\put(1.5,2.5){\circle{0.8}}
\put(2.5,2.5){\makebox(0,0)[c]{7}}
\put(2.5,2.5){\circle{0.8}}
\put(3.5,2.5){\makebox(0,0)[c]{9}}
\put(3.5,2.5){\circle{0.8}}
\put(4.5,2.5){\makebox(0,0)[c]{9}}
\put(4.5,2.5){\circle{0.8}}
\put(5.5,2.5){\makebox(0,0)[c]{1}}
\put(5.5,2.5){\circle{0.8}}
\put(6.5,2.5){\makebox(0,0)[c]{1}}
\put(6.5,2.5){\circle{0.8}}
\put(0.5,3.5){\makebox(0,0)[c]{1}}
\put(0.5,3.5){\circle{0.8}}
\put(1.5,3.5){\makebox(0,0)[c]{1}}
\put(1.5,3.5){\circle{0.8}}
\put(2.5,3.5){\makebox(0,0)[c]{9}}
\put(2.5,3.5){\circle{0.8}}
\put(3.5,3.5){\makebox(0,0)[c]{3}}
\put(3.5,3.5){\circle{0.8}}
\put(4.5,3.5){\makebox(0,0)[c]{4}}
\put(4.5,3.5){\circle{0.8}}
\put(5.5,3.5){\makebox(0,0)[c]{2}}
\put(5.5,3.5){\circle{0.8}}
\put(6.5,3.5){\makebox(0,0)[c]{6}}
\put(6.5,3.5){\circle{0.8}}
\put(0.5,4.5){\makebox(0,0)[c]{1}}
\put(0.5,4.5){\circle{0.8}}
\put(1.5,4.5){\makebox(0,0)[c]{0}}
\put(1.5,4.5){\circle{0.8}}
\put(2.5,4.5){\makebox(0,0)[c]{4}}
\put(2.5,4.5){\circle{0.8}}
\put(3.5,4.5){\makebox(0,0)[c]{1}}
\put(3.5,4.5){\circle{0.8}}
\put(4.5,4.5){\makebox(0,0)[c]{1}}
\put(4.5,4.5){\circle{0.8}}
\put(5.5,4.5){\makebox(0,0)[c]{2}}
\put(5.5,4.5){\circle{0.8}}
\put(6.5,4.5){\makebox(0,0)[c]{5}}
\put(6.5,4.5){\circle{0.8}}
\put(0.5,5.5){\makebox(0,0)[c]{2}}
\put(0.5,5.5){\circle{0.8}}
\put(1.5,5.5){\makebox(0,0)[c]{1}}
\put(1.5,5.5){\circle{0.8}}
\put(2.5,5.5){\makebox(0,0)[c]{9}}
\put(2.5,5.5){\circle{0.8}}
\put(3.5,5.5){\makebox(0,0)[c]{8}}
\put(3.5,5.5){\circle{0.8}}
\put(4.5,5.5){\makebox(0,0)[c]{8}}
\put(4.5,5.5){\circle{0.8}}
\put(5.5,5.5){\makebox(0,0)[c]{9}}
\put(5.5,5.5){\circle{0.8}}
\put(6.5,5.5){\makebox(0,0)[c]{1}}
\put(6.5,5.5){\circle{0.8}}
\put(0.5,6.5){\makebox(0,0)[c]{1}}
\put(0.5,6.5){\circle{0.8}}
\put(1.5,6.5){\makebox(0,0)[c]{3}}
\put(1.5,6.5){\circle{0.8}}
\put(2.5,6.5){\makebox(0,0)[c]{8}}
\put(2.5,6.5){\circle{0.8}}
\put(3.5,6.5){\makebox(0,0)[c]{7}}
\put(3.5,6.5){\circle{0.8}}
\put(4.5,6.5){\makebox(0,0)[c]{8}}
\put(4.5,6.5){\circle{0.8}}
\put(5.5,6.5){\makebox(0,0)[c]{8}}
\put(5.5,6.5){\circle{0.8}}
\put(6.5,6.5){\makebox(0,0)[c]{2}}
\put(6.5,6.5){\circle{0.8}}
\allinethickness{1.5pt}
\put(0,0){\line(1,0){7}}
\put(0,0){\line(0,1){7}}
\put(7,7){\line(-1,0){7}}
\put(7,7){\line(0,-1){7}}
\put(1,6.5){\line(1,0){0.11}}
\put(1,6.5){\line(-1,0){0.11}}
\put(2,7){\line(0,-1){1}}
\put(3,6.5){\line(1,0){0.11}}
\put(3,6.5){\line(-1,0){0.11}}
\put(4,6.5){\line(1,0){0.11}}
\put(4,6.5){\line(-1,0){0.11}}
\put(5,6.5){\line(1,0){0.11}}
\put(5,6.5){\line(-1,0){0.11}}
\put(6,7){\line(0,-1){1}}
\put(1,5.5){\line(1,0){0.11}}
\put(1,5.5){\line(-1,0){0.11}}
\put(2,6){\line(0,-1){1}}
\put(3,5.5){\line(1,0){0.11}}
\put(3,5.5){\line(-1,0){0.11}}
\put(4,5.5){\line(1,0){0.11}}
\put(4,5.5){\line(-1,0){0.11}}
\put(5,5.5){\line(1,0){0.11}}
\put(5,5.5){\line(-1,0){0.11}}
\put(6,6){\line(0,-1){1}}
\put(1,4.5){\line(1,0){0.11}}
\put(1,4.5){\line(-1,0){0.11}}
\put(2,5){\line(0,-1){1}}
\put(3,4.5){\line(1,0){0.11}}
\put(3,4.5){\line(-1,0){0.11}}
\put(4,4.5){\line(1,0){0.11}}
\put(4,4.5){\line(-1,0){0.11}}
\put(5,4.5){\line(1,0){0.11}}
\put(5,4.5){\line(-1,0){0.11}}
\put(6,4.5){\line(1,0){0.11}}
\put(6,4.5){\line(-1,0){0.11}}
\put(1,3.5){\line(1,0){0.11}}
\put(1,3.5){\line(-1,0){0.11}}
\put(2,4){\line(0,-1){1}}
\put(3,4){\line(0,-1){1}}
\put(4,3.5){\line(1,0){0.11}}
\put(4,3.5){\line(-1,0){0.11}}
\put(5,3.5){\line(1,0){0.11}}
\put(5,3.5){\line(-1,0){0.11}}
\put(1,2.5){\line(1,0){0.11}}
\put(1,2.5){\line(-1,0){0.11}}
\put(2,3){\line(0,-1){1}}
\put(3,2.5){\line(1,0){0.11}}
\put(3,2.5){\line(-1,0){0.11}}
\put(4,2.5){\line(1,0){0.11}}
\put(4,2.5){\line(-1,0){0.11}}
\put(5,3){\line(0,-1){1}}
\put(6,2.5){\line(1,0){0.11}}
\put(6,2.5){\line(-1,0){0.11}}
\put(1,1.5){\line(1,0){0.11}}
\put(1,1.5){\line(-1,0){0.11}}
\put(2,2){\line(0,-1){1}}
\put(3,2){\line(0,-1){1}}
\put(4,2){\line(0,-1){1}}
\put(5,1.5){\line(1,0){0.11}}
\put(5,1.5){\line(-1,0){0.11}}
\put(6,1.5){\line(1,0){0.11}}
\put(6,1.5){\line(-1,0){0.11}}
\put(1,0.5){\line(1,0){0.11}}
\put(1,0.5){\line(-1,0){0.11}}
\put(2,1){\line(0,-1){1}}
\put(3,1){\line(0,-1){1}}
\put(4,1){\line(0,-1){1}}
\put(5,0.5){\line(1,0){0.11}}
\put(5,0.5){\line(-1,0){0.11}}
\put(0.5,6){\line(0,1){0.11}}
\put(0.5,6){\line(0,-1){0.11}}
\put(0.5,5){\line(0,1){0.11}}
\put(0.5,5){\line(0,-1){0.11}}
\put(0.5,4){\line(0,1){0.11}}
\put(0.5,4){\line(0,-1){0.11}}
\put(0.5,3){\line(0,1){0.11}}
\put(0.5,3){\line(0,-1){0.11}}
\put(0.5,2){\line(0,1){0.11}}
\put(0.5,2){\line(0,-1){0.11}}
\put(0.5,1){\line(0,1){0.11}}
\put(0.5,1){\line(0,-1){0.11}}
\put(1.5,6){\line(0,1){0.11}}
\put(1.5,6){\line(0,-1){0.11}}
\put(1.5,5){\line(0,1){0.11}}
\put(1.5,5){\line(0,-1){0.11}}
\put(1.5,4){\line(0,1){0.11}}
\put(1.5,4){\line(0,-1){0.11}}
\put(1.5,3){\line(0,1){0.11}}
\put(1.5,3){\line(0,-1){0.11}}
\put(1.5,2){\line(0,1){0.11}}
\put(1.5,2){\line(0,-1){0.11}}
\put(1.5,1){\line(0,1){0.11}}
\put(1.5,1){\line(0,-1){0.11}}
\put(2.5,6){\line(0,1){0.11}}
\put(2.5,6){\line(0,-1){0.11}}
\put(2,5){\line(1,0){1}}
\put(2,4){\line(1,0){1}}
\put(2.5,3){\line(0,1){0.11}}
\put(2.5,3){\line(0,-1){0.11}}
\put(2.5,2){\line(0,1){0.11}}
\put(2.5,2){\line(0,-1){0.11}}
\put(2.5,1){\line(0,1){0.11}}
\put(2.5,1){\line(0,-1){0.11}}
\put(3.5,6){\line(0,1){0.11}}
\put(3.5,6){\line(0,-1){0.11}}
\put(3,5){\line(1,0){1}}
\put(3.5,4){\line(0,1){0.11}}
\put(3.5,4){\line(0,-1){0.11}}
\put(3,3){\line(1,0){1}}
\put(3,2){\line(1,0){1}}
\put(3.5,1){\line(0,1){0.11}}
\put(3.5,1){\line(0,-1){0.11}}
\put(4.5,6){\line(0,1){0.11}}
\put(4.5,6){\line(0,-1){0.11}}
\put(4,5){\line(1,0){1}}
\put(4.5,4){\line(0,1){0.11}}
\put(4.5,4){\line(0,-1){0.11}}
\put(4,3){\line(1,0){1}}
\put(4.5,2){\line(0,1){0.11}}
\put(4.5,2){\line(0,-1){0.11}}
\put(4.5,1){\line(0,1){0.11}}
\put(4.5,1){\line(0,-1){0.11}}
\put(5.5,6){\line(0,1){0.11}}
\put(5.5,6){\line(0,-1){0.11}}
\put(5,5){\line(1,0){1}}
\put(5.5,4){\line(0,1){0.11}}
\put(5.5,4){\line(0,-1){0.11}}
\put(5.5,3){\line(0,1){0.11}}
\put(5.5,3){\line(0,-1){0.11}}
\put(5,2){\line(1,0){1}}
\put(5.5,1){\line(0,1){0.11}}
\put(5.5,1){\line(0,-1){0.11}}
\put(6.5,6){\line(0,1){0.11}}
\put(6.5,6){\line(0,-1){0.11}}
\put(6,5){\line(1,0){1}}
\put(6.5,4){\line(0,1){0.11}}
\put(6.5,4){\line(0,-1){0.11}}
\put(6,2){\line(1,0){1}}
\put(6.5,1){\line(0,1){0.11}}
\put(6.5,1){\line(0,-1){0.11}}
\end{picture}
    }
    &
    \subfigure[]{
      \setlength{\unitlength}{0.5cm}
\begin{picture}(7,7)(0,0)
\setcoordinatesystem units <0.5cm,0.5cm> point at 0 0
\setplotarea x from 0 to 7, y from 0 to 7
\put(0.5,0.5){\makebox(0,0)[c]{0}}
\put(0.5,0.5){\circle{0.8}}
\put(1.5,0.5){\makebox(0,0)[c]{2}}
\put(1.5,0.5){\circle{0.8}}
\put(2.5,0.5){\makebox(0,0)[c]{9}}
\put(2.5,0.5){\circle{0.8}}
\put(3.5,0.5){\makebox(0,0)[c]{3}}
\put(3.5,0.5){\circle{0.8}}
\put(4.5,0.5){\makebox(0,0)[c]{8}}
\put(4.5,0.5){\circle{0.8}}
\put(5.5,0.5){\makebox(0,0)[c]{5}}
\put(5.5,0.5){\circle{0.8}}
\put(6.5,0.5){\makebox(0,0)[c]{9}}
\put(6.5,0.5){\circle{0.8}}
\put(0.5,1.5){\makebox(0,0)[c]{1}}
\put(0.5,1.5){\circle{0.8}}
\put(1.5,1.5){\makebox(0,0)[c]{0}}
\put(1.5,1.5){\circle{0.8}}
\put(2.5,1.5){\makebox(0,0)[c]{8}}
\put(2.5,1.5){\circle{0.8}}
\put(3.5,1.5){\makebox(0,0)[c]{4}}
\put(3.5,1.5){\circle{0.8}}
\put(4.5,1.5){\makebox(0,0)[c]{9}}
\put(4.5,1.5){\circle{0.8}}
\put(5.5,1.5){\makebox(0,0)[c]{6}}
\put(5.5,1.5){\circle{0.8}}
\put(6.5,1.5){\makebox(0,0)[c]{7}}
\put(6.5,1.5){\circle{0.8}}
\put(0.5,2.5){\makebox(0,0)[c]{3}}
\put(0.5,2.5){\circle{0.8}}
\put(1.5,2.5){\makebox(0,0)[c]{2}}
\put(1.5,2.5){\circle{0.8}}
\put(2.5,2.5){\makebox(0,0)[c]{7}}
\put(2.5,2.5){\circle{0.8}}
\put(3.5,2.5){\makebox(0,0)[c]{9}}
\put(3.5,2.5){\circle{0.8}}
\put(4.5,2.5){\makebox(0,0)[c]{9}}
\put(4.5,2.5){\circle{0.8}}
\put(5.5,2.5){\makebox(0,0)[c]{1}}
\put(5.5,2.5){\circle{0.8}}
\put(6.5,2.5){\makebox(0,0)[c]{1}}
\put(6.5,2.5){\circle{0.8}}
\put(0.5,3.5){\makebox(0,0)[c]{1}}
\put(0.5,3.5){\circle{0.8}}
\put(1.5,3.5){\makebox(0,0)[c]{1}}
\put(1.5,3.5){\circle{0.8}}
\put(2.5,3.5){\makebox(0,0)[c]{9}}
\put(2.5,3.5){\circle{0.8}}
\put(3.5,3.5){\makebox(0,0)[c]{3}}
\put(3.5,3.5){\circle{0.8}}
\put(4.5,3.5){\makebox(0,0)[c]{4}}
\put(4.5,3.5){\circle{0.8}}
\put(5.5,3.5){\makebox(0,0)[c]{2}}
\put(5.5,3.5){\circle{0.8}}
\put(6.5,3.5){\makebox(0,0)[c]{6}}
\put(6.5,3.5){\circle{0.8}}
\put(0.5,4.5){\makebox(0,0)[c]{1}}
\put(0.5,4.5){\circle{0.8}}
\put(1.5,4.5){\makebox(0,0)[c]{0}}
\put(1.5,4.5){\circle{0.8}}
\put(2.5,4.5){\makebox(0,0)[c]{4}}
\put(2.5,4.5){\circle{0.8}}
\put(3.5,4.5){\makebox(0,0)[c]{1}}
\put(3.5,4.5){\circle{0.8}}
\put(4.5,4.5){\makebox(0,0)[c]{1}}
\put(4.5,4.5){\circle{0.8}}
\put(5.5,4.5){\makebox(0,0)[c]{2}}
\put(5.5,4.5){\circle{0.8}}
\put(6.5,4.5){\makebox(0,0)[c]{5}}
\put(6.5,4.5){\circle{0.8}}
\put(0.5,5.5){\makebox(0,0)[c]{2}}
\put(0.5,5.5){\circle{0.8}}
\put(1.5,5.5){\makebox(0,0)[c]{1}}
\put(1.5,5.5){\circle{0.8}}
\put(2.5,5.5){\makebox(0,0)[c]{9}}
\put(2.5,5.5){\circle{0.8}}
\put(3.5,5.5){\makebox(0,0)[c]{8}}
\put(3.5,5.5){\circle{0.8}}
\put(4.5,5.5){\makebox(0,0)[c]{8}}
\put(4.5,5.5){\circle{0.8}}
\put(5.5,5.5){\makebox(0,0)[c]{9}}
\put(5.5,5.5){\circle{0.8}}
\put(6.5,5.5){\makebox(0,0)[c]{1}}
\put(6.5,5.5){\circle{0.8}}
\put(0.5,6.5){\makebox(0,0)[c]{1}}
\put(0.5,6.5){\circle{0.8}}
\put(1.5,6.5){\makebox(0,0)[c]{3}}
\put(1.5,6.5){\circle{0.8}}
\put(2.5,6.5){\makebox(0,0)[c]{8}}
\put(2.5,6.5){\circle{0.8}}
\put(3.5,6.5){\makebox(0,0)[c]{7}}
\put(3.5,6.5){\circle{0.8}}
\put(4.5,6.5){\makebox(0,0)[c]{8}}
\put(4.5,6.5){\circle{0.8}}
\put(5.5,6.5){\makebox(0,0)[c]{8}}
\put(5.5,6.5){\circle{0.8}}
\put(6.5,6.5){\makebox(0,0)[c]{2}}
\put(6.5,6.5){\circle{0.8}}
\allinethickness{1.5pt}
\put(0,0){\line(1,0){7}}
\put(0,0){\line(0,1){7}}
\put(7,7){\line(-1,0){7}}
\put(7,7){\line(0,-1){7}}
\put(1,6.5){\line(1,0){0.11}}
\put(1,6.5){\line(-1,0){0.11}}
\put(2,7){\line(0,-1){1}}
\put(3,6.5){\line(1,0){0.11}}
\put(3,6.5){\line(-1,0){0.11}}
\put(4,6.5){\line(1,0){0.11}}
\put(4,6.5){\line(-1,0){0.11}}
\put(5,6.5){\line(1,0){0.11}}
\put(5,6.5){\line(-1,0){0.11}}
\put(6,7){\line(0,-1){1}}
\put(1,5.5){\line(1,0){0.11}}
\put(1,5.5){\line(-1,0){0.11}}
\put(2,6){\line(0,-1){1}}
\put(3,5.5){\line(1,0){0.11}}
\put(3,5.5){\line(-1,0){0.11}}
\put(4,5.5){\line(1,0){0.11}}
\put(4,5.5){\line(-1,0){0.11}}
\put(5,5.5){\line(1,0){0.11}}
\put(5,5.5){\line(-1,0){0.11}}
\put(6,6){\line(0,-1){1}}
\put(1,4.5){\line(1,0){0.11}}
\put(1,4.5){\line(-1,0){0.11}}
\put(2,4.5){\line(1,0){0.11}}
\put(2,4.5){\line(-1,0){0.11}}
\put(3,4.5){\line(1,0){0.11}}
\put(3,4.5){\line(-1,0){0.11}}
\put(4,4.5){\line(1,0){0.11}}
\put(4,4.5){\line(-1,0){0.11}}
\put(5,4.5){\line(1,0){0.11}}
\put(5,4.5){\line(-1,0){0.11}}
\put(6,4.5){\line(1,0){0.11}}
\put(6,4.5){\line(-1,0){0.11}}
\put(1,3.5){\line(1,0){0.11}}
\put(1,3.5){\line(-1,0){0.11}}
\put(2,4){\line(0,-1){1}}
\put(3,4){\line(0,-1){1}}
\put(4,3.5){\line(1,0){0.11}}
\put(4,3.5){\line(-1,0){0.11}}
\put(5,3.5){\line(1,0){0.11}}
\put(5,3.5){\line(-1,0){0.11}}
\put(6,3.5){\line(1,0){0.11}}
\put(6,3.5){\line(-1,0){0.11}}
\put(1,2.5){\line(1,0){0.11}}
\put(1,2.5){\line(-1,0){0.11}}
\put(2,3){\line(0,-1){1}}
\put(3,2.5){\line(1,0){0.11}}
\put(3,2.5){\line(-1,0){0.11}}
\put(4,2.5){\line(1,0){0.11}}
\put(4,2.5){\line(-1,0){0.11}}
\put(5,3){\line(0,-1){1}}
\put(6,2.5){\line(1,0){0.11}}
\put(6,2.5){\line(-1,0){0.11}}
\put(1,1.5){\line(1,0){0.11}}
\put(1,1.5){\line(-1,0){0.11}}
\put(2,2){\line(0,-1){1}}
\put(3,1.5){\line(1,0){0.11}}
\put(3,1.5){\line(-1,0){0.11}}
\put(5,1.5){\line(1,0){0.11}}
\put(5,1.5){\line(-1,0){0.11}}
\put(6,1.5){\line(1,0){0.11}}
\put(6,1.5){\line(-1,0){0.11}}
\put(1,0.5){\line(1,0){0.11}}
\put(1,0.5){\line(-1,0){0.11}}
\put(2,1){\line(0,-1){1}}
\put(5,0.5){\line(1,0){0.11}}
\put(5,0.5){\line(-1,0){0.11}}
\put(6,0.5){\line(1,0){0.11}}
\put(6,0.5){\line(-1,0){0.11}}
\put(0.5,6){\line(0,1){0.11}}
\put(0.5,6){\line(0,-1){0.11}}
\put(0.5,5){\line(0,1){0.11}}
\put(0.5,5){\line(0,-1){0.11}}
\put(0.5,4){\line(0,1){0.11}}
\put(0.5,4){\line(0,-1){0.11}}
\put(0.5,3){\line(0,1){0.11}}
\put(0.5,3){\line(0,-1){0.11}}
\put(0.5,2){\line(0,1){0.11}}
\put(0.5,2){\line(0,-1){0.11}}
\put(0.5,1){\line(0,1){0.11}}
\put(0.5,1){\line(0,-1){0.11}}
\put(1.5,6){\line(0,1){0.11}}
\put(1.5,6){\line(0,-1){0.11}}
\put(1.5,5){\line(0,1){0.11}}
\put(1.5,5){\line(0,-1){0.11}}
\put(1.5,4){\line(0,1){0.11}}
\put(1.5,4){\line(0,-1){0.11}}
\put(1.5,3){\line(0,1){0.11}}
\put(1.5,3){\line(0,-1){0.11}}
\put(1.5,2){\line(0,1){0.11}}
\put(1.5,2){\line(0,-1){0.11}}
\put(1.5,1){\line(0,1){0.11}}
\put(1.5,1){\line(0,-1){0.11}}
\put(2.5,6){\line(0,1){0.11}}
\put(2.5,6){\line(0,-1){0.11}}
\put(2,5){\line(1,0){1}}
\put(2,4){\line(1,0){1}}
\put(2.5,3){\line(0,1){0.11}}
\put(2.5,3){\line(0,-1){0.11}}
\put(2.5,2){\line(0,1){0.11}}
\put(2.5,2){\line(0,-1){0.11}}
\put(2.5,1){\line(0,1){0.11}}
\put(2.5,1){\line(0,-1){0.11}}
\put(3.5,6){\line(0,1){0.11}}
\put(3.5,6){\line(0,-1){0.11}}
\put(3,5){\line(1,0){1}}
\put(3.5,4){\line(0,1){0.11}}
\put(3.5,4){\line(0,-1){0.11}}
\put(3,3){\line(1,0){1}}
\put(3.5,1){\line(0,1){0.11}}
\put(3.5,1){\line(0,-1){0.11}}
\put(4.5,6){\line(0,1){0.11}}
\put(4.5,6){\line(0,-1){0.11}}
\put(4,5){\line(1,0){1}}
\put(4.5,4){\line(0,1){0.11}}
\put(4.5,4){\line(0,-1){0.11}}
\put(4,3){\line(1,0){1}}
\put(4.5,2){\line(0,1){0.11}}
\put(4.5,2){\line(0,-1){0.11}}
\put(4.5,1){\line(0,1){0.11}}
\put(4.5,1){\line(0,-1){0.11}}
\put(5.5,6){\line(0,1){0.11}}
\put(5.5,6){\line(0,-1){0.11}}
\put(5,5){\line(1,0){1}}
\put(5.5,4){\line(0,1){0.11}}
\put(5.5,4){\line(0,-1){0.11}}
\put(5.5,3){\line(0,1){0.11}}
\put(5.5,3){\line(0,-1){0.11}}
\put(5,2){\line(1,0){1}}
\put(5.5,1){\line(0,1){0.11}}
\put(5.5,1){\line(0,-1){0.11}}
\put(6.5,6){\line(0,1){0.11}}
\put(6.5,6){\line(0,-1){0.11}}
\put(6.5,5){\line(0,1){0.11}}
\put(6.5,5){\line(0,-1){0.11}}
\put(6.5,4){\line(0,1){0.11}}
\put(6.5,4){\line(0,-1){0.11}}
\put(6,2){\line(1,0){1}}
\put(6.5,1){\line(0,1){0.11}}
\put(6.5,1){\line(0,-1){0.11}}
\end{picture}
    }
  \end{tabular}
\end{center}
 \caption{
Example from~\cite{Soille2007} of a 7x7 image and its 
partitions into $(\alpha,\omega)$-connected components
using identical values for the local and global range parameters ranging
from 1 to 6. 
(a) (1, 1)-CCs. (b) (2, 2)-CCs. (c) (3, 3)-CCs. (d) (4, 4)-CCs.
(e) (5, 5)-CCs. (f) (6, 6)-CCs. 
}
 \label{fig:alphaomegapami}
\end{figure*}
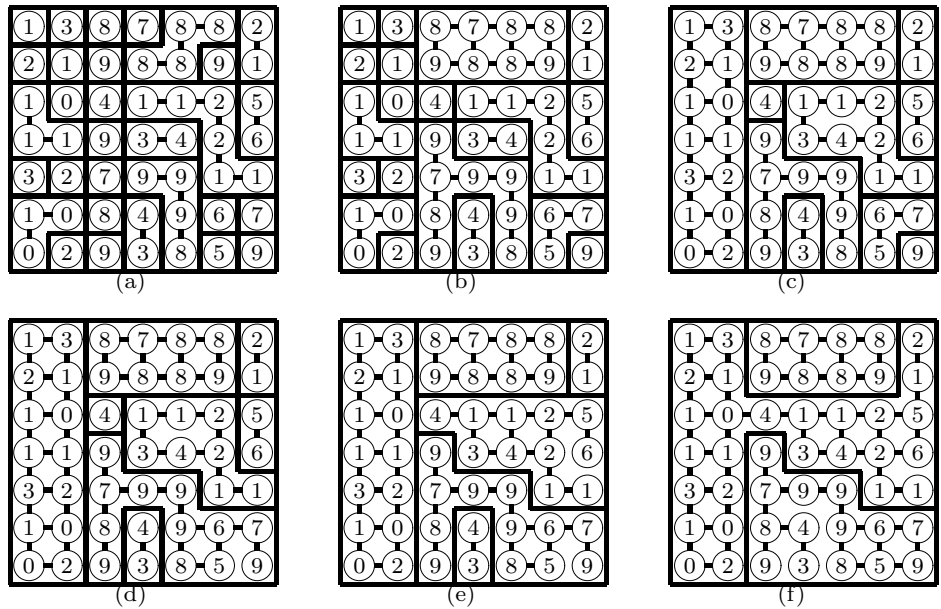

\subsection{Ultrametric watershed for constrained connectivity}
In that section, we show how to build a weighted graph on which the
ultrametric watershed corresponding to the hierarchy of constrained
connectivity can be computed. Intuitively, this weighted graph can be
seen as the gradient of the original image. We compute an ultrametric
watershed for the hierarchy of $\alpha$-connected components. We
filter that watershed to obtain the family of
$(\alpha,\omega)$-connected components. We then show how to directly
compute the ultrametric watershed corresponding to the hierarchy of
$(\alpha,\omega)$-connected components.

Constrained connectivity is a hierarchy of flat zones of $f$, in the
sense where the $0$-connected components of $f$ are the zones of $f$
where the intensity of $f$ does not change. In a continuous world,
such zones would be the ones where the gradient is null, {\em i.e.}
$\nabla f=0$. However, the space we are working with is discrete, and
a flat zone of $f$ can consist in a single point. In general, it is
not possible to compute a gradient on the points or on the edges such
that this gradient is null on the flat zones. To compute a gradient on
the edges such that the gradient is null on the flat zones, we need to
``double'' the graph, for example we can do that by doubling the
number of points of $V$ and adding one edge between each new point and
the old one (see Fig.~\ref{fig:doubling}(b)).
\begin{figure}[htbp]
\begin{center}
  \begin{tabular}{ccc}
    \subfigure[]{
      \includegraphics[width=.2\columnwidth]{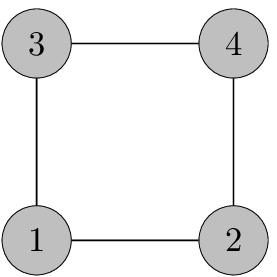}
    }
    &
    ~~~~~~
    &
    \subfigure[]{
      \includegraphics[width=.3\columnwidth]{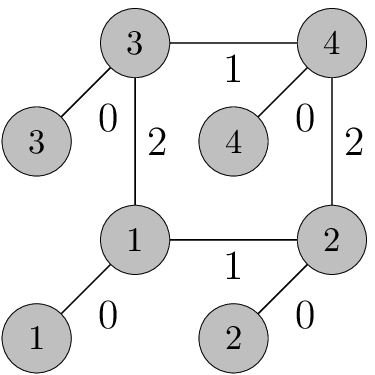}
    }
  \end{tabular}
\end{center}
 \caption{Doubling the graph. (a) Original graph with weights $f$ on
 the vertices. (b)~Double graph, with weigths $f$ on the vertices and
 the gradient $F$ on the edges (see text).} 
\label{fig:doubling}
\end{figure}

More precisely, if we denote the points of $V$ by
$V=\{x_0,\ldots,x_n\}$, we set $V'=\{x'_0,\ldots,x'_n\}$ (with $V\cap
V'=\emptyset$), and $E'=\{\{x_i,x'_i\}\st 0\leq i\leq n\}$. We then
set $V_1=V\cup V'$ and $E_1=E\cup E'$.
By construction, as $G=(V,E)$ is a connected graph, the graph
$G_1=(V_1,E_1)$ is a connected graph.
We also extend $f$ to $V'$, by setting, for any $x'\in V'$,
$f(x')=f(x)$, where $\{x,x'\}\in E'$. 

Let $(V_1,E_1,F)$ be the weighted graph obtained from $f$ by setting,
for any $\{x,y\}\in E_1$, $F(\{x,y\})=|f(x)-f(y)|$. The map $F$ can be
seen as the ``natural gradient'' of $f$~\cite{Mattiussi-2000}. It is
easy to see that the flat zones of $f$, {\em i.e.} the $0$-connected
components of $f$ are (in bijection with) the connected components of
the set $\{v=\{x,y\}\in E_1\st F(\{x,y\})=0\}$.

\begin{figure}[htbp]
\begin{center}
  \begin{tabular}{ccc}
    \subfigure[]{
      \includegraphics[width=.3\columnwidth]{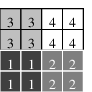}
    }
    &
    ~~~~~~
    &
    \subfigure[]{
      \includegraphics[width=.3\columnwidth]{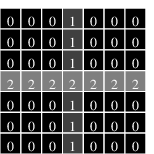}
    }
  \end{tabular}
\end{center}
 \caption{Doubling the graph as an image. (b) Doubling the graph of Fig.~\ref{fig:doubling}.a. 
	(b) The gradient of (a) (see text).}
  \label{fig:doublingimage}
\end{figure}

Let us note that it is also possible, for the purpose of
visualisation, to double the graph as an image, {\em i.e.}, to
multiply the size of the image by 2. On the graph of
Fig.~\ref{fig:doubling}.a, that gives the image of
Fig.~\ref{fig:doublingimage}.a. Then the gradient can be seen as an
image (Fig.~\ref{fig:doublingimage}.b) as described in
section~\ref{sec:representation}. This representation will be adopted
in all the subsequent figures of the paper.

\begin{figure*}[htbp]
\begin{center}
  \begin{tabular}{c}	
  \begin{tabular}{cc}	
    \subfigure[Original image]{
      \includegraphics[width=.3\textwidth]{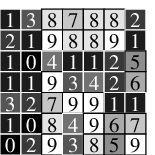}
    }
    &
    \subfigure[Doubled image]{
      \includegraphics[width=.3\textwidth]{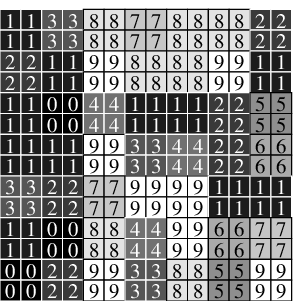}
    }
  \end{tabular}
    \\
    \subfigure[Gradient]{
      \includegraphics[width=.45\textwidth]{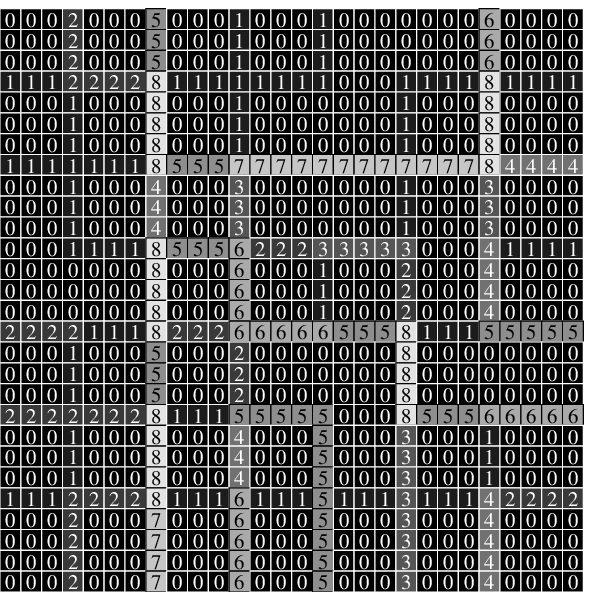}
    }
    \\
  \begin{tabular}{cc}
    \subfigure[Ultrametric watershed for the $\alpha$-connectivity]{
      \includegraphics[width=.45\textwidth]{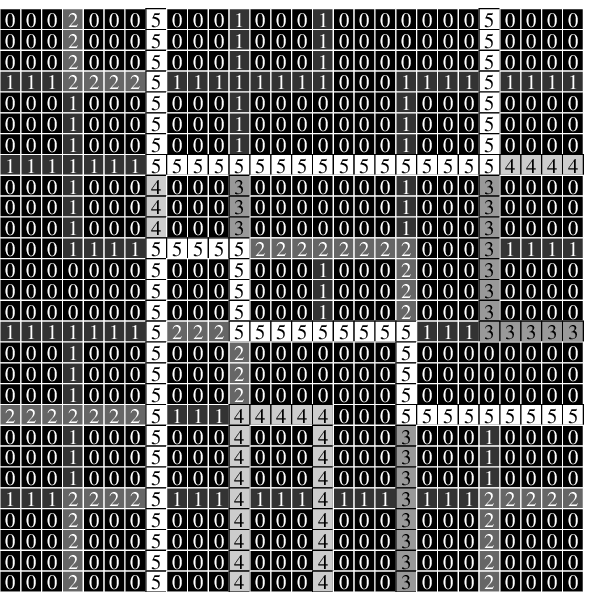}
    }
    &
    \subfigure[Ultrametric watershed for the constrained connectivity]{
      \includegraphics[width=.45\textwidth]{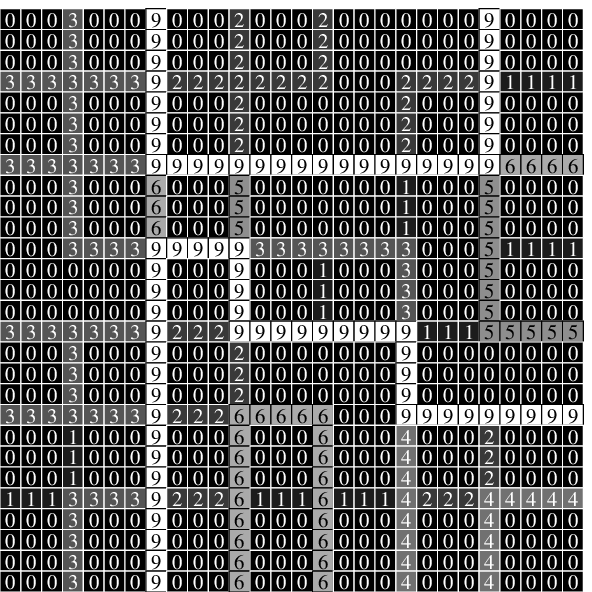}
    }

  \end{tabular}
  \end{tabular}
\end{center}
 \caption{Example of a constrained connectivity
 hierarchy. (a) Original image (the one of Fig.~\ref{fig:alphapami})
 from~\cite{Soille2007}. (b) Doubling of (a). (c) Gradient of (b). (d)
 Topological watershed of the gradient, that is the ultrametric watershed
 $W^1$ for the $\alpha$-connectivity that. (e)
 Ultrametric watershed $W^2$ for the constrained connectivity. (see text)}
 \label{fig:pamiSoille}
\end{figure*}

Let $W^1$ be a topological watershed of $F$. From
Th.~\ref{th:wtopocarac} and Eq.~\ref{eq:connection}, if
$W^1(\{x,y\})=\lambda$, there exists a path $\pi=\{x_0=x, \ldots,
x_n=y\}$ linking $x$ to $y$ such that the altitude of any edge along
$\pi$ is below $\lambda$, {\em i.e.} we have, for any $0\leq i<n$,
$F(\{x_i,x_{i+1}\})=|f(x_i)-f(x_{i+1})|\leq\lambda$. The following
property, the proof of which is left to the reader, states that the
hierarchy of $\alpha$-connected components is given by $W^1$.
\begin{proper}
\label{pr:alpha}
We have
\begin{itemize}
\item $W^1$ is an ultrametric watershed; 
\item $W^1$ is uniquely defined (if $W'$ is a topological watershed of
$F$, then $W'=W^1$);
\item let $\lambda\geq 0$ and let $X$ be a connected component of the
cross-section $W^1[\lambda]$; then for any $x\in V(X)\setminus V'$,
$\lambda\mbox{-CC}(x)=V(X)\setminus V'$.
\end{itemize}
\end{proper}
Pr.~\ref{pr:alpha} is illustrated on Fig.~\ref{fig:pamiSoille}.d. Let
us stress that Fig.~\ref{fig:pamiSoille}.d sums up in one image all
the images of Fig.~\ref{fig:alphapami}.

\begin{figure*}[htbp]
\begin{center}
  \begin{tabular}{cc}
    \subfigure[Original image]{
      \includegraphics[width=.45\textwidth]{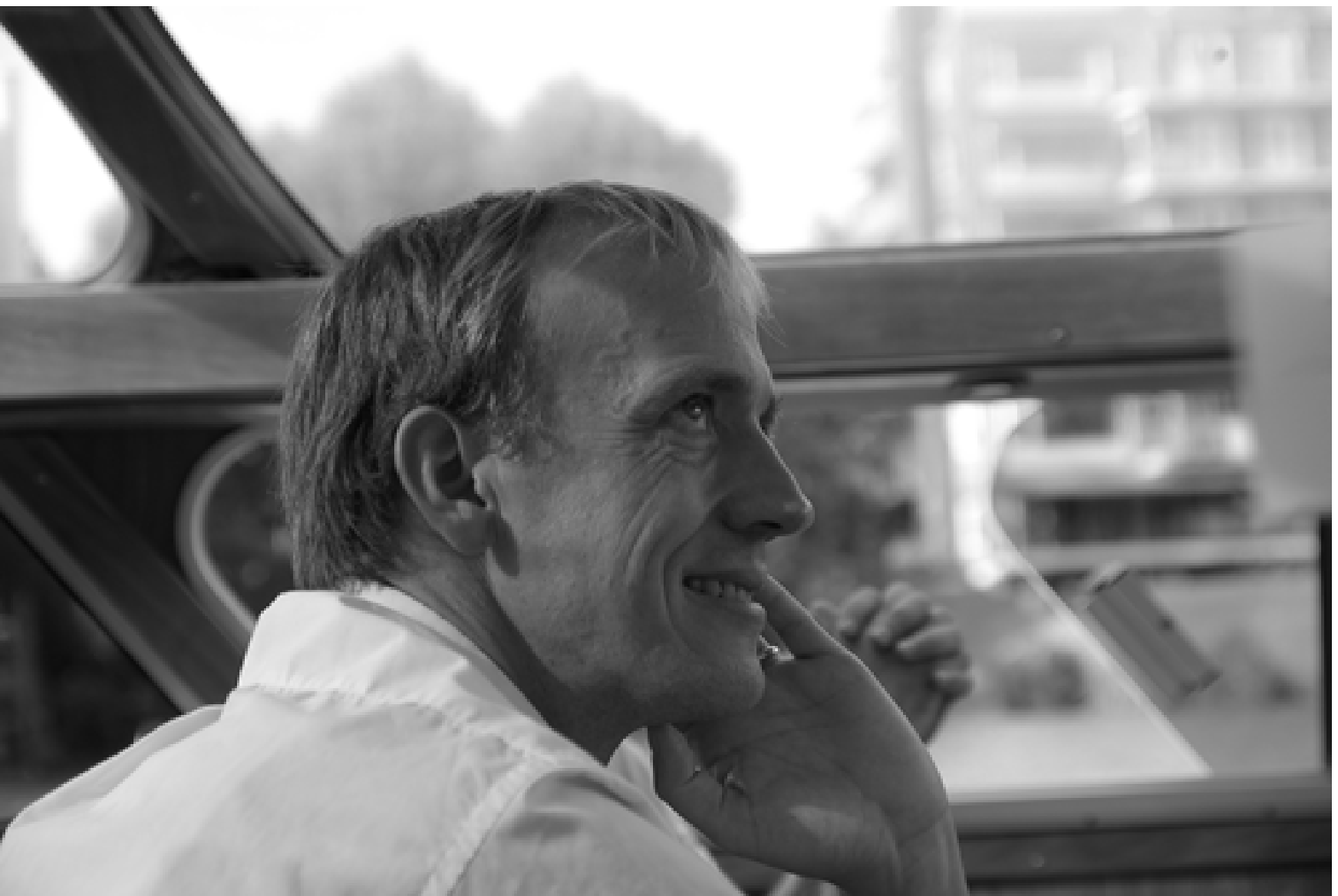}
    }
    &
    \subfigure[$W^1$(logarithmic grey-scale)]{
      \includegraphics[width=.45\textwidth]{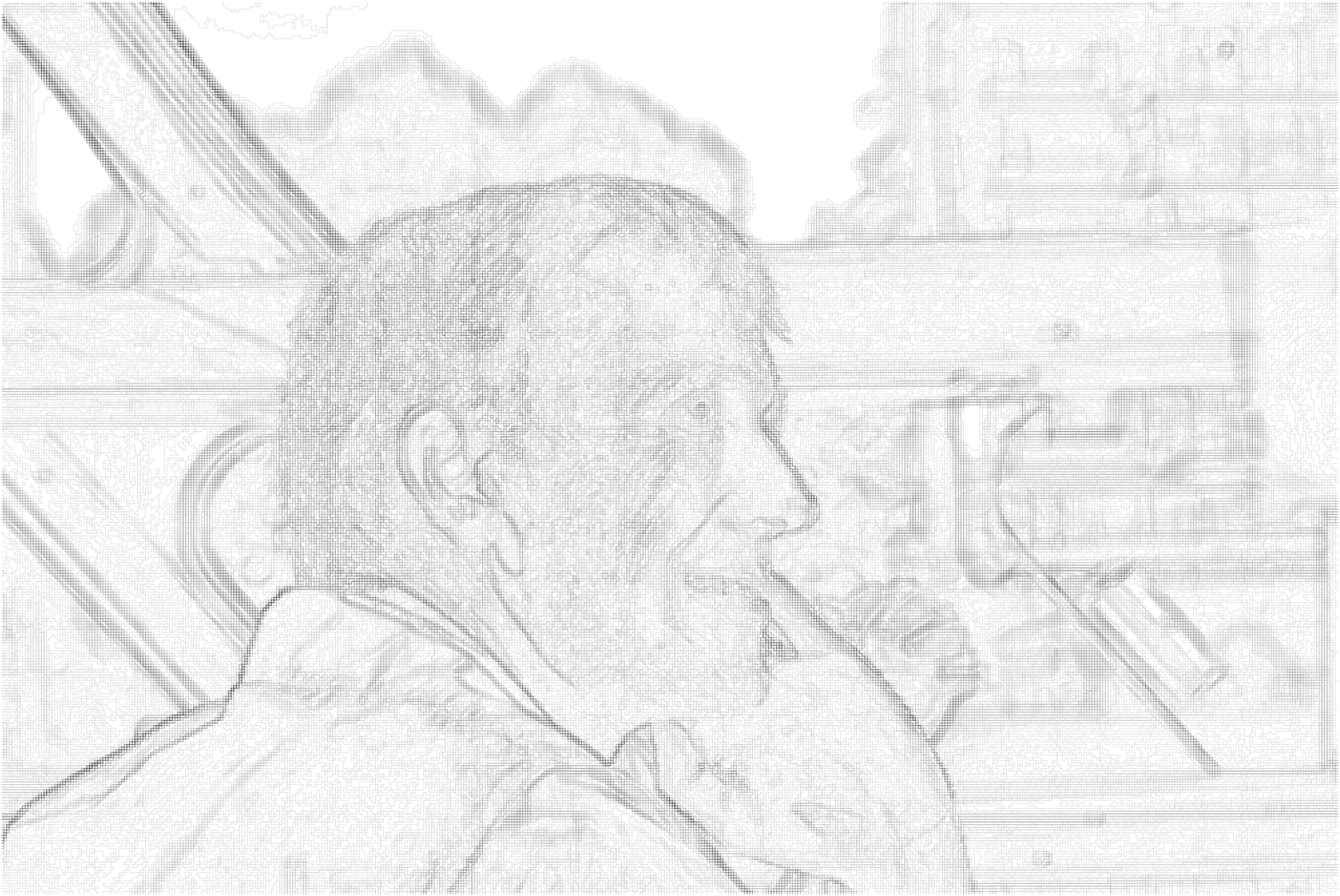}
    }
    \\
    \subfigure[$W^2$]{
      \includegraphics[width=.45\textwidth]{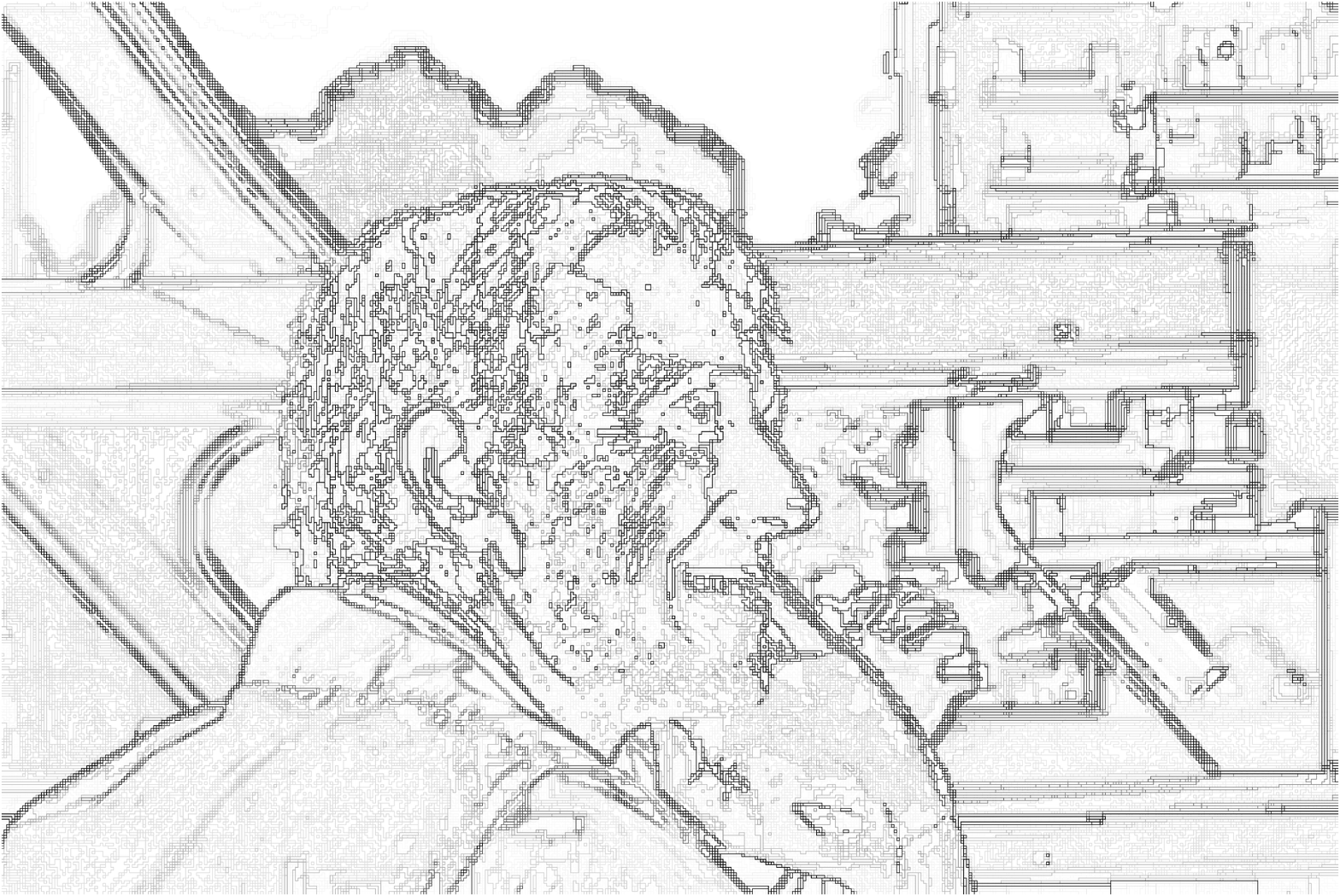}
    }
    &
    \subfigure[Area-filtering ultrametric watershed]{
      \includegraphics[width=.45\textwidth]{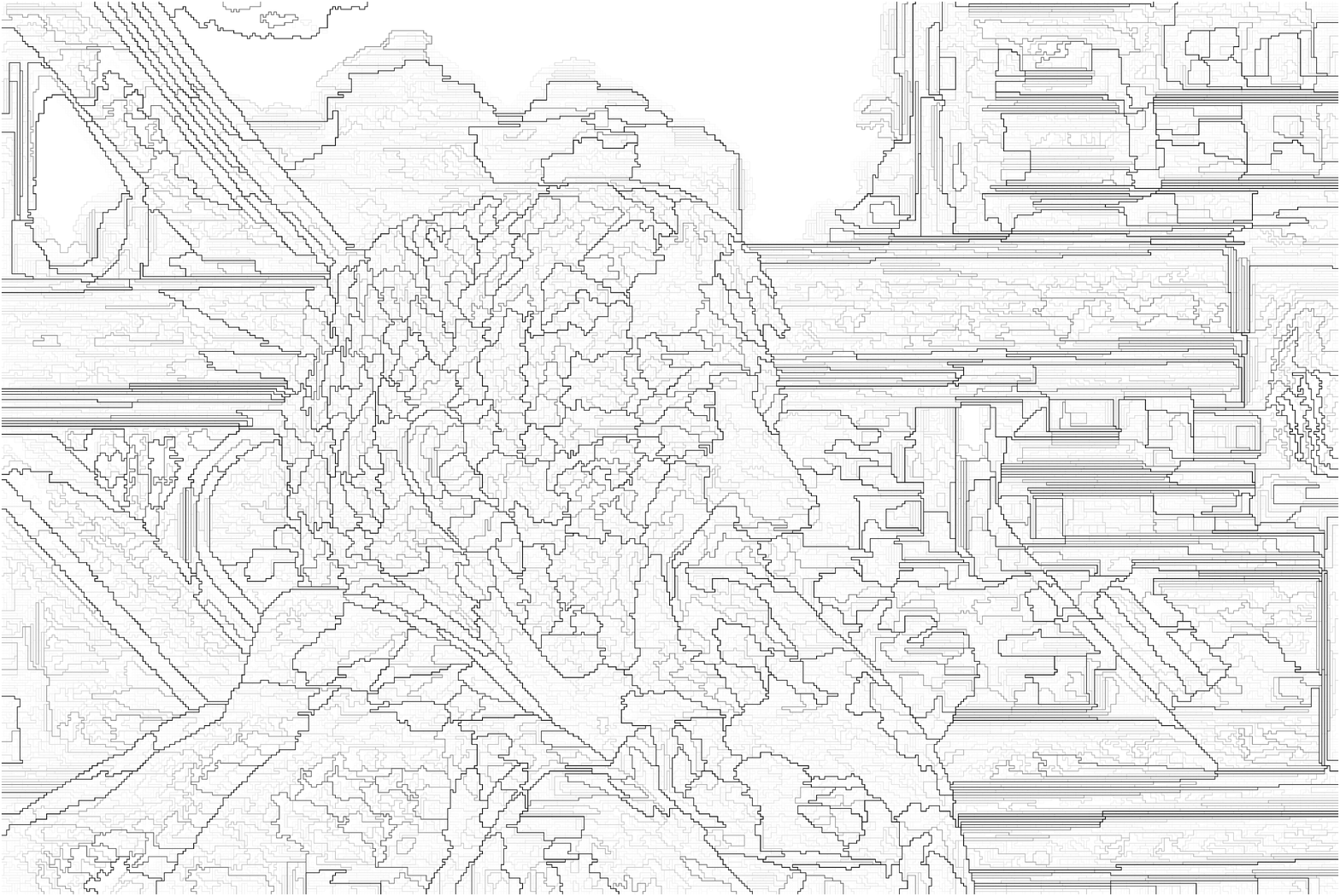}
    }
  \end{tabular}
\end{center}
 \caption{Soille's ($\alpha,\omega$)-constrained connectivity
 hierarchy. (a) Original image. (b) Ultrametric watershed $W^1$ for
 the $\alpha$-connectivity. (c) Ultrametric watershed $W^2$ for the
 constrained connectivity. (d) Ultrametric watersheds corresponding to
 one of the possible hierarchies of area-filterings on $W^2$.}
 \label{fig:alphaomega}
\end{figure*}

One can notice that $R_f$ is increasing on $2^V$, {\em i.e.}
$R_f(X)\subset R_f(Y)$ whenever $X\subseteq Y$. Thus $R_f$ is
increasing on ${\mathcal C}(W^1)$, and by removing the connected
components of ${\cal C}(W^1)$ that are below a threshold $\omega$ for
$R_f$, we have an attribute filtering which is idempotent (the values
on the points do not change), thus it is a closing.  More precisely,
we denote by $(R_\lambda)_{\lambda\geq 0)}$ the family of maps
obtained by applying this closing on $W^1$ for varying $\lambda$, {\em
  i.e.}, for any $\lambda \geq 0$ and any $\{x,y\}\in E_1$, we set
\begin{eqnarray}
R_\lambda(\{x,y\}) = \min\{\lambda' & \st & [\lambda',C]\in{\cal C}(W^1),\nonumber\\
& & x\in V(C), y\in V(C), \nonumber\\
& & R_f(V(C))\geq \lambda\}
\end{eqnarray}
In other words, the altitude for $R_\lambda$ of the edge $\{x,y\}$ is
the altitude of the lowest component of ${\cal C}(W^1)$ that contains
both $x$ and $y$ and such that the range of that component is greater
than $\lambda$.

The family $(R_\lambda)_{\lambda\geq 0}$ allows us to retrieve the
$(\alpha,\omega)$-CCs of $f$: surprisingly, it can be shown
that any $R_\lambda$ is a topological watershed, and thus
$\Minima{R_\lambda}$ is a segmentation from which it is easy to
extract the $(\lambda,\lambda)$-connected component of a point, as the
minimum of $\Minima{R_\lambda}$ that contains that point (See
Pr.~\ref{pr:alphaomega} below for a more formal setting).

Moreover, one can directly compute the ultrametric watershed
associated to the hierarchy of $(\alpha,\omega)$-constrai\-ned
connectivity. We set:
\begin{eqnarray}
\label{eq:alphaomega}
W^2(\{x,y\}) = \min\{R_f(V(C)) & \st& [\lambda,C]\in{\cal C}(W^1),\nonumber\\
& &  x\in V(C), \nonumber\\
& &  y\in V(C)\}
\end{eqnarray}
In other words, the altitude for $W^2$ of the edge $\{x,y\}$ is the
range of the lowest component of ${\cal C}(W^1)$ that contains both
$x$ and $y$.  One can remark that Eq.~\ref{eq:alphaomega} corresponds
to Eq.~\ref{eq:alphaomegaCC} for the framework of segmentation.

The following property, the proof of which is left to the reader,
states that the hierarchy of $(\alpha,\omega)$-connected components is
given by $W^2$.
\begin{proper}
\label{pr:alphaomega}
We have 
\begin{itemize}
\item $\forall\lambda\geq0$, $R_\lambda$ is a topological watershed;
\item $\forall\lambda\geq 0$,  $W^2[\lambda]=\Minima{R_\lambda}$ ;
\item $W^2$ is an ultrametric watershed;
\item $W^2$ is uniquely defined;
\item let $\lambda\geq 0$ and let $X$ be a connected component of the
cross-section $W^2[\lambda]$; then for any $x\in V(X)\setminus
V'$, $(\lambda,\lambda)\mbox{-CC}(x)=V(X)\setminus V'$.
\end{itemize}
\end{proper}
Prop.~\ref{pr:alphaomega}, illustrated on Fig.~\ref{fig:pamiSoille}.e,
thus gives an efficient algorithm to compute the hierarchy of
$(\alpha,\omega)$-constrained connectivity. Indeed,
Eq.~\ref{eq:alphaomega} can be computed in constant
time~\cite{BenderFarach-Colton-2000} on ${\cal C}(W^1)$, which itself
can be computed in quasi-linear time~\cite{NajCou2006}. Such an
algorithm is much faster than the one proposed in~\cite{Soille2007},
that computes only one level of the hierarchy.

Let us stress that for an algorith\-mic/implementation point of view, it
is not necessary in practice to double the image. Furthermore, for an
efficient computation of the hierarchy, a minimum spanning tree or a
component tree of the gradient can also be used instead of an
ultrametric watershed, without changing the overall theoretical
complexity of the algorithm. But for visualisation purpose, the
ultrametric watershed is necessary.  Moreover, those tools can be
combined; indeed, one can compute a topological watershed on the graph
of a minimum spanning tree.  In a forthcomming paper, we will propose
various data structures, including but not limited to component tree
and minimum spanning tree, that allows an efficient computation of
hierarchical segmentations. We will also study how to extend
Prop.~\ref{pr:alphaomega} in order to compute any granulometry of
operators (strong hierarchies in the sense of~\cite{Serra-2006}).

A example of the application of the properties of this section to a
real image is given in Fig.~\ref{fig:alphaomega}.
Visualising $W^2$ allows to assess some of the qualities of the
hierarchy of constrained connectivity. One can notice in
Fig.~\ref{fig:alphaomega}.c a large number of transition regions
(small undersirable regions that persist in the hierarchy), and this
problem is known~\cite{Soille-Grazzini-2009}. As $W^2$ is an image, a
number of classical morphological schemes ({\em e.g.}, area filtering
that produces a hierarchy of regions classified according to their
size or area; see~\cite{meyer.najman:segmentation} for more details)
can be used to remove those transition zones (see
Fig.~\ref{fig:alphaomega}.d for an example). Studying the usefulness
of such schemes is the subject of future research.


\subsection{Links with other hierarchical schemes}
As we have shown, and as stated by Th.~\ref{th:onetoone}, any
hierarchical scheme can be represented by {\em and computed through}
an ultrametric watershed. This is in particular true for the classical
watershed-based segmentation algorithms.

Fig.~\ref{fig:salience} is an illustration of the application of the
framework developped in this paper to a classical hierarchical
segmentation scheme based on attribute
opening~\cite{NS96,Salembier-Oliveras-Garrido-1998,meyer.najman:segmentation}.
The attribute opening tends to produce large plateaus where a
watershed can be located anywhere; in particular, the contours at a
given level of the hierarchy can be choosen differently depending on
the filtering level, and one has to take care of indeed producing a
hierarchy. In contrast, the ultrametric watershed will always choose
a contour that is present at a lower level of the hierarchy.

Fig.~\ref{fig:salienceextract} shows some of the differences between
applying an ultrametric-watershed scheme and applying a classical
watershed-based segmentation scheme, {\em e.g.}  attribute opening
followed by a watershed~\cite{meyer-beucher90}. As watershed
algorithms generally place watershed lines in the middle of plateaus,
the contours produced by the classical watershed-based segmentation
scheme do not lead to a hierarchy, and the two schemes give quite
different results.

\begin{figure*}[htbp]
\begin{center}
  \begin{tabular}{ccc}
    \subfigure[Original image]{
      \includegraphics[width=0.3\textwidth]{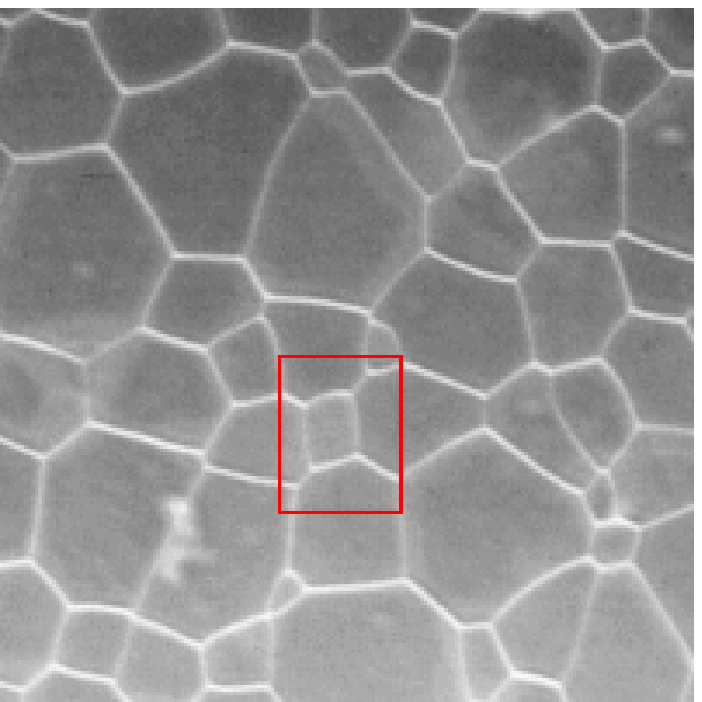}
    }
    &
    \subfigure[Ultrametric watershed]{
      \includegraphics[width=0.3\textwidth]{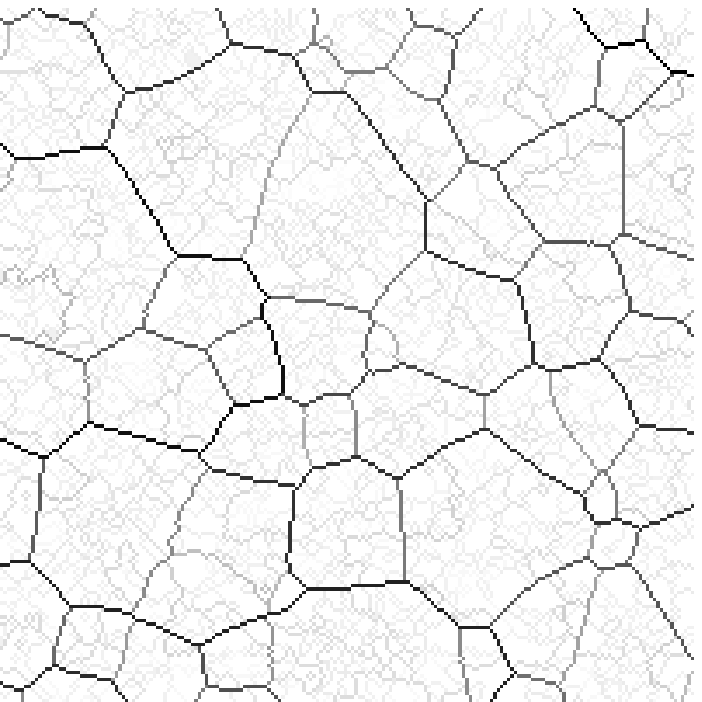}
    }
    &
    \subfigure[Cross section of (b)]{
      \includegraphics[width=0.3\textwidth]{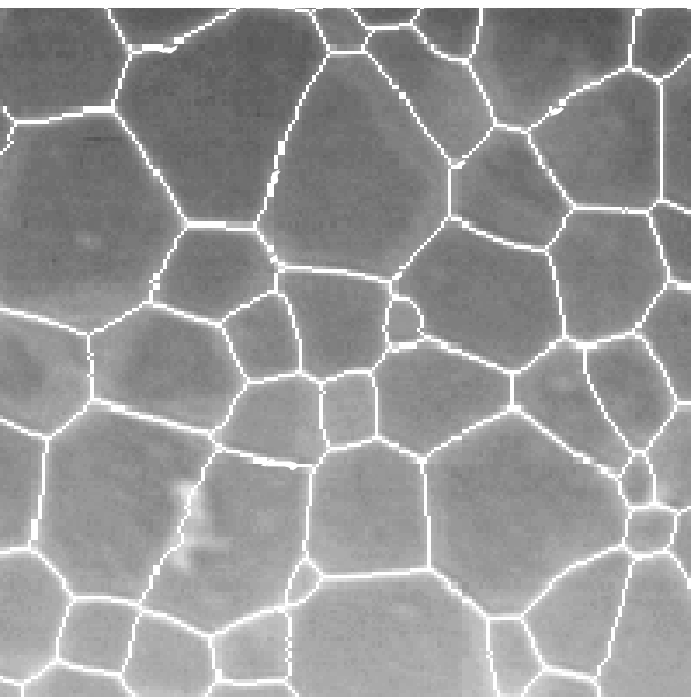}
    }
  \end{tabular}
\end{center}
 \caption{Example of ultrametric watershed. On (a), a box is drawn
   around the zoomed part used in Fig.~\ref{fig:salienceextract}}
 \label{fig:salience}
\end{figure*}
 
\begin{figure}[htbp]
\begin{center}
  \begin{tabular}{cc}
    \subfigure[]{
      \includegraphics[width=3.cm]{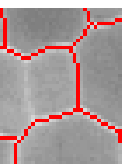}
    }
    &
    \subfigure[]{
      \includegraphics[width=3.cm]{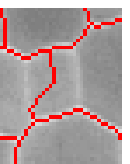}
    }
  \end{tabular}
\end{center}
 \caption{Zoom on a comparison between two watersheds of a filtered
 version of the image~\ref{fig:salience}.a. Morphological filtering
 tends to create large plateaus, and both watersheds (a) and (b) are
 possible, but only (a) is a subset of a watershed
 of~\ref{fig:salience}.a. No hierarchical scheme will ever give a
 result as~(b). }
 \label{fig:salienceextract}
\end{figure}

\section{Conclusion}
In this paper, we have shown (Th.~\ref{th:onetoone}) that any
hierarchical segmentation can be represented by an ultrametric
watershed, and conversely that any ultrametric watershed leads to a
hierarchical segmentation.  Th.~\ref{th:onetoone} thus offers an
alternative way of thinking hierarchical segmentation that complete
existing ones (ultrametric distances, minimum spanning tree, $\ldots$)
We have seen how to apply Th.~\ref{th:onetoone} to directly compute
the constrained connectivity hierachy as an ultrametric watershed,
leading to a fast algorithm. An important research direction is to
provide a generalization of this scheme for computing any hierachical
segmentation.

As a step in this direction, future work will propose novel algorithms
(based on the topological watershed algorithm~\cite{CNB05}) to compute
ultrametric watersheds, with proof of correctness.  It is important to
note that most of the algorithms proposed in the literature to compute
saliency maps are not correct, often because they rely on wrong
connection values or because they rely on thick watersheds where
merging regions is difficult~\cite{CouBerCou2008}.

On a more theoretical level, this work can be pursued in several
directions.  
\begin{itemize}
\item We will study lattices of watersheds~\cite{IGMI_CouNajSer08} and
will bring to that framework recent approaches like
scale-sets~\cite{GuiguesCM06} and other metric approaches to
segmentation~\cite{Arbelaez-Cohen-2006}. For example, scale-sets
theory considers a rather general formulation of the partitioning
problem which involves minimizing a two-term-based energy, of the form
$\lambda C + D$, where $D$ is a goodness-of-fit term and $C$ is a
regularization term, and proposes an algorithm to compute the
hierarchical segmentation we obtain by varying the $\lambda$
parameter. As in the case of constrained connectivity (see
section~\ref{sec:constcon} above), we can hope that the topological
watershed algorithm~\cite{CNB05} can be used on a specific energy
function to directly obtain the hierarchy.

\item Subdominant theory (mentionned at the end of section~\ref{sec:hier})
links hierarchical classification and optimisation. In particular, the
subdominant ultrametric $d'$ of a dissimilarity $d$ is the solution to
the following optimisation problem for $p<\infty$:
\begin{eqnarray}
\min \{ || d - d' ||_p^p & \st & d' \mbox{~is an ultrametric
distance }\nonumber\\ 
& & \mbox{~~ and~}d'\leq d  \}
\end{eqnarray}
It is certainly of interest to search if topological watersheds can be
solutions of similar optimisation problems.

\item Several generalisations of hierarchical clustering have been proposed
in the literature~\cite{BBO2004}. An interesting direction of research
is to see how to extend in the same way the topological watershed
approach, for example for allowing regions to overlap.

\item Last, but not least, the links of hierarchical segmentation
with connective segmentation~\cite{Serra-2006} have to be studied.
\end{itemize}

\begin{acknowledgements}
The author would like to thank P.~Soille for the permission to use
Fig.~\ref{fig:alphapami}, \ref{fig:alphaomegapami}
and~\ref{fig:alphaomega}.a (that was shot at ISMM'09 by the author),
as well as J.-P. Coquerez for the permission to use
Fig.~\ref{fig:introduction}.

For numerous discussions, the author feels indebted to (by
alphabetical order): Gilles Bertrand, Michel Couprie, Jean Cousty,
Christian Ronse, Jean Serra and Hugues Talbot.
\end{acknowledgements}

%

\onecolumn

\parpic(1in,1.25in){\includegraphics[width=1in,height=1.25in,clip,keepaspectratio]{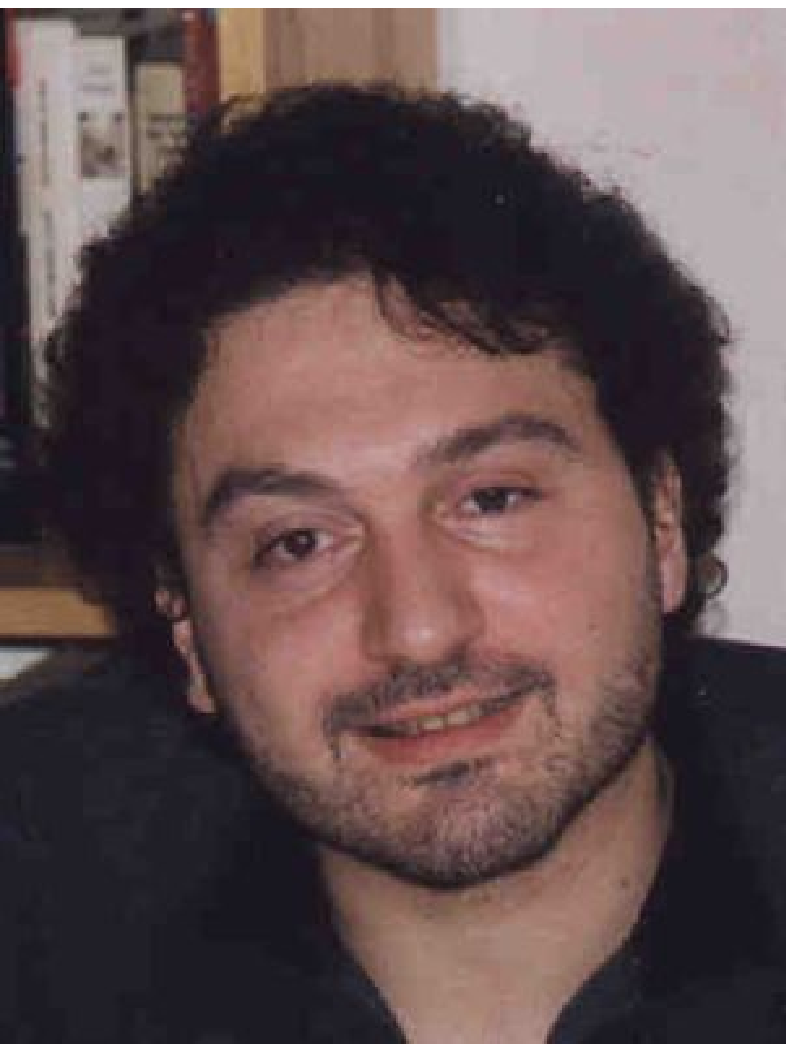}}
{\bf Laurent Najman}
received the Habilitation à Diriger les Recherches in 2006 from
University the University of Marne-la-Vallée, the Ph.D. degree in applied
mathematics from Universit\'e Paris-Dauphine in 1994 with the highest
honor (Félicitations du Jury) and the engineering degree from the
Ecole Nationale Sup\'erieure des Mines de Paris in 1991. He worked in the central research 
laboratories of Thomson-CSF for three years after his engineering degree, working on infrared image
segmentation problems using mathematical morphology. He then joined a start-up
company named Animation Science in 1995, as director of research and
development. The particle systems technology for computer graphics
and scientific visualization developed by the company under his
technical leadership received several awards, including the ``European
Information Technology Prize 1997'' awarded by the European Commission
(Esprit program\-me) and by the European Council for Applied Science and
Engineering as well as the ``Hottest Products of the Year 1996'' awarded by the
Computer Graphics World journal. In 1998, he joined OC\'E Print Logic
Technologies, as senior scientist. There he worked  on various  image analysis problems 
related to scanning and printing. In 2002, he
joined the Informatics Department of ESIEE, Paris, where he is
professor and a member of the Gaspard-Monge computer science research laboratory (LIGM), Universit\'e
Paris-Est. His current research interest is discrete mathematical morphology.


\end{document}